
\documentclass{sig-alternate}
\usepackage{hyperref,breakurl}
\usepackage{blindtext}
\usepackage{algorithm} 
\usepackage{algorithmic} 
\usepackage{color}
\usepackage{url}
\usepackage{graphicx}
\usepackage{subfigure}
\allowdisplaybreaks

\newtheorem{theorem}{Theorem}
\newtheorem{lemma}{Lemma}
\newtheorem{proposition}{Proposition}

\newcommand{\ud}{\mathrm{d}}
\newcommand{\calP}{\mathcal{P}}
\newcommand{\calQ}{\mathcal{Q}}

\begin{document}
%
\conferenceinfo{SIGMETRICS}{2014 Austin, Texas USA}

\title{The Multi-shop Ski Rental Problem}
%
%
%
%
%

\numberofauthors{6} 
%

\newcounter{savecntr}
\newcounter{restorecntr}

\author{
%
%
\alignauthor
Lingqing Ai \\
       \affaddr{IIIS, Tsinghua University}\\
       \email{ailingqing@126.com}
\alignauthor
Xian Wu \\
       \affaddr{IIIS, Tsinghua University}\\
       \email{wuxian12@\\
       mails.tsinghua.edu.cn}
\alignauthor Lingxiao Huang \\
       \affaddr{IIIS, Tsinghua University}\\
       \email{huanglingxiao1990@126.com}
\and  
\alignauthor Longbo Huang\\
       \affaddr{IIIS, Tsinghua University}\\
       \email{longbohuang@ \\
       tsinghua.edu.cn }
\alignauthor Pingzhong Tang\\
       \affaddr{IIIS, Tsinghua University}\\
       \email{kenshin@tsinghua.edu.cn}
\alignauthor Jian Li\\
       \affaddr{IIIS, Tsinghua University}\\
       \email{lijian83@tsinghua.edu.cn}
}

\maketitle
\begin{abstract}
We consider the {\em multi-shop ski rental} problem. This problem generalizes the classic ski rental problem to a multi-shop setting, in which each shop has different prices for renting and purchasing a pair of skis, and a \emph{consumer} has to make decisions on when and where to buy. We are interested in the {\em optimal online (competitive-ratio minimizing) mixed strategy} from the consumer's perspective. For our problem in its basic form, we obtain exciting closed-form solutions and a linear time algorithm for computing them. We further demonstrate the generality of our approach by investigating three extensions of our basic problem, namely ones that consider costs incurred by entering a shop or switching to another shop.  Our solutions to these problems suggest that the consumer must assign positive probability in \emph{exactly one} shop at any buying time. Our results apply to many real-world applications, ranging from cost management in \texttt{IaaS} cloud to scheduling in distributed computing.
\end{abstract}

\category{G.3}{Probability and Statistics}{Distribution functions}
\category{F.2}{Analysis of Algorithms and Problem of Complexity}{Miscellaneous}
\terms{Algorithms, Performance, Theory}

\keywords{multi-shop ski rental, ski rental, optimal strategy, Nash equilibrium, online algorithm}

\section{INTRODUCTION}

The ski rental problem ($\mathtt{SR}$) is a dilemma faced by a consumer, who is uncertain about how many days she will ski and has to  trade off between buying and renting skis: once she buys the skis, she will enjoy the remaining days rent-free, but before that she must pay the daily renting cost. The literature is interested in investigating the {\em online optimal strategy} of the consumer. That is, a strategy that yields the lowest competitive ratio \emph{without having any information of the future} (as is standard in the literature, competitive ratio is defined as the the ratio between the cost yielded by the consumer's strategy and the cost yielded by the optimal strategy of a prophet, who foresees how many days the trip  will last and design the optimal strategy accordingly). The ski rental problem and its variants constitute an important part of the online algorithm design literature from both theoretical and applied perspectives~\cite{fleischer2001bahncard,karlin2001dynamic,lin2011dynamic,lotker2008rent,lu2012simple,wang2013reserve}.

In this paper, we consider the  \emph{multi-shop ski rental problem}($\texttt{MSR}$), in which the consumer faces multiple shops that offer different renting and buying prices. She must choose one shop immediately after she arrives at the ski field and must rent or buy the skis \emph{in that particular shop} since then.  In other words, once she has chosen a shop, the only decision variable is when to buy the skis. Beyond the basic setting,  we also propose three important extensions of \texttt{MSR} as below:
\begin{itemize}
\item $\texttt{MSR}$ \emph{with switching cost} ($\texttt{MSR-S}$): The consumer is allowed to switch from one shop to another and each switching costs her some constant amount of money.
\item $\texttt{MSR}$ \emph{with entry fee} ($\texttt{MSR-E}$): Each shop requires some entry fee and the consumer \emph{cannot}  switch shops.
\item $\texttt{MSR}$ \emph{with entry fee and switching} ($\texttt{MSR-ES}$): The consumer is able to switch from one shop to another, and she pays the entry fee as long as she enters any shop\footnote{For example, if she switches from shop 1 to shop 2, and then switches back to shop 1, she pays the entry fee of shop 1 twice and the entry fee of shop 2 once.}.
\end{itemize}

In all the settings above, the consumer's objective is to minimize the competitive ratio. In \texttt{MSR} and \texttt{MSR-E}, she has to consider two questions \emph{at the very beginning}: (1) where should she rent or buy the skis (place), and (2) when should she buy the skis (timing)? While \texttt{MSR-S} and \texttt{MSR-ES} allow the consumer to switch shops and are thus more fine-grained than the previous two, in the sense that she is able to decide where to rent or buy the skis \emph{at any time}. For example, it is among her options to rent in shop 1 on day 1, switch to shop 2 from day 2, and finally switch to shop 3  and then buys the skis.

The multi-shop ski rental problem naturally extends the ski rental problem and allows  heterogeneity in consumer's options, a desirable feature that makes the ski rental problem a more general modeling framework for online algorithm design.
Below, we present a few  real world scenarios that can be modeled with the  multi-shop ski rental problem.

\textbf{1. Scheduling in distributed computing:} A file is replicated and stored in different machines in the cluster. Some node undertaking some computing job needs data in the file during the execution. The node can either request the corresponding data block of the file from some selected machine whenever it needs to, which incurs some delay, or it can simply ask that machine to transmit the whole file beforehand, at the sacrifice of a longer delay at the beginning without any further waiting.  When selecting the replicating machine, the scheduling node needs to consider the current bandwidth, read latency, etc.  In this application, each replicating machine is considered as a shop, and renting corresponds to requesting for the data block on-demand while buying means to fetch the whole file beforehand.

\textbf{2. Cost management in IaaS cloud:} Multiple \texttt{IaaS} cloud vendors, such as Amazon EC2~\cite{EC2}, ElasticHosts~\cite{ElasticHosts} and Microsoft Windows Azure~\cite{Azure}, offer different price options, which can be classified into two commitment levels: users pay for \emph{on-demand} server instances at an hourly rate or make a one-time, upfront payment to host each instances for some duration (e.g., monthly or annually), during which users either use the instances for free~\cite{ElasticHosts}, or enjoy a discount on renting the instance~\cite{EC2}.

Consider an example in Table 1. Table 1 lists the pricing options for the instances with identical configurations offered by Amazon EC2 and ElasticHosts. Each pricing option can be considered as a shop in the multi-shop ski rental problem, where in the 1(3) year(s) term contract in Amazon EC2, the entry fee is the upfront payment and the hourly price is the renting price.

\begin{table}[hbt]
\begin{tabular}{|c|c|c|c|}
  \hline
  Vendor& Option & Upfront(\$) & Hourly(\$)\\ \hline
  & On-Demand & 0& 0.145\\ \cline{2-4}
  Amazon & 1 yr Term & 161 & 0.09 \\ \cline{2-4}
   & 3 yr Term & 243 & 0.079 \\ \hline
 ElasticHosts & 1 mo Term &  97.60  & 0 \\ \cline{2-4}
    & 1 yr Term & 976.04 & 0 \\ \hline
\end{tabular}
\caption{Pricing Options of the `same' instance in Amazon EC2 (c1.medium) and ElasticHosts (2400MHz cpu, 1792MHz memory, 350Gb storage and no data transfer).}
\end{table}

\textbf{3. Purchase decisions}: A company offering high-resolution aerial or satellite map service chooses between Astrium~\cite{Astrium} and DigitalGlobe~\cite{DigitalGlobe}. It can either subscribe imagery from one company or exclusively occupy the service by `purchasing' one satellite like what Google has done~\cite{News}. Similar applications include some person purchasing a SIM card from different telecommunication companies.

\subsection{Related Work}

The ski rental problem is first considered by Karlin \emph{et al.} \cite{karlin1988competitive}, and then studied by Karlin's seminal paper~\cite{karlin1994competitive} which proposes a randomized algorithm and gives a $\frac{e}{e-1}$ competitive ratio. Later researchers propose a few variants, including the Bahncard problem \cite{fleischer2001bahncard} and the TCP acknowledgment problem~\cite{karlin2001dynamic}. A more recent work~\cite{khanafer2013constrained} analyzes the case in which the first or the second moment of the skiing days are known and gives an optimal online solution. However, all the aforementioned works deal with the case where a single consumer rents or buys the skis in one single shop. In their problems,  the consumer only needs to decide when to buy.  While in the multi-shop ski rental problem, the consumer has to make a two-fold decision (time and place).  Closest to our work is the work by Lotker \emph{et al.}~\cite{lotker2008rent}, which considers the case where  the consumer has multiple options in one shop, i.e., the multi-slop problem  and their problem can be regarded as a special case of our problem by setting all the buying prices sufficiently large.

Research on ``multiple consumers in one single shop'' have been conducted from applied perspectives~\cite{lin2011dynamic,lu2012simple,wang2013reserve}.  Lin \emph{et al.}~\cite{lin2011dynamic} investigate a  dynamical `right-sizing' strategy by turning off servers during periods of low load. Their model and lazy capacity provisioning algorithms closely tie to the ski rental problem.  Lu \emph{et al.}~\cite{lu2012simple} derive the ``dynamic provisioning techniques'' to turn on or off the servers to minimize the energy consumption. They dispatch the servers so that each individual server is reduced to a standard ski rental problem. Wei \emph{et al.}~\cite{wang2013reserve} propose an online algorithm to serve the time-varying demands at the minimum cost in \texttt{IaaS} cloud given one price option, in which the `consumers' (servers) may be related to each other.

Another line of work \cite{bodenstein2011strategic,hong2011dynamic} focusing on minimizing the cost in data centers or other cloud services  assumes that the long-term workloads are stationary and thus can be predicted, and Guenter \emph{et al.}~\cite{guenter2011managing}   consider the cases of short predictions.
However,  for many real-world applications,  future workloads can exhibit non-stationarity \cite{singh2010autonomic}. Other researchers~\cite{lin2011dynamic,lu2012simple,wang2013data,wang2013reserve} that require \emph{no priori} knowledge of the future workload minimize the cost \emph{given one option is selected}. Our paper is orthogonal to theirs since we  focus on how to select a price option.

\subsection{Our Contributions}

In this paper, we consider the multi-shop ski-rental problem and its extensions, in which there are multiple shops and the consumer must make two-fold decisions (time and place) to minimize the competitive ratio. We model each problem using a zero-sum game played by the consumer and nature. We simplify the strategy space of the consumer via \emph{ removal of strictly dominated strategies} and derive the form of the optimal mixed strategy of the consumer. We summarize the key contributions as follows:
\begin{enumerate}
\item[1.] For each of the problems, we prove that under the optimal mixed strategy of the consumer, the consumer only assigns positive buying probability to \emph{exactly one} shop at any time. As the buying time increases, she follows the shop order in which the ratio between buying price and renting price is increasing. This order also holds in \texttt{MSR-E} and \texttt{MSR-ES}, where entry fee is involved.

\item[2.] We derive a novel, easy-to-implement \emph{linear time} algorithm for computing the optimal strategy of the consumer, which drastically reduces the complexity of computing the solution to \texttt{MSR}.

\item[3.] For \texttt{MSR-S}, we prove that under the optimal mixed strategy, the consumer only needs  to consider switching to another shop at the buying time, i.e., she will never switch to another shop and continue renting. Moreover, we show that \texttt{MSR-S} can be reduced to an equivalent \texttt{MSR} problem with modified buying prices.

\item[4.] For \texttt{MSR-ES}, we prove that under the optimal mixed strategy, the consumer may switch to another shop either during the renting period or at the buying time, but she only follows some particular order of switching. Moreover, the  number of times of switching is no more $n$ where $n$ is the number of shops.

\item[5.] We characterize any action of the consumer in \texttt{MSR-ES} by proving that the action can be decoupled into a sequence of  operations. We further show that each operation can be viewed as a virtual shop in \texttt{MSR-E} and in total, we create $O(n^2)$ `virtual' shops of \texttt{MSR-E}. Therefore, \texttt{MSR-ES} can be reduced to \texttt{MSR-E} with minor modifications.

\end{enumerate}

\section{BASIC PROBLEM}

In the \emph{multi-shop ski rental problem} (also \texttt{MSR}), a person goes skiing for an anbiguous time period. There are multiple shops providing skies either for rental or for buying. The person must choose one shop \emph{as soon as} she arrives at the ski field, and she can decide whether or not to buy the skis in that particular shop at any time\footnote{ In this paper, we focus on the continuous time model.}. Note that she cannot change the shop once she chooses one. The objective is to minimize the worst-case ratio between the amount she actually pays and the money she would have paid if she knew the duration of skiing in advance. We assume that there are $n$ shops in total, denoted by $[n]\triangleq\{1,2,3,\cdots,n\}$. Each shop $j$ offers skis at a renting price of $r_j$ dollars per unit time and at a buying price of $b_j$ dollars. This problem is a natural extension of the classic ski rental problem (\texttt{SR}) and  it is exactly \texttt{SR}  when $n=1$.

In \texttt{MSR}, it is clear that if there is a shop of which the rental and buying prices are both larger than those of another shop, it is always suboptimal to choose this shop. We assume that
\begin{align}
0<&~r_1<r_2<\cdots<r_n \nonumber\\
&~b_1>b_2>\cdots>b_n>0 \nonumber
\end{align}

We apply a game-theoretic approach for solving our problem. For the case of expressing the formulation, we assume that how long the consumer skis is determined by a player called \emph{nature}. Therefore, there are two parties in the problem, and we focus on the optimal strategy of the consumer.

In the remainder of this section, we first formulate our problem as a zero-sum game, and simplify the strategy space in Lemma~\ref{lemma:MSRstspace}. Then, we combine Lemma \ref{lemma:MSRconstant}-\ref{lemma:AlphaRelation}, and fully characterize the optimal strategy of the consumer in Theorem~\ref{theorem:MSR}. We show that optimally the consumer assigns positive buying probability to \emph{exactly one} shop at any time. Moreover, the possible times the consumer buys the skis in a shop constitute a continuous interval.
Thus, we can partition the optimal strategy of the consumer into different sub-intervals which relate to different shops, and the problem is reduced to how to find the optimal breakpoints. Based on Lemma~\ref{lemma:dnconcave} and~\ref{lemma:GConcave}, we develop a linear time algorithm for computing the optimal breakpoints and prove its correctness in Theorem~\ref{theorem:MSRalg}.

\subsection{Formulation}

We first analyze the action set for both players in the game. For the consumer, we denote by $j$ the index of the shop in which she rents or buys the skis. Let $x$ be the time when she chooses to buy the skis, i.e., the consumer will rent the skis before $x$ and buy at $x$ if nature has not yet stopped her. The action of the consumer is thus represented by a pair $(j,x)$. Denote by $\Psi_c$ the action set of the consumer:
\begin{displaymath}
\Psi_c \triangleq \{(j,x): j \in [n], x\in [0,+\infty)\cup\{+\infty\} \}
\end{displaymath}
where $x =+\infty$ means that the consumer always rents and never buys. Next, let $y$ denote the time when nature stops the consumer from skiing. Thus, the action set of nature is
\begin{displaymath}
\Psi_n \triangleq \{y: y \in (0,+\infty)\cup\{+\infty\} \}
\end{displaymath}
where $y =+\infty$ means that nature always allows the consumer to keep skiing.
If $y=x$, we regard it as the case that right after the consumer buys the skis, nature stops her.
Given the strategy profile $\langle(j,x),y\rangle$, let $c_j(x,y) \geq 0$  denote the cost paid by the consumer:
\begin{displaymath}
c_j(x,y) \triangleq
\begin{cases}
r_j y, & y < x\\
r_j x + b_j, & y \geq x
\end{cases}
\end{displaymath}

Now we define the strategy space for the consumer and nature. Let $\mathbf{p} \triangleq (p_1,\cdots,p_n)$ be a mixed strategy represented by a vector of probability density functions. $p_j(x)$ is the density assigned to the strategy $(j,x)$ for any $j = 1,\cdots, n$ and $x\in [0,+\infty)\cup\{+\infty\}$. In this paper, we assume that for each point, either probability density function exists or it is probability mass.\footnote{In fact, our results can be extended to the case where in the strategy space the cumulative distribution function is not absolutely continuous and thus no probability density function exists.} If $p_j(x)$ is probability mass, we regard $p_j(x)$ as $+\infty$ and define $p_{j,x}\triangleq\int_{x^-}^{x} p_j(t)dt$ satisfying $p_{j,x}\in(0,1]$. The strategy space $\calP$ of the consumer is as follows:
\footnote{For convenience, we denote by $\int_{a}^{b}f(x)dx$ ($a<b$) the integral over $(a,b]$, except that when $a=0$, the integral is over $[0,b]$.}
\begin{eqnarray*}
\calP = \Bigg\{\mathbf{p}: && \sum_{j=1}^n \int_{0}^\infty p_j(x) \ud x = 1,\\
 && p_j(x) \ge 0, \forall x\in [0,+\infty)\cup\{+\infty\}, \forall j\in [n] \Bigg\}
\end{eqnarray*}

Similarly, define $q(y)$ to be the probability density of nature choosing $y$ and the strategy space $\calQ$ of nature is given by

\begin{displaymath}
\calQ =\Bigg\{\mathbf{q}: \int_0^\infty q(y) \ud y = 1, q(y) \geq 0, \forall y \in (0,+\infty)\cup\{+\infty\} \Bigg\}
\end{displaymath}

When the consumer chooses the mixed strategy $\mathbf{p}$ and  nature chooses the stopping time $y$, the expected cost to the consumer is:
\begin{displaymath}
C(\mathbf{p},y)\triangleq\sum_{j=1}^n C_j(p_j,y)
\end{displaymath}
in which
\begin{eqnarray*}
C_j(p_j,y)&\triangleq&\int_{0}^{\infty}c_j(x,y)p_j(x)\ud x\\
&=&\int_{0}^{y}(r_jx+b_j)p_j(x)\ud x+\int_{y}^{\infty}yr_j p_j(x)\ud x
\end{eqnarray*}
is the expected payment to shop $j$ for all $j\in [n]$. Given the strategy profile $\langle\mathbf{p}, \mathbf{q}\rangle$, the competitive ratio is defined as:
\begin{eqnarray} \label{def:Ratio}
R(\mathbf{p}, \mathbf{q}) &\triangleq& \int_0^\infty \frac{C(\mathbf{p}, y)}{\mathrm{OPT}(y)} q(y) \ud y
\end{eqnarray}

Here $\mathrm{OPT}(y)$ is the optimal offline cost and can be seen to have the following form:
\begin{equation} \label{offlineOpt: MSR}
\mathrm{OPT}(y) =
\begin{cases}
r_1 y, & y \in (0,B]\\
b_n, & y > B
\end{cases}
\end{equation}
where $B$ is defined as $B\triangleq\frac{b_n}{r_1}$.

Note that $B$ is the dividing line between the minimum buying cost and the minimum renting cost. When $y<B$, the offline optimal is always to rent at the first shop, and when $y>B$, the offline optimal is to buy the skis at the last shop.  We will show that $B$ determines the effective action sets of the consumer and nature in section~\ref{sec:simplify} .

The objective of the consumer is to minimize the worst-case competitive ratio, i.e., to choose a strategy $\mathbf{p} \in \calP$ that solves the problem
\begin{eqnarray*}
&\mathrm{minimize}& \max_{y>0}  \left\{\frac{C(\mathbf{p},y)}{\mathrm{OPT}(y)} \right\} \\
&\textrm{subject to}& \mathbf{p} \in \calP
\end{eqnarray*}
which is equivalent to the following:
\begin{eqnarray}
&\mathrm{minimize}& \lambda \label{problem:MSR1}\\
&\textrm{subject to}& \frac{C(\mathbf{p}, y)}{r_1 y} \leq \lambda \nonumber\\
&& \sum_{j=1}^n \int_{0}^\infty p_j(x) \ud x = 1 \nonumber\\
&& p_j(x) \geq 0 \ \ \forall x\in [0,+\infty)\cup\{+\infty\}\nonumber\\
&& \forall y\in (0,+\infty)\cup\{+\infty\},  \forall j \in [n] \nonumber
\end{eqnarray}

\subsubsection{Simplifying the Zero-sum Game}\label{sec:simplify}

In this section, we show that the game can be greatly simplified and the action set for both the consumer and nature can be reduced. Specifically, nature prefers the strategy $y = +\infty$ to any other strategy $y' > B$. For the consumer, for any $j \in [n]$, she prefers the strategy $(j,B)$ to any other strategy $(j,x')$ where $x' > B$.

\begin{lemma}
\label{lemma:MSRstspace}
For nature, any strategy $y \in [B,+\infty)$ is dominated. While for the consumer, any strategy $(j,x)$ is dominated, in which $x \in (B,+\infty)\cup\{+\infty\}, \forall j \in [n]$.
\end{lemma}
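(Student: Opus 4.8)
The plan is to prove both domination claims by \emph{iterated} elimination of dominated strategies, treating nature first and the consumer second. This ordering is essential, not cosmetic: the consumer's claim is actually \emph{false} against nature's full action set, and becomes valid only after nature's bad responses have been removed. The single fact that drives everything is a monotonicity observation. For fixed shop $j$ and buying time $x$, the pointwise cost $c_j(x,y)$ is non-decreasing in $y$: it equals $r_j y$ (increasing) while $y<x$, jumps up by $b_j$ at $y=x$, and stays at the constant $r_j x + b_j$ for $y\ge x$. Integrating against any $p_j$ preserves this, so $C(\mathbf{p},y)=\sum_j C_j(p_j,y)$ is non-decreasing in $y$ for \emph{every} $\mathbf{p}\in\calP$, including its limiting value at $y=+\infty$ (possibly $+\infty$ when $\mathbf{p}$ places mass on never-buy). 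The uniformity over $\mathbf{p}$ is exactly what domination requires, so I would record this as the first step.

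For nature, the given form of $\mathrm{OPT}$ shows $\mathrm{OPT}(y)=b_n$ is constant on $[B,+\infty)$ (note $\mathrm{OPT}(B)=r_1 B = b_n$, so the two branches agree at $B$). Hence on this range $C(\mathbf{p},y)/\mathrm{OPT}(y)=C(\mathbf{p},y)/b_n$ inherits the monotonicity of $C(\mathbf{p},\cdot)$, and for every $\mathbf{p}$ and every $y_0\in[B,+\infty)$ the ratio at $y=+\infty$ is at least the ratio at $y_0$. Thus $y=+\infty$ weakly dominates every $y_0\in[B,+\infty)$, and all such finite strategies can be discarded; weak domination is enough here, since it preserves the minimax value.

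For the consumer I would first flag the subtlety explicitly: $(j,B)$ does \emph{not} dominate $(j,x')$ against all $y$, because at $y=B$ the action $(j,x')$ still rents (cost $r_j B$) while $(j,B)$ already buys (cost $r_j B + b_j$), so $(j,x')$ is strictly cheaper there. The resolution is to argue in the reduced game where nature's surviving responses are $y\in(0,B)\cup\{+\infty\}$. On $(0,B)$ both $(j,B)$ and $(j,x')$ merely rent, incurring the identical cost $r_j y$, so the ratios coincide; at $y=+\infty$ both buy (or $(j,+\infty)$ rents forever), and since $x'>B$ the cost $r_j x' + b_j$ (or $+\infty$) strictly exceeds $r_j B + b_j$. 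Therefore, for every surviving $y$, the ratio under $(j,B)$ is at most that under $(j,x')$, with strict inequality at $y=+\infty$; equivalently, relocating any probability mass from $(j,x')$ to $(j,B)$ never increases $\max_y C(\mathbf{p},y)/\mathrm{OPT}(y)$. This removes every consumer action with $x\in(B,+\infty)\cup\{+\infty\}$.

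The main obstacle is exactly this ordering dependence, so the two parts cannot be established in isolation and must be chained. Two consistency checks are needed to make the chaining rigorous: that deleting nature's finite strategies $y\ge B$ does not discard the consumer's worst case (it does not, since monotonicity gives $\sup_{y\ge B} C(\mathbf{p},y)/b_n$ as the limit $y\to+\infty$, which is retained), and that the point $y=B$ itself is genuinely among nature's eliminated strategies (it is, as $B\in[B,+\infty)$), so that the troublesome $y=B$ comparison never resurfaces when eliminating the consumer's strategies.
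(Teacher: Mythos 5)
Your proof is correct and follows essentially the same route as the paper's: the same monotonicity of $c_j(x,y)$ in $y$ combined with the constancy of $\mathrm{OPT}$ on $[B,+\infty)$ handles nature, and the same comparison $c_j(B,y)\le c_j(x',y)$ over the reduced set $y\in(0,B)\cup\{+\infty\}$ handles the consumer. The only difference is that you make explicit the iterated-elimination ordering (and the failure of the consumer's domination at $y=B$ in the full game), which the paper uses implicitly by quantifying its consumer-side inequality only over $(0,B)\cup\{+\infty\}$ without comment.
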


\begin{proof}
Recall the cost $c_j(x,y)$ is defined as follows:
\begin{displaymath}
c_j(x,y) =
\begin{cases}
r_j y, & y < x\\
r_j x + b_j, & y \geq x
\end{cases}
\end{displaymath}
Thus for any fixed $(j,x)$, $c_j(x,y)$ is a non-decreasing function of $y$. Further, from (\ref{offlineOpt: MSR}), we can see that the offline optimal cost is unchanged when $y\geq B$. Thus, for any $y \geq B$ it holds that
\begin{displaymath}
\frac{c_j(x,y)}{b_n} \leq \lim_{y\rightarrow +\infty}\frac{c_j(x,y)}{b_n} = \frac{r_j x + b_j}{b_n}
\end{displaymath}
Therefore, any strategy of nature that includes $y \ge B$ is dominated by the strategy of never stopping the consumer.

Now for the consumer, for any shop $j \in \{1,\cdots, n\}$, and  any $x' \in (B,+\infty)\cup\{+\infty\}$, it holds that
\begin{displaymath}
c_j(B,y) - c_j(x',y) \leq 0, \quad \forall y \in (0,B)\cup \{ +\infty\}
\end{displaymath}
Therefore, any strategy of the consumer that includes buying at time $x'$ in any shop is dominated by the strategy of buying at $B$ in the same shop.
\end{proof}

From this lemma, the consumer's buying time is restricted in $[0,B]$. Note that for any $(j,x)$ in which $x \in [0,B]$, it holds that
\begin{displaymath}
\frac{c_j(x,B)}{\mathrm{OPT}(B)} = \frac{c_j(x,+\infty)}{\mathrm{OPT}(+\infty)}
\end{displaymath}
Therefore, the action set of nature $\Psi_n$ can be reduced to $\Psi_n = \{y \in (0,B]\}$.

Similarly, in the strategy space of the consumer $\calP$, nature $\calQ$, the expected cost $C(\mathtt{p},y)$ and the competitive ratio $R(\mathbf{p,q})$, we can replace $+\infty$ by $B$.

\emph{Comments on $B$}: recall that the boundary $B$ is defined as $\frac{\min\{b_i\}}{\min\{r_i\}} = \frac{b_n}{r_1}$ in \texttt{MSR}, while this value is $\frac{b_j}{r_j}$ if shop $j$ is the only shop in \texttt{SR}. For instance, if only shop $n$ appears in \texttt{SR}, then the consumer will never consider to buy at any time $x > \frac{b_n}{r_n}$. However, in \texttt{MSR}, the consumer may want to put some positive possibility to the strategy of buying at time $x>\frac{b_n}{r_n}$ in shop $n$ (since $r_1 < r_n$). The difference between these two cases is due to the fact that in \texttt{MSR}, the consumer has the global information of all the shops and the offline optimal is always to rent at shop 1 at the cost of $r_1$ per unit time until the total cost reaches the minimum buying price $b_n$, whereas in \texttt{SR}, the consumer always rents at the cost of $r_n\geq r_1$ per unit time until $b_n$.

With the above results, problem (\ref{problem:MSR1}) can now be reduced to the following:
\begin{align}
\mathrm{minimize}~& ~~~~~~~~~~\lambda \label{problem:MSR2}\\
\textrm{subject to}& ~~~~\frac{C(\mathbf{p}, y)}{r_1 y} \leq \lambda \tag{\ref{problem:MSR2}a}\\
& ~~~~\sum_{j=1}^n \int_{0}^B p_j(x) \ud x = 1 \tag{\ref{problem:MSR2}b}\\
& ~~~~p_j(x) \geq 0 \tag{\ref{problem:MSR2}c}\\
& ~~~~\forall x \in [0,B], \forall y \in (0,B], \forall j \in [n] \tag{\ref{problem:MSR2}d}
\end{align}

We will show that the optimal strategy of the consumer results in exact equality in (\ref{problem:MSR2}a) in the next subsection.

\subsection{Optimal Strategy of the Consumer}

In this subsection, we look into the optimal solution $\mathbf p^*$ for (\ref{problem:MSR2}). In short, $\mathbf p^*$ yields the same expected utility for nature whenever nature chooses to stop. In other words, given $\mathbf p^*$, any pure strategy of nature yields the same utility for both the consumer and nature. Moreover, at any time $x$, the consumer assigns positive buying probability to exactly one of the shops, say shop $j$, and $j$ is decreasing as $x$ increases. Finally, we can see that for any shop $j$ and the time interval in which the consumer chooses to buy at shop $j$, the density function $p_j(x)$ is $\alpha_j e^{r_j/b_j x}$ where $\alpha_j$ is some constant to be specified later.

We now state our first theorem that summarizes the structure of the optimal strategy.

\begin{theorem}
\label{theorem:MSR}
The optimal solution $\mathbf p^*$ satisfies the following properties:
\begin{itemize}
\item[(a)] There exists a constant $\lambda$, such that $\forall y \in (0,B]$,
\begin{displaymath}
\frac{C(\mathbf{p}^*,y)}{r_1 y} = \lambda
\end{displaymath}
\item[(b)] There exist $n+1$ breakpoints: $d_1,d_2,\cdots,d_{n+1}$, such that $B=d_1\ge d_2\ge\cdots\ge d_n\ge d_{n+1}=0$, and $\forall j\in [n]$, we have
\begin{equation*}
p_j^*(x)=
\begin{cases}
\alpha_j e^{r_jx/b_j},&x \in  (d_{j+1},d_j)\\
0,& otherwise
\end{cases}
\end{equation*}
in which $\alpha_j$ satisfies that
\begin{displaymath}
\alpha_j b_j e^{r_jd_{j}/b_j} = \alpha_{j-1} b_{j-1} e^{r_{j-1}d_{j}/b_{j-1}} \quad \forall j = 2,\cdots, n
\end{displaymath}
\end{itemize}
\end{theorem}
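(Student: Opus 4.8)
\noindent\emph{Proof plan.} The plan is to establish the two parts in the order (a) then (b), pulling the exponential form and the $\alpha_j$-recurrence out of two successive differentiations of the equalization identity, and isolating the ``exactly one shop'' structure as a separate argument.

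First I would prove (a). Since (\ref{problem:MSR2}) is a semi-infinite linear program, I would invoke minimax/LP duality for the associated zero-sum game, so that the value $\lambda^{*}=\min_{\mathbf{p}\in\calP}\max_{y}C(\mathbf{p},y)/(r_1y)$ equals $\max_{\mathbf{q}\in\calQ}\min_{\mathbf{p}}R(\mathbf{p},\mathbf{q})$. Fixing an optimal $\mathbf{q}^{*}$ and the optimizer $\mathbf{p}^{*}$, feasibility in (\ref{problem:MSR2}a) gives $C(\mathbf{p}^{*},y)\le\lambda^{*}r_1y$ for every $y\in(0,B]$, while optimality gives $\int_0^B \frac{C(\mathbf{p}^{*},y)}{r_1y}\,q^{*}(y)\,\ud y=\lambda^{*}$. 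As the integrand never exceeds $\lambda^{*}$ and $q^{*}$ is a probability density, the ratio must equal $\lambda^{*}$ for $q^{*}$-almost every $y$; to upgrade this to \emph{all} $y\in(0,B]$ I would show that nature's optimal strategy has full support, equivalently that the consumer's best response to $\mathbf{q}^{*}$ spreads buying times across the whole interval---which is exactly what part (b) exhibits. This yields (a) with $\lambda=\lambda^{*}$.

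Granting the constant-ratio identity $C(\mathbf{p}^{*},y)=\lambda r_1 y$, I would differentiate in $y$. Using $\frac{\ud}{\ud y}C_j(p_j,y)=b_j p_j(y)+r_j\int_y^B p_j(x)\,\ud x$ (from the definition of $C_j$, after the $\pm\,r_j y\,p_j(y)$ terms cancel), the identity becomes the first-order condition $\sum_{j=1}^n\bigl(b_j p_j^{*}(y)+r_j\int_y^B p_j^{*}(x)\,\ud x\bigr)=\lambda r_1$ for all $y$. This single scalar equation does not by itself separate the individual $p_j^{*}$, so the crux---and the step I expect to be the main obstacle---is the structural claim that at every time exactly one shop carries positive density, with the active index nonincreasing as the buying time increases. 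I would obtain this by a dominance-plus-exchange argument: for $j<k$ one checks $c_k(x,\cdot)\ge c_j(x,\cdot)$ pointwise once $x>\frac{b_j-b_k}{r_k-r_j}$, so buying in the higher-rent shop $k$ at a late time is dominated by buying in shop $j$; this already forces each shop's support to be an interval bounded above, with shop $1$ occupying the times nearest $B$ and shop $n$ the times nearest $0$. Disjointness of the supports I would extract from complementary slackness with $\mathbf{q}^{*}$: writing $g_j(x)=\frac{1}{r_1}\int_0^B c_j(x,y)q^{*}(y)\,\ud y$, the support of $p_j^{*}$ lies in $\{g_j=\lambda\}$, and solving the resulting relation for $Q(x)=\int_x^B q^{*}(y)/y\,\ud y$ shows $g_j=\lambda$ can hold only on a single interval per $j$, these intervals tiling $(0,B]$ and producing the breakpoints $B=d_1\ge\cdots\ge d_{n+1}=0$.

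Finally, on $(d_{j+1},d_j)$, where only shop $j$ is active, the tail contributions of the lower-index shops are constant, so differentiating the first-order condition a second time leaves $b_j (p_j^{*})'(y)-r_j p_j^{*}(y)=0$, whence $p_j^{*}(x)=\alpha_j e^{r_j x/b_j}$ there and $0$ elsewhere. For the matching relation I would use that $\frac{\ud}{\ud y}C(\mathbf{p}^{*},y)\equiv\lambda r_1$ has no jump at $d_j$: evaluating the first-order sum as $y\to d_j^{-}$ (shop $j$ active, its own tail vanishing) and as $y\to d_j^{+}$ (shop $j-1$ active, contributing its full mass $r_{j-1}\int_{d_j}^{d_{j-1}}p_{j-1}^{*}$), the lower-index tails coincide and cancel, leaving $b_j p_j^{*}(d_j)=b_{j-1}p_{j-1}^{*}(d_j)$, i.e.\ $\alpha_j b_j e^{r_j d_j/b_j}=\alpha_{j-1}b_{j-1}e^{r_{j-1}d_j/b_{j-1}}$, the asserted recurrence. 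Existence of the optimizer and feasibility (the constructed density integrating to one) I would settle separately, by compactness together with a direct check that no pure response of nature beats $\lambda$.
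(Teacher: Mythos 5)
Your outline for part (b) — differentiate the equal-ratio identity once to get $\sum_j\bigl(b_j p_j^{*}(y)+r_j\int_y^B p_j^{*}\bigr)=\lambda r_1$, isolate the ``one active shop'' structure, differentiate again on each interval to get the exponential, and match the one-sided limits at $d_j$ to get the $\alpha$-recurrence — is essentially the paper's route (Lemmas~\ref{lemma:MSRfinite}--\ref{lemma:AlphaRelation}). But your proof of part (a) has a genuine circularity. You establish the equal-ratio property only for $q^{*}$-almost every $y$, and to upgrade it to \emph{all} $y\in(0,B]$ you need nature's optimal strategy to have full support, which you propose to deduce from ``the consumer's best response spreads buying times across the whole interval --- which is exactly what part (b) exhibits.'' Part (b), however, is derived in your plan only \emph{after} ``granting the constant-ratio identity,'' i.e.\ after (a). So (a) needs (b) and (b) needs (a), and neither is grounded. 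The same circularity infects your disjointness step in (b): the complementary-slackness argument with $g_j$ and $Q(x)$ presupposes the existence of an equilibrium pair $(\mathbf{p}^{*},\mathbf{q}^{*})$, which is exactly what is in question. The paper breaks this circle by never invoking nature's optimal strategy at all in Theorem~\ref{theorem:MSR}: Lemma~\ref{lemma:MSRconstant} is proved by a direct perturbation argument (Propositions~\ref{proposition:con1}--\ref{proposition:con3} plus the three-case construction of a strictly better $\mathbf{p}^1$), and Lemma~\ref{lemma:moveP MSR} by a self-contained probability-exchange argument comparing $C(\mathbf{p}^{*},y)$ and $C(\mathbf{p}^1,y)$ against every pure $y$; nature's strategy $\mathbf{q}^{*}$ is only constructed afterwards (Theorem~\ref{theorem:nature}), once the breakpoints are known.

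Two further unsecured steps: (i) your appeal to minimax/LP duality in this continuum zero-sum game is itself nontrivial — existence of the game value and of an optimal $\mathbf{q}^{*}$ (which in this problem includes a point mass at $y=B$) requires an argument, and the paper deliberately avoids needing it; (ii) you never establish that $\mathbf{p}^{*}$ has finite density (no point masses in $(0,B]$), which is the paper's Lemma~\ref{lemma:MSRfinite} and is what licenses both the differentiations and the cancellation of the tail integrals when you match limits at $d_j$ — without it, $C(\mathbf{p}^{*},\cdot)$ can jump and the recurrence derivation breaks. Your pairwise dominance observation ($c_k(x,\cdot)\ge c_j(x,\cdot)$ for $j<k$ once $x>\frac{b_j-b_k}{r_k-r_j}$) is correct and gives upper bounds on each shop's support, but by itself it yields neither disjointness of the supports nor the tiling of $(0,B]$, so it cannot substitute for the exchange argument of Lemma~\ref{lemma:moveP MSR}. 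To repair the proposal you would either have to prove full support of $\mathbf{q}^{*}$ directly (essentially mirroring the paper's perturbation argument) or abandon the duality route for (a) altogether.
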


In the following, We will prove property (a) by Lemma~\ref{lemma:MSRconstant}, property (b) by Lemma~\ref{lemma:MSRfinite}-\ref{lemma:AlphaRelation}. All proof details can be found in Appendix A.

\begin{lemma}
\label{lemma:MSRconstant}
$\forall y \in (0,B]$, $\mathbf{p}^*$ satisfies that
\begin{equation}
\frac{C(\mathbf{p}^*,y)}{r_1 y} = \lambda \label{ratioRelation:MSR2}
\end{equation}
\end{lemma}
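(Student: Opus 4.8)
The statement is an \emph{equalizer} (indifference) property: at the consumer's optimum, nature obtains the same normalized cost no matter when it stops. The plan is to prove it by contradiction through a local exchange of buying mass. First I would record that, for a fixed feasible $\mathbf{p}$, the smallest $\lambda$ making constraint (\ref{problem:MSR2}a) feasible is $\sup_{y\in(0,B]} C(\mathbf{p},y)/(r_1y)$, so the optimal value is $\lambda=\min_{\mathbf{p}\in\calP}\sup_{y}C(\mathbf{p},y)/(r_1y)$ and $\mathbf{p}^*$ attains $\sup_y C(\mathbf{p}^*,y)/(r_1y)=\lambda$. Writing $f(y):=C(\mathbf{p}^*,y)/(r_1y)$, I would then show $f$ is continuous on $(0,B]$ by differentiating $C_j$ under the integral sign to get $\frac{\ud}{\ud y}C(\mathbf{p},y)=\sum_{j}\big(b_jp_j(y)+r_j\int_y^B p_j(x)\,\ud x\big)\ge 0$; this also shows $C(\mathbf{p}^*,\cdot)$ is nondecreasing and absolutely continuous, and (by L'H\^{o}pital, since $C(\mathbf{p}^*,0)=0$) that $f(0^+)=\frac{1}{r_1}\frac{\ud}{\ud y}C(\mathbf{p}^*,y)\big|_{0^+}$ is finite, so the supremum $\lambda$ of $f$ is genuinely attained on $(0,B]$.

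Suppose, for contradiction, that $f\not\equiv\lambda$. Then the slack set $\{y:f(y)<\lambda\}$ contains an open interval, while the binding set $S:=\{y\in(0,B]:f(y)=\lambda\}$ is closed and nonempty. The engine of the argument is the effect on $C$ of shifting an infinitesimal buying mass $\varepsilon$ within a single shop $j$ from a time $x_1$ to a later time $x_2$. Since the contribution of a unit of mass placed at $x$ to $C(\mathbf{p},y)$ is exactly $c_j(x,y)$, the signed change is $\varepsilon\big(c_j(x_2,y)-c_j(x_1,y)\big)$, which equals $0$ for $y<x_1$, equals $\varepsilon\big(r_j(y-x_1)-b_j\big)\le 0$ for $y\in[x_1,x_2)$ whenever $x_2-x_1\le b_j/r_j$, and equals $\varepsilon\,r_j(x_2-x_1)>0$ for $y\ge x_2$. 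Thus delaying buying mass strictly lowers the cost on an intermediate band of stopping times at the price of raising it only for large $y$; advancing mass does the reverse. This yields a family of feasible directions (total mass and nonnegativity are preserved for small $\varepsilon$) that locally reshape the profile $C(\mathbf{p}^*,\cdot)$.

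I would then use such an exchange to strictly lower $\sup_y f$. Choosing the mass to move out of the slack region and the direction so that $C$ decreases at every point of $S$, one obtains a feasible $\mathbf{p}'$ with $C(\mathbf{p}',y)<\lambda r_1y$ on $S$ and, for $\varepsilon$ small enough, still $C(\mathbf{p}',y)<\lambda r_1y$ off $S$ (by continuity and the fact that $f$ is bounded away from $\lambda$ on the compact complement of a neighborhood of $S$). Hence $\sup_y C(\mathbf{p}',y)/(r_1y)<\lambda$, contradicting the optimality of $\mathbf{p}^*$; therefore $f\equiv\lambda$.

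The main obstacle is this last step: a single monotone exchange lowers $C$ on one band but raises it on the complementary tail, so when the binding set $S$ and the slack region are interleaved no single exchange decreases $C$ on all of $S$ at once. Resolving this requires either first establishing that $S$ is an interval (or a terminal/initial segment) so that a one-directional shift suffices, or combining several offsetting exchanges into a signed perturbation whose net effect is negative on all of $S$ and nonpositive elsewhere. Care is also needed at the endpoints---as $y\to0^+$ (the $0/0$ limit handled above) and at $y=B$, where $\mathrm{OPT}$ switches branch---to ensure the perturbed constraints remain valid there as well.
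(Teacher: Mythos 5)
Your plan coincides with the paper's in outline: assume $C(\mathbf{p}^*,y)/(r_1y)$ is not constant and use a mass-shifting perturbation to strictly lower the worst-case ratio, contradicting optimality. But the proposal stops exactly where the real work begins, and you say so yourself: when the binding set $S$ and the slack region interleave, no single monotone exchange lowers the cost on all of $S$ at once, and you leave unresolved how to fix this (``either first establishing that $S$ is an interval \ldots or combining several offsetting exchanges''). Neither option is carried out, and neither is trivial; that $S$ is an interval is not known at this stage. The paper's proof is precisely the resolution of this difficulty, and it does \emph{not} need $S$ to be an interval. It fixes one slack point $y_0$ (slack $2\Lambda$), uses a one-sided semicontinuity estimate (Proposition~\ref{proposition:con3}) to obtain a whole interval $(y_0-\delta,y_0)$ of slack at least $\Lambda$, and then defines the nearly-binding sets $L$ (left of $y_0-\delta$) and $R$ (right of $y_0$), which by construction contain every nearly-binding point. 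The three cases ($R=\emptyset$; $L=\emptyset$; both nonempty) are handled by explicit constructions: move an amount $\epsilon_L$ of early buying mass (Proposition~\ref{proposition:con2} guarantees such mass exists) to time $B$, with $\epsilon_L$ of order $\lambda_0 r_1/b_1$ so that the unavoidable increase at $y=B$ is at most half the slack $\lambda_0$ already present there; or move late mass (possibly an atom) backward to $y_0-\tfrac{\delta}{2}$, again budgeting the increase against the slack at that point; or compose the two with a second budget $\epsilon'$. The key idea you are missing is that one need not decrease the ratio everywhere: it suffices to decrease it strictly on the nearly-binding sets while the inevitable increases land only at points whose slack is known and large enough to absorb them --- and locating that slack (left of, right of, or on both sides of $y_0$) is exactly what the $L$/$R$ case analysis accomplishes.

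A second, independent gap is that your regularity claims are not available at this point in the development. You differentiate $C$ under the integral sign, assert that $f$ is continuous, that the supremum is attained, and that $f(0^+)$ is finite by L'H\^{o}pital; all of this presumes $\mathbf{p}^*$ has no point masses. But the strategy space $\calP$ explicitly allows atoms, and ruling them out is Lemma~\ref{lemma:MSRfinite}, which is proved \emph{after}, and by means of, the present lemma, so it cannot be invoked here. If $\mathbf{p}^*$ had an atom at $x_0>0$, then $C(\mathbf{p}^*,\cdot)$ jumps upward at $x_0$, your $f$ is only right-continuous, and the supremum need not be attained; if the atom were at $0$, your $0/0$ computation at the origin fails outright since $f(y)\to+\infty$. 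The paper's argument is organized so as never to need these facts: Proposition~\ref{proposition:con1} and case (b) of the main proof treat the atom and non-atom situations separately, and the contradiction is reached using only right-continuity together with the one-sided estimate of Proposition~\ref{proposition:con3}.
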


From the above lemma, the problem (\ref{problem:MSR2}) is thus equivalent to the following:
\begin{eqnarray}
&\mathrm{minimize}& \lambda \label{problem:MSR3}\\
&\textrm{subject to}& (\ref{ratioRelation:MSR2}),(\ref{problem:MSR2}b),(\ref{problem:MSR2}c),(\ref{problem:MSR2}d) \nonumber
\end{eqnarray}

Here are some intuitions of \texttt{MSR}: In the extreme case where the buying time $x$ is sufficiently small, the consumer will prefer shop $n$ than any other shops since $b_n$ is the minimum buying price. As $x$ increases, the renting cost weights more and the skier gradually chooses the shop with lower rent yet higher buying price. In the other extreme case when the skier decides to buy at time $x$ close to $B$, shop 1 may be the best place since it has the lowest rent. Thus,in the optimal strategy, the  interval $[0,B]$ may be partitioned into several sub-intervals. In each interval, the consumer only chooses to buy at one and only one shop. The following two lemmas formally show that the above intuitions are indeed the case.

\begin{lemma}
\label{lemma:MSRfinite}
$\forall j \in [n]$, we have $p^*_j(0)<+\infty$, and $\forall x \in (0,B]$, $p^*_j(x)< \frac{2b_1r_1}{b_n^2}$.
\end{lemma}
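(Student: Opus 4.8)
The plan is to extract everything from the constant-ratio identity of Lemma~\ref{lemma:MSRconstant}, namely $C(\mathbf{p}^*,y)=\lambda r_1 y$ for every $y\in(0,B]$, together with one crude feasibility bound on $\lambda$. The statement has two parts: the absence of a probability mass at $x=0$ (so that $p_j^*(0)<+\infty$), and a uniform pointwise bound on the density on $(0,B]$. I would obtain the first part from the \emph{continuity} (and vanishing at $0$) of the right-hand side $\lambda r_1 y$, and the second part by \emph{differentiating} the identity in $y$ and reading off the coefficient of $p_j^*(y)$, which is at least $b_n$.

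First I would rule out atoms. Suppose shop $j$ placed a mass $m>0$ at some $x_0\in(0,B]$. Its contribution to $C(\mathbf{p}^*,y)=\sum_k\int_0^B c_k(x,y)p_k^*(x)\,\ud x$ is $m\,c_j(x_0,y)$, and since $c_j(x_0,y)=r_jy$ for $y<x_0$ but $c_j(x_0,y)=r_jx_0+b_j$ for $y\ge x_0$, the function $y\mapsto C(\mathbf{p}^*,y)$ would jump by $mb_j>0$ as $y$ crosses $x_0$; this contradicts the continuity of $\lambda r_1 y$. Likewise, a mass $m_j$ at $x=0$ would force $\lim_{y\to 0^+}C(\mathbf{p}^*,y)=\sum_j m_j b_j>0$, contradicting $C(\mathbf{p}^*,y)=\lambda r_1 y\to 0$. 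Hence $\mathbf{p}^*$ has no atoms on $[0,B]$; in particular $p_j^*(0)<+\infty$ for every $j$.

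With no atoms, $y\mapsto C_j(p_j^*,y)=\int_0^y(r_jx+b_j)p_j^*(x)\,\ud x+yr_j\int_y^B p_j^*(x)\,\ud x$ is absolutely continuous, and differentiating the identity $\sum_j C_j(p_j^*,y)=\lambda r_1 y$ yields $\sum_{j=1}^n\bigl[b_j p_j^*(y)+r_j\int_y^B p_j^*(x)\,\ud x\bigr]=\lambda r_1$ (the $yr_jp_j^*(y)$ terms cancel). Since every tail integral is nonnegative and every $b_j\ge b_n>0$, this gives $b_n p_j^*(y)\le\sum_k b_k p_k^*(y)\le\lambda r_1$, hence $p_j^*(y)\le \lambda r_1/b_n$ for all $y\in(0,B]$.

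It remains to bound $\lambda$. The optimal ratio cannot exceed that of the pure strategy $(1,B)$, i.e.\ renting at shop $1$ and buying at time $B$: for $y<B$ its cost is $r_1 y=\mathrm{OPT}(y)$, so the ratio is $1$, while at $y=B$ the cost is $r_1B+b_1=b_n+b_1$ against $\mathrm{OPT}(B)=b_n$, giving ratio $(b_n+b_1)/b_n$. Thus $\lambda\le(b_1+b_n)/b_n$, and since $b_n<b_1$ we conclude $p_j^*(y)\le(b_1+b_n)r_1/b_n^2<2b_1r_1/b_n^2$, as desired. The only delicate point is the measure-theoretic bookkeeping of the first two steps --- justifying the absence of atoms and the termwise differentiation; the remaining estimates are routine, and the target constant $2b_1r_1/b_n^2$ is exactly what the crude feasibility bound $\lambda\le(b_1+b_n)/b_n$ produces.
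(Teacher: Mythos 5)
Your proof is correct and takes essentially the same route as the paper's: differentiate the constant-ratio identity $C(\mathbf{p}^*,y)=\lambda r_1 y$ to obtain $\sum_{j}\bigl(b_j p_j^*(y)+r_j\int_y^B p_j^*(x)\,\ud x\bigr)=\lambda r_1$, bound $\lambda$ by the worst-case ratio $(b_1+b_n)/b_n$ of the pure strategy $(1,B)$, and exclude a mass at $x=0$ because $\lambda r_1 y\to 0$ while a mass would keep $C(\mathbf{p}^*,y)$ bounded away from $0$. Your only addition is the explicit atom-exclusion and absolute-continuity bookkeeping justifying the termwise differentiation, which the paper leaves implicit.
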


\begin{lemma}
\label{lemma:moveP MSR}
In the optimal strategy $\mathbf{p^*}$, there exists $n+1$ breakpoints $B=d_{1}\ge d_{2}\ge\cdots\ge d_{n+1} = 0$, which partition $[0,B]$ into $n$ sub-intervals, such that $\forall j=1,\cdots,n$, $\forall x \in (d_{j+1},d_{j})$, $p_j^*(x)>0 $ and $p_i^*(x)=0$ for any $i\neq j$.
\end{lemma}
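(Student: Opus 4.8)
The goal is to prove Lemma~\ref{lemma:moveP MSR}: that the optimal strategy partitions $[0,B]$ into $n$ sub-intervals, one per shop, ordered so that shop $j$ is the unique shop with positive density on $(d_{j+1},d_j)$. The plan is to combine the pointwise equality constraint from Lemma~\ref{lemma:MSRconstant} with an exchange argument showing that, among the shops, only one is ``efficient'' at any buying time $x$, and that this efficient shop changes monotonically from $n$ down to $1$ as $x$ grows from $0$ to $B$.

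First I would extract a differential characterization of the optimal density. Since Lemma~\ref{lemma:MSRconstant} gives $C(\mathbf{p}^*,y)=\lambda r_1 y$ for \emph{all} $y\in(0,B]$, I would differentiate this identity in $y$. Writing $C_j(p_j,y)=\int_0^y (r_j x+b_j)p_j^*(x)\,\ud x+\int_y^B y r_j p_j^*(x)\,\ud x$ and differentiating, the $(r_j y+b_j)p_j^*(y)$ term at the upper limit cancels against the corresponding boundary term, leaving $\frac{\ud}{\ud y}C_j = -b_j p_j^*(y)+r_j\int_y^B p_j^*(x)\,\ud x$. Summing over $j$ and equating to $\lambda r_1$ gives one scalar identity per $y$; differentiating once more converts the integral term into $-r_j p_j^*(y)$ and yields the local ODE relation $b_j (p_j^*)'(y)=r_j p_j^*(y)$ on the support of $p_j^*$, i.e.\ $p_j^*(x)\propto e^{r_j x/b_j}$ wherever it is positive (consistent with Theorem~\ref{theorem:MSR}(b)). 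This identifies the shape of each $p_j^*$ and is the technical backbone for the exchange step.

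Next comes the heart of the argument: an exchange (local perturbation) argument proving that the support of distinct shops cannot properly overlap, and that the supports are ordered. Suppose for contradiction that at some time $x$ two shops $i<j$ both carry positive density in a neighborhood. Because the ratio constraint must hold with equality for every $y$, I would compare the marginal effect of shifting an infinitesimal mass $\ud x\,\epsilon$ from shop $i$ to shop $j$ (or between two nearby times) on the quantity $\sup_y C(\mathbf{p},y)/(r_1 y)$, holding total probability fixed. The key quantity is the ``per-unit cost profile'' that shop $j$ contributes to $C(\mathbf{p},y)/(r_1y)$ as a function of $y$ for a unit of probability placed at buying time $x$; this is $c_j(x,y)/(r_1 y)$. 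I would show that the pointwise-in-$y$ comparison of $c_i(x,\cdot)$ versus $c_j(x,\cdot)$, combined with the ordering $r_1<\cdots<r_n$ and $b_1>\cdots>b_n$, forces a single-crossing structure: for small buying times shop $n$ (lowest $b$) dominates, and as $x$ increases the advantage shifts toward shops with lower rent. Concretely, comparing shops $i$ and $j$ at a common buying time $x$, the cost difference $c_i(x,y)-c_j(x,y)$ as a function of $y$ changes sign exactly once, governed by the breakpoint where $b_i+r_i x = b_j + r_j x$, i.e.\ $x=(b_i-b_j)/(r_j-r_i)$. Using this single-crossing property together with the equality constraint, any configuration in which a higher-index (lower-$b$, higher-$r$) shop carries mass at a \emph{later} time than a lower-index shop can be strictly improved by swapping the masses, contradicting optimality. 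This pins down both the disjointness of supports and the ordering $d_1\ge\cdots\ge d_{n+1}$.

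I expect the main obstacle to be making the exchange argument rigorous at the level of densities rather than finitely many atoms, and handling the possibility of probability masses (the $p_{j,x}=+\infty$ case) at the breakpoints $d_j$. The perturbation must preserve feasibility of the equality constraint for \emph{all} $y$ simultaneously, so I would phrase it as: take any feasible $\mathbf{p}$ achieving the optimum and, if its support structure violates the claimed ordering on a set of positive measure, construct an explicit measure-preserving rearrangement (sorting the buying mass so that the efficient shop is used at each time) that weakly decreases $C(\mathbf{p},y)/(r_1y)$ for every $y$ and strictly decreases it somewhere, contradicting Lemma~\ref{lemma:MSRconstant}'s characterization of the optimum. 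The definitions of the breakpoints then emerge as $d_j=\min\{x: \text{shop } j \text{ is efficient at } x\}$, and the boundary values $d_1=B$, $d_{n+1}=0$ follow from Lemma~\ref{lemma:MSRstspace} (buying restricted to $[0,B]$) and the extremal efficiency of shops $1$ and $n$ at the two ends of the interval. The continuity that guarantees the sub-intervals are genuine (no gaps, covering all of $[0,B]$) should follow from the equality constraint holding for every $y$, since a gap in the support would break the required differential relation at interior points.
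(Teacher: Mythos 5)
Your proposal takes essentially the same route as the paper's proof: the core is an exchange argument that, whenever a higher-index (higher-rent, lower-buying-price) shop carries mass at a later time than a lower-index shop, swaps the two masses and shows this weakly decreases $C(\mathbf{p},y)/(r_1 y)$ for every $y$ and strictly decreases it on an intermediate range of $y$, contradicting the constant-ratio characterization of optimal strategies in Lemma~\ref{lemma:MSRconstant}; the paper handles exactly the density-level technicality you anticipate (finding two subintervals on which both densities admit a positive swappable amount) via an auxiliary proposition, and your swap computation is correct. One caveat: your opening ODE step is circular as stated, since differentiating the constancy identity only yields the summed relation $\sum_j\bigl[b_j (p_j^*)'(y)-r_j p_j^*(y)\bigr]=0$ (also note the sign: the first derivative gives $+b_j p_j^*(y)+r_j\int_y^B p_j^*$, not $-b_j p_j^*(y)+\cdots$), so the per-shop relation $b_j(p_j^*)'=r_jp_j^*$ only follows \emph{after} disjointness of supports is established --- but your exchange argument never actually uses it, so the proof is unaffected.
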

\begin{proof} (sketch)
It suffices to show that $\forall x\in (0,B)$, $\forall \epsilon>0$, if there exists some $j$ such that $\int_{x-\epsilon}^{x}p_j^*(t)\ud t> 0$, then $\forall j'>j, x'\ge x$, we must have $\int_{x'}^{B}  p_{j'}^*(t)\ud t=0$. We use reductio ad absurdum to prove this proposition.

We first show that if there exists some $j'>j, x'>x, \epsilon>0$ such that $\int_{x-\epsilon}^{x}p_j^*(t)\ud t>0$, $\int_{x'}^{x'+\epsilon}p_{j'}^*(t)\ud t>0$, then there exist 2 intervals $(x_1,x_1+\theta)\subseteq(x-\epsilon,x)$ and $(x_2,x_2+\theta)\subseteq(x',x'+\epsilon)$, such that $$\int_{0}^{\epsilon_0} \min\{p_j^*(x_1+\theta),p_{j'}^*(x_2+\theta)\} \ud \theta>0$$

We next move some suitable buying probabilities of $p_{j'}^*$ from $(x_2,x_2+\theta)$ to $(x_1,x_1+\theta)$ for shop $j'$, and correspondingly move some purchase probabilities of $p_j^*$ from $(x_1,x_1+\theta)$ to $(x_2,x_2+\theta)$ for shop $j$. Then we obtain a new strategy $\mathbf{p^1}$. We show that $\forall y\in(0,B]$, $\mathbf{p^1}$ is no worse than $\mathbf{p^*}$, and $\forall y\in(x_1,B]$, $\mathbf{p^1}$ is strictly better than $\mathbf{p^*}$, which makes a contradiction.
\end{proof}

\vspace{-2mm}
The lemma explicitly specifies the order of the shops in the optimal strategy:  as $x$ increases, the index of the shop where the consumer assigns positive density decreases. Based on this lemma, for any $j \in [n], x \in (d_{j+1},d_j)$, multiplying both sides of (\ref{ratioRelation:MSR2}) by $r_1 y$, and taking twice derivatives, we have
\begin{equation} \label{Pdiffequation1}
 b_i\frac{\ud p_j^*(x)}{dx}=r_j p_j^*(x)\quad \forall x \in  (d_{j+1},d_{j})
\end{equation}
Solving this differentiable equation, we obtain the optimal solutions as follows\footnote{Because $p_j(x)$ is finite, we say $p_j(d_i) = 0$ for all $i,j\in N$, which does not affect the expected cost at all.}:
\begin{equation}\label{PFormSolution}
p_j^*(x)=
\begin{cases}
\alpha_j e^{r_jx/b_j},& x \in  (d_{j+1},d_j)\\
0,& otherwise
\end{cases}
\end{equation}
where $\alpha_j$ is some constant. The relationship between $\alpha_j$ and $\alpha_{j-1}$ is described in the following lemma:
\begin{lemma}
\label{lemma:AlphaRelation}
\begin{equation} \label{alphaRelation1}
\alpha_j b_j e^{r_jd_{j}/b_j} = \alpha_{j-1} b_{j-1} e^{r_{j-1}d_{j}/b_{j-1}} \quad \forall j = 2,\cdots, n
\end{equation}
\end{lemma}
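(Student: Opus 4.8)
The plan is to obtain (\ref{alphaRelation1}) as a \emph{smooth-pasting} condition forced by Lemma~\ref{lemma:MSRconstant}. Since $C(\mathbf{p}^*,y)=\lambda r_1 y$ holds identically for $y\in(0,B]$, the left-hand side is a differentiable function of $y$ whose derivative equals the constant $\lambda r_1$; in particular this derivative is continuous across every interior breakpoint $d_j$. I would compute that derivative, isolate the single term in it that is allowed to jump at a breakpoint, and then match its one-sided limits.

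First I would differentiate $C_j(p_j^*,y)=\int_{0}^{y}(r_j x+b_j)p_j^*(x)\,\mathrm{d}x+\int_{y}^{B}y r_j p_j^*(x)\,\mathrm{d}x$ with respect to $y$. By the Leibniz rule the two contributions carrying the factor $r_j y\,p_j^*(y)$ cancel, leaving
\begin{displaymath}
\frac{\mathrm{d}}{\mathrm{d}y}C_j(p_j^*,y)=b_j p_j^*(y)+r_j\int_{y}^{B}p_j^*(x)\,\mathrm{d}x .
\end{displaymath}
Summing over $j$ and invoking Lemma~\ref{lemma:MSRconstant} gives the identity
\begin{displaymath}
\sum_{j=1}^{n}b_j p_j^*(y)+\sum_{j=1}^{n}r_j\int_{y}^{B}p_j^*(x)\,\mathrm{d}x=\lambda r_1 ,
\end{displaymath}
valid for every $y\in(0,B)$.

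Next I would argue that the integral term is continuous in $y$: by Lemma~\ref{lemma:MSRfinite} each $p_j^*$ is uniformly bounded on $(0,B]$, so each $\int_{y}^{B}p_j^*(x)\,\mathrm{d}x$ is Lipschitz in $y$. Since the right-hand side above is the constant $\lambda r_1$, the remaining map $y\mapsto\sum_{i}b_i p_i^*(y)$ must itself be continuous on $(0,B)$. Now fix a breakpoint $d_j$ with $2\le j\le n$. By Lemma~\ref{lemma:moveP MSR}, only shop $j$ is active on $(d_{j+1},d_j)$ and only shop $j-1$ is active on $(d_j,d_{j-1})$, so the left and right limits of $\sum_i b_i p_i^*(y)$ at $d_j$ reduce to $b_j p_j^*(d_j^-)$ and $b_{j-1}p_{j-1}^*(d_j^+)$. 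Substituting the closed form (\ref{PFormSolution}), namely $p_j^*(x)=\alpha_j e^{r_j x/b_j}$, the forced continuity at $d_j$ becomes
\begin{displaymath}
\alpha_j b_j e^{r_j d_j/b_j}=\alpha_{j-1}b_{j-1}e^{r_{j-1}d_j/b_{j-1}},
\end{displaymath}
which is precisely (\ref{alphaRelation1}).

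The step I expect to be the main obstacle is the regularity bookkeeping that legitimizes the ``only one term can jump'' argument. I must rule out a probability \emph{mass} sitting at an interior breakpoint, since an atom would make $p_j^*$ infinite there and break the limit computation; this is exactly what the uniform bound of Lemma~\ref{lemma:MSRfinite} forbids on $(0,B]$. I also need genuine differentiability of $C(\mathbf{p}^*,\cdot)$ rather than mere continuity, which is immediate because it coincides with the smooth function $\lambda r_1 y$. Finally, the degenerate case $d_{j}=d_{j+1}$ (an empty interval for shop $j$) should be dispatched separately, where the relation is vacuous or recovered by taking limits of the nondegenerate equalities.
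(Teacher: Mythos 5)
Your proposal is correct and follows essentially the same route as the paper: both differentiate the constant-ratio identity of Lemma~\ref{lemma:MSRconstant} to obtain $\sum_{j}\bigl(b_j p_j^*(y)+r_j\int_y^B p_j^*(x)\,\mathrm{d}x\bigr)=\lambda r_1$, use the boundedness from Lemma~\ref{lemma:MSRfinite} to see that the integral terms are continuous across a breakpoint, and then match the one-sided limits $b_j p_j^*(d_j^-)=b_{j-1}p_{j-1}^*(d_j^+)$ using the single-active-shop structure of Lemma~\ref{lemma:moveP MSR} and the closed form (\ref{PFormSolution}). Your extra remarks on ruling out atoms and on degenerate breakpoints are sound regularity bookkeeping that the paper leaves implicit.
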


\subsection{Computing the Optimal Strategy}\label{sec:computing}
In this section we propose a linear time algorithm to compute the optimal strategy for the consumer. First we show the relationship between the competitive ratio $\lambda$ and $\alpha_1$ by the following lemma:
\begin{lemma}
\label{lemma:MSRconratio}
For any strategy $\mathbf{p}$ which satisfies property (b) in Theorem~\ref{theorem:MSR}, it holds that
\begin{displaymath}
\frac{C(\mathbf p,y)}{r_1 y} = \alpha_1 \frac{b_1}{r_1} e^{\frac{r_1}{b_1}B},\quad\forall y\in(0,B]
\end{displaymath}
\end{lemma}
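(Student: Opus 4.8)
The plan is to verify the identity by direct computation, showing that $C(\mathbf p,y)$ is a linear function of $y$ through the origin whose slope is exactly $\alpha_1 b_1 e^{r_1 B/b_1}$; dividing by $r_1 y$ then produces the stated constant. Since $\mathbf p$ satisfies property (b) it is a genuine density (no mass points), so $C(\mathbf p,y)=\sum_{j=1}^n C_j(p_j,y)$ is continuous in $y$ on $[0,B]$, and it suffices to compute its derivative on each open sub-interval $(d_{j+1},d_j)$ and check that the slopes agree.

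First I would differentiate. Applying the Leibniz rule to
\begin{displaymath}
C_j(p_j,y)=\int_{0}^{y}(r_jx+b_j)p_j(x)\,\ud x+\int_{y}^{B}yr_j p_j(x)\,\ud x,
\end{displaymath}
the $y$-proportional boundary terms at $x=y$ (namely $r_jy\,p_j(y)$ from the first integral and $-yr_jp_j(y)$ from the second) cancel, leaving
\begin{displaymath}
\frac{\ud}{\ud y}C(\mathbf p,y)=\sum_{j=1}^n\Bigl[b_j p_j(y)+r_j\int_{y}^{B}p_j(x)\,\ud x\Bigr].
\end{displaymath}
Next I fix $y\in(d_{k+1},d_k)$ and evaluate this using $p_j(x)=\alpha_j e^{r_jx/b_j}$ from property (b). Only shop $k$ contributes to the first term, giving $b_k p_k(y)=\alpha_k b_k e^{r_ky/b_k}$; and, because the shop index decreases as $x$ grows, the tail integral $\int_y^B$ splits as $\int_y^{d_k}$ over shop $k$ plus $\sum_{i=1}^{k-1}\int_{d_{i+1}}^{d_i}$ over shop $i$. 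Using the antiderivative $r_j\int \alpha_j e^{r_jx/b_j}\,\ud x=\alpha_j b_j e^{r_jx/b_j}$, the shop-$k$ pieces involving $e^{r_ky/b_k}$ cancel against $b_k p_k(y)$, leaving a clean $\alpha_k b_k e^{r_kd_k/b_k}$, while each shop-$i$ term contributes $\alpha_i b_i\bigl(e^{r_id_i/b_i}-e^{r_id_{i+1}/b_i}\bigr)$.

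The crux is then the telescoping. Writing $g_j:=\alpha_j b_j e^{r_jd_j/b_j}$, Lemma~\ref{lemma:AlphaRelation} says $g_j=\alpha_{j-1}b_{j-1}e^{r_{j-1}d_j/b_{j-1}}$, i.e.\ the ``top'' value $g_j$ of shop $j$ equals the ``bottom'' value $\alpha_{j-1}b_{j-1}e^{r_{j-1}d_{(j-1)+1}/b_{j-1}}$ of shop $j-1$, so that the shop-$i$ contribution is $g_i-g_{i+1}$. Hence $\sum_{i=1}^{k-1}(g_i-g_{i+1})=g_1-g_k$, and adding the shop-$k$ term $g_k$ cancels the $-g_k$, giving
\begin{displaymath}
\frac{\ud}{\ud y}C(\mathbf p,y)=g_1=\alpha_1 b_1 e^{r_1 B/b_1}
\end{displaymath}
on \emph{every} sub-interval, using $d_1=B$. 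Since this slope is the same constant across all sub-intervals and $C(\mathbf p,y)$ is continuous with $C(\mathbf p,0)=0$, it follows that $C(\mathbf p,y)=\alpha_1 b_1 e^{r_1 B/b_1}\,y$ on $(0,B]$; dividing by $r_1 y$ yields the claim.

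I expect the main obstacle to be the bookkeeping in the telescoping step: correctly tracking which shops the tail integral $\int_y^B$ covers (the index falls as $x$ rises) and matching the interval-endpoint values through Lemma~\ref{lemma:AlphaRelation} so that everything collapses to the single constant $g_1$. A secondary point to state carefully is the passage from piecewise to global linearity — this is legitimate because $C(\mathbf p,y)$ is continuous and its one-sided derivatives at each breakpoint $d_k$ both equal $g_1$, so $C$ is in fact $C^1$ and affine on all of $[0,B]$.
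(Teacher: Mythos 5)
Your proof is correct, but it takes a different computational route from the paper's. The paper proves Lemma~\ref{lemma:MSRconratio} by evaluating $C(\mathbf p,y)$ in closed form for $y\in(d_{j+1},d_j]$: it splits the sum into shops with index above, equal to, and below $j$, uses the antiderivative $\frac{\ud}{\ud x}\bigl[\alpha_i b_i x e^{r_i x/b_i}\bigr]=(r_i x+b_i)\alpha_i e^{r_i x/b_i}$ to integrate each piece exactly, and then collapses the resulting sum to $y\,\alpha_1 b_1 e^{r_1 B/b_1}$ via the same telescoping identity from Lemma~\ref{lemma:AlphaRelation} that you use. You instead differentiate first, obtaining $\frac{\ud}{\ud y}C(\mathbf p,y)=\sum_j\bigl[b_j p_j(y)+r_j\int_y^B p_j(x)\,\ud x\bigr]$ --- which is precisely the expression (\ref{PfirstDiff}) that the paper derives for other purposes in the proofs of Lemmas~\ref{lemma:MSRfinite} and~\ref{lemma:AlphaRelation} --- show it equals the constant $g_1=\alpha_1 b_1 e^{r_1 B/b_1}$ on every open sub-interval, and then integrate back using continuity and $C(\mathbf p,0)=0$. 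What your route buys: lighter algebra (only the elementary antiderivative of $e^{r_jx/b_j}$ is needed, not the product rule trick) and a transparent interpretation that the marginal expected cost per unit of stopping time is constant. What it costs: you must justify the passage from piecewise-constant derivative to global affineness, which you do handle correctly (property (b) rules out mass points, so $C$ is continuous, and continuity plus equal slopes on each closed sub-interval glues to a single affine function). The paper's direct evaluation needs no such regularity discussion, since the closed form is established pointwise for every $y$. Both proofs hinge on exactly the same key cancellation: Lemma~\ref{lemma:AlphaRelation} identifies the ``top'' constant of shop $j$ with the ``bottom'' constant of shop $j-1$, so the sum telescopes to $g_1$.
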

From the above lemma, we know that minimizing $\lambda$ is equivalent to minimizing $\alpha_1$. Therefore, problem (\ref{problem:MSR3}) is now equivalent to the following:
\begin{eqnarray}
&\mathrm{minimize}& \alpha_1 \\
&\textrm{subject to}& \sum_{j=1}^n \alpha_j \frac{b_j}{r_j}\left(e^{\frac{r_j}{b_j}d_{j-1}}-e^{\frac{r_j}{b_j}d_j}\right) = 1\label{PalphaNormolise}\\
&& \alpha_j e^{r_j d_{j}/b_j} = \alpha_{j-1} e^{r_{j-1}d_{j}/b_{j-1}}, \forall j = 2,\cdots, n \nonumber\\
&& \alpha_j > 0,\quad \forall j\in [n]\nonumber\\
&& B=d_1\ge d_2\ge\cdots\ge d_n\ge d_{n+1}=0 \nonumber
\end{eqnarray}
where (\ref{PalphaNormolise}) is computed directly from (\ref{problem:MSR2}b).

In Theorem~\ref{theorem:MSR}, if we know $(d_1,d_2,\cdots,d_{n+1})$, then we can see that $\alpha_j$ is proportional to $\alpha_1$. Therefore, we can get a constant $\Omega_j$ such that $\Omega_j\alpha_1=\int_{d_{j+1}}^{d_j}p_j(x)dx$ since the breakpoints are known. Finally we can get a constant $\Omega=\sum_{j=1}^{n}\Omega_j$ such that $\Omega\alpha_1=\sum_{j=1}^{n}(\int_{0}^{B}p_j(x)dx)$. Using the fact that $\sum_{j=1}^{n}(\int_{0}^{B}p_j(x)dx)=1$, we can easily solve $\alpha_1$, all the $\alpha_j$ and the whole problem.

Therefore, the computation of $\mathbf{p}$ reduces to computing $\{d_1, d_2,\cdots,d_{n+1}\}$. Notice that $d_1\equiv B, d_{n+1}\equiv 0$.

In this case, we treat this problem from another prospective. We first fix $\alpha_1$ to be 1. After that, without considering the constraint (\ref{PalphaNormolise}), we compute the optimal breakpoints $(d_1, d_2,\cdots,d_{n+1})$ to maximize $\sum_{j=1}^{n} (\int_{0}^{B} p_j(x)dx)$. Denote the optimal value of this problem as $\Omega$. We then normalize all the probability functions, i.e., reset all the $\alpha_j$ to be $\alpha_j/\Omega$. By Lemma~\ref{lemma:AlphaRelation}, we know the ratio $\lambda$ is proportional to $\alpha_1$, which is fixed at first and normalized at last. Hence, maximizing $\Omega$ is equivalent to minimizing $\lambda$. Notice that all the probability functions in the remainder of section~\ref{sec:computing} is unnormalized when $\alpha_1=1$.

In the following 2 sections~\ref{sec:compdn} and \ref{sec:compdj}, we show some intuitions and ideas of our algorithm about how to compute the breakpoints. In Section~\ref{sec:msralg}, we formally propose our algorithm and prove the optimality and complexity of our algorithm.

\subsubsection{Computing $d_n$}\label{sec:compdn}
To facilitate further calculations, we denote $P_{j}$ to be the probability sum of shop $j$ to shop $n$, i.e.,
$$P_{j}\triangleq\sum_{\tau=j}^{n} (\int_{0}^{B} p_\tau(x)\ud x)=\sum_{\tau=j}^{n} (\int_{0}^{d_j} p_\tau(x)\ud x)$$

Now we just need to maximize $P_1$ since by definition $P_1=\Omega$. To compute some breakpoint $d_j$, we assume that all the breakpoints $\{d_i:i\ne j\}$ are fixed. Since breakpoints $d_1,d_2,\cdots, d_{j-1}$ are fixed, parameters $\alpha_1,\alpha_2,\cdots,\alpha_{j-1}$ are constants. Therefore, $\sum_{\tau=1}^{j-2} (\int_{0}^{B}p_\tau(x) dx)$, part of the probability sum, is a constant and we just need to maximize the rest of the sum which is $P_{j-1}$.

First we consider how to compute $\arg\max_{d_n} P_{n-1}(d_n)$ when given $d_1,\cdots, d_{n-1}$, where
$$P_{n-1}(d_n)=\alpha_n \int_{0}^{d_n} e^\frac{r_nx}{b_n}dx+\alpha_{n-1}\int_{d_n}^{d_{n-1}} e^\frac{r_{n-1}x}{b_{n-1}}dx$$
Notice that $\alpha_{n-1}$ is a constant but $\alpha_n$ depends on $d_n$. From Lemma~\ref{lemma:AlphaRelation} we know that:
$$\alpha_n=\alpha_{n-1} b_{n-1} e^{(r_{n-1}/b_{n-1}-r_n/b_n)d_n}/b_n$$

The following lemma shows the concavity of $P_{n-1}(d_n)$:
\begin{lemma}
\label{lemma:dnconcave}
$P_{n-1}(d_n)$ is a strictly concave function.
\end{lemma}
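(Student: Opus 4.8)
The plan is to show that $P_{n-1}(d_n)$ is strictly concave by computing its second derivative with respect to $d_n$ and verifying it is strictly negative on the relevant interval. The function is
\begin{displaymath}
P_{n-1}(d_n)=\alpha_n(d_n) \int_{0}^{d_n} e^{\frac{r_nx}{b_n}}\,\ud x+\alpha_{n-1}\int_{d_n}^{d_{n-1}} e^{\frac{r_{n-1}x}{b_{n-1}}}\,\ud x,
\end{displaymath}
where the key subtlety is that $\alpha_n$ is itself a function of $d_n$ via the gluing relation from Lemma~\ref{lemma:AlphaRelation}, namely $\alpha_n(d_n)=\alpha_{n-1} b_{n-1} e^{(r_{n-1}/b_{n-1}-r_n/b_n)d_n}/b_n$. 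First I would substitute this expression for $\alpha_n(d_n)$ into the first integral and evaluate both integrals in closed form, so that $P_{n-1}$ becomes an explicit sum of exponential terms in $d_n$ (with $\alpha_{n-1}$, $d_{n-1}$ treated as constants).

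The main computation is then to differentiate twice. I expect substantial cancellation: differentiating the first term by the product rule produces a boundary contribution from the upper limit $d_n$ together with a contribution from $\alpha_n'(d_n)$, while differentiating the second integral produces $-\alpha_{n-1}e^{r_{n-1}d_n/b_{n-1}}$, and these first-order boundary pieces should partially cancel because of the way $\alpha_n(d_n)$ is defined to make $p_n^*$ and $p_{n-1}^*$ consistent at $d_n$. After this simplification, $P_{n-1}'(d_n)$ should reduce to a single clean exponential expression (roughly of the form $\alpha_n(d_n)\cdot(\text{const})\cdot(e^{r_nd_n/b_n}-\text{something})$ scaled by the factor $r_{n-1}/b_{n-1}-r_n/b_n$), and differentiating once more should yield a second derivative that is manifestly of one sign.

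The sign analysis is where the hypotheses of the problem enter decisively. Recall the standing assumptions $0<r_1<\cdots<r_n$ and $b_1>\cdots>b_n>0$, which imply that the ratios $r_j/b_j$ are strictly increasing in $j$; in particular $r_{n-1}/b_{n-1}-r_n/b_n<0$. This strict inequality is exactly the factor that forces $P_{n-1}''(d_n)<0$, so the concavity is strict rather than merely weak, and it is crucial that $\alpha_{n-1}>0$ and all exponentials are positive. I would therefore carefully track the sign of $r_{n-1}/b_{n-1}-r_n/b_n$ through the second-derivative expression and conclude that every surviving term carries the same negative sign.

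The step I expect to be the main obstacle is managing the bookkeeping in the double differentiation so that the cancellations are correctly identified: because $\alpha_n$ depends on $d_n$, naive differentiation produces several terms, and it is easy to misattribute a sign or drop a boundary term. The cleanest route is to first simplify $P_{n-1}(d_n)$ into a form where the $d_n$-dependence is isolated into as few exponential terms as possible (ideally two), after which the concavity of each term, and hence of their positive combination with the correct signs, follows transparently. Once $P_{n-1}''(d_n)<0$ is established for all $d_n\in(0,d_{n-1})$, strict concavity on the feasible interval is immediate.
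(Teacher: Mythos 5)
Your proposal is correct and follows essentially the same route as the paper's proof: substitute the relation $\alpha_n(d_n)=\alpha_{n-1}b_{n-1}e^{(r_{n-1}/b_{n-1}-r_n/b_n)d_n}/b_n$, evaluate the integrals in closed form, differentiate twice, and verify $P''_{n-1}(d_n)<0$ using the standing assumptions $r_{n-1}<r_n$, $b_{n-1}>b_n$. The only minor imprecision is attributing the negativity solely to the factor $\frac{r_{n-1}}{b_{n-1}}-\frac{r_n}{b_n}<0$; the paper's sign check also needs $\frac{r_{n-1}}{r_n}-1<0$ and $\frac{r_{n-1}}{r_n}-\frac{b_{n-1}}{b_n}<0$, but these follow from the same hypotheses you invoke, so your plan goes through.
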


Notice that $P'_{n-1}(d_n)>0$ when $d_n=0$. This implies that: if $d_n<d_{n-1}$, we must have $P'_{n-1}(d_n)=0$ since it is concave. Otherwise, $d_{n-1}=d_n$ which means $\forall x,p_{n-1}(x)=0$, i.e., shop $n-1$ does not exist. Thus we can delete shop $n-1$ and view shop $n-2$ as shop $n-1$. Similarly, if $d_n<d_{n-2}$, $P'_{n-2}(d_n)=0$; otherwise delete shop $n-2$ and treat shop $n-3$ as shop $n-1$. Repeat this procedure until we find some shop $k$, such that $d_n=d_{n-1}=\cdots=d_{k+1}<d_k$. Then $d_n$ should be the maximal point because of the concavity derived by Lemma~\ref{lemma:dnconcave}, i.e.,
$$d_n=\frac{b_n}{r_n}\ln(\frac{b_{k}r_n-b_nr_{k}}{b_n(r_n-r_{k})})$$
Notice that $d_n$ is always positive.

\subsubsection{Computing $d_j$}\label{sec:compdj}

Notice that $d_n$ is unrelated to $d_{n-1}$ if $d_n<d_{n-1}$. Therefore, we can work out all the breakpoints $d_j$ in descending order of the subscript of d. Here we show how to obtain $d_j$ after $d_n,d_{n-1},\cdots,d_{j+1}$.

If $j=n$, we just temporarily take
$$d_n=\frac{b_n}{r_n}\ln(\frac{b_{n-1}r_n-b_nr_{n-1}}{b_n(r_n-r_{n-1})})$$

If $j\ne n$, our target becomes $\arg\max_{d_j} P_{j-1}(d_j)$. According to the definition, we have
$$P_{j-1}(d_j)=\alpha_j (D_j+\int_{0}^{d_j} e^\frac{r_jx}{b_j}dx)+\alpha_{j-1}\int_{d_j}^{d_{j-1}} e^\frac{r_{j-1}x}{b_{j-1}}dx$$
where$$D_j\triangleq -\int_{0}^{d_{j+1}}e^{r_jx/b_j}dx+\sum_{\tau=j+1}^{n}\frac{\alpha_\tau}{\alpha_j}\int_{d_{\tau+1}}^{d_{\tau}}e^{r_{\tau}x/b_{\tau}}dx\ge 0$$

Notice that the breakpoints $d_n,d_{n-1},\cdots,d_{j+1}$ are fixed and we can compute $\alpha_\tau/\alpha_j$ by the following equations which is derived from Lemma~\ref{lemma:AlphaRelation}:
$$\alpha_\tau=\alpha_{\tau-1} b_{\tau-1} e^{(r_{\tau-1}/b_{\tau-1}-r_\tau/b_\tau)d_\tau}/b_\tau,\forall\tau\in[n]\backslash[j]$$
Therefore, $D_j$ is a constant.

It can be seen that we can compute $D_j$ recursively, i.e.,
$$D_j= \frac{\alpha_{j+1}}{\alpha_{j}}(D_{j+1}+\int_{0}^{d_{j+1}}e^{\frac{r_{j+1}}{b_{j+1}}x}\ud x -\int_{0}^{d_{j+1}}e^{\frac{r_j}{b_j}x}\ud x)$$
Also note that $\alpha_{j-1}$ is a constant but $\alpha_j$ depends on $d_j$:
$$\alpha_j=\alpha_{j-1} b_{j-1} e^{(r_{j-1}/b_{j-1}-r_j/b_j)d_j}/b_j$$
The following lemma shows that $P_{j-1}(d_j)$ is a quasi-concave function:

\begin{lemma}
\label{lemma:GConcave}
If $D_jr_j/b_j\ge 1$, we always have $P'_{j-1}(d_j)<0$; if $D_jr_j/b_j< 1$, $P''_{j-1}(d_j)<0$, i.e., $P_{j-1}(d_j)$ is strictly concave.
\end{lemma}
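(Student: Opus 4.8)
The plan is to differentiate $P_{j-1}(d_j)$ directly and reduce both claims to a sign analysis of two exponential terms. First I would substitute the closed forms $\int_{0}^{d_j} e^{r_j x/b_j}\ud x = \frac{b_j}{r_j}(e^{r_j d_j/b_j}-1)$ and $\int_{d_j}^{d_{j-1}} e^{r_{j-1}x/b_{j-1}}\ud x = \frac{b_{j-1}}{r_{j-1}}(e^{r_{j-1}d_{j-1}/b_{j-1}}-e^{r_{j-1}d_j/b_{j-1}})$ into the given expression for $P_{j-1}(d_j)$, together with the dependence $\alpha_j = \alpha_{j-1} b_{j-1} e^{(r_{j-1}/b_{j-1}-r_j/b_j)d_j}/b_j$. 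To keep the algebra readable I would set $a \triangleq r_{j-1}/b_{j-1}$ and $c \triangleq r_j/b_j$, so that $\alpha_j = K e^{(a-c)d_j}$ with $K \triangleq \alpha_{j-1} b_{j-1}/b_j > 0$ a constant in $d_j$. The assumed ordering $r_1<\cdots<r_n$, $b_1>\cdots>b_n>0$ makes $r_j/b_j$ strictly increasing in $j$, hence $c > a > 0$; this is the only place the price hypotheses are used.

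With this substitution $P_{j-1}(d_j)$ becomes a combination of $e^{a d_j}$, $e^{(a-c)d_j}$ and a constant, and its first derivative groups cleanly as
\[
P'_{j-1}(d_j) = K(a-c)\left(D_j - \frac{1}{c}\right)e^{(a-c)d_j} + \left(\frac{Ka}{c} - \alpha_{j-1}\right)e^{a d_j}.
\]
The key observation, which does all the real work, is that the coefficient of $e^{a d_j}$ collapses: using $K = \alpha_{j-1}b_{j-1}/b_j$ together with $a\,b_{j-1} = r_{j-1}$ and $c\,b_j = r_j$,
\[
\frac{Ka}{c} - \alpha_{j-1} = \alpha_{j-1}\left(\frac{r_{j-1}}{r_j} - 1\right) < 0,
\]
strictly negative precisely because $r_{j-1} < r_j$. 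I would also record that the hypothesis $D_j r_j/b_j \ge 1$ is exactly $D_j \ge 1/c$, i.e. $D_j - 1/c \ge 0$.

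From here the two cases drop out. When $D_j r_j/b_j \ge 1$, the first term is $\le 0$ (since $K>0$, $a-c<0$, and $D_j - 1/c \ge 0$) while the second is strictly negative, so $P'_{j-1}(d_j) < 0$. When $D_j r_j/b_j < 1$, I would differentiate once more to obtain
\[
P''_{j-1}(d_j) = K(a-c)^2\left(D_j - \frac{1}{c}\right)e^{(a-c)d_j} + a\left(\frac{Ka}{c} - \alpha_{j-1}\right)e^{a d_j},
\]
where now $D_j - 1/c < 0$ forces the first term strictly negative, and the already-established sign of $\frac{Ka}{c}-\alpha_{j-1}$ together with $a>0$ forces the second term strictly negative, giving $P''_{j-1}(d_j)<0$ and hence strict concavity.

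I do not anticipate a genuine obstacle: once the variables $a,c,K$ are introduced the statement is a routine sign computation. The only delicate point is the bookkeeping in the first derivative, namely keeping straight that $\alpha_{j-1}$, $D_j$, $d_{j-1}$ and $K$ are constants with respect to $d_j$ while $\alpha_j$ is not, and then spotting the cancellation that reduces the $e^{a d_j}$ coefficient to the transparent factor $\alpha_{j-1}(r_{j-1}/r_j - 1)$. I would make sure to display that simplification explicitly rather than absorb it into the algebra, since it is exactly the step where the ordering of prices is invoked.
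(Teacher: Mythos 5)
Your proposal is correct and follows essentially the same route as the paper's proof: differentiate $P_{j-1}$ once and twice with respect to $d_j$ (accounting for the dependence of $\alpha_j$ on $d_j$), then conclude by a sign analysis that uses $r_{j-1}<r_j$ and $r_{j-1}/b_{j-1}<r_j/b_j$. Your substitution $a=r_{j-1}/b_{j-1}$, $c=r_j/b_j$, $K=\alpha_{j-1}b_{j-1}/b_j$ merely regroups the paper's identical expressions---your collapsed coefficient $\alpha_{j-1}\left(r_{j-1}/r_j-1\right)$ and the term $K(a-c)\left(D_j-1/c\right)e^{(a-c)d_j}$ are exactly the paper's terms combined---so the argument matches, with slightly cleaner bookkeeping.
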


Similarly with the computation of $d_n$, if $D_jr_j/b_j>1$, then we always have $P'_{j-1}(d_j)<0$ and the optimal $d_j$ is $d_{j+1}$. Hence we delete shop $j$ and treat shop $j-1$ as shop $j$. Then we need to recompute $d_{j+1}$ and let $d_j=d_{j+1}$; otherwise it is concave and we temporarily get the maximal point:

$$d_j=\frac{b_j}{r_j}\ln(\frac{(b_{j-1}r_j-b_jr_{j-1})(1-D_jr_j/b_j)}{b_j(r_j-r_{j-1})})$$

Here if the temporary $d_j$ is no larger than $d_{j+1}$, it means that the optimal solution is $d_{j+1}$ because of the constraints $d_{j+1}\le d_j\le d_{j-1}$. So we have $d_j=d_{j+1}$ which means that $\forall x,p_j(x)=0$. Therefore, we delete shop $j$ and treat shop $j-1$ as shop $j$. Then recompute $d_{j+1}$ and temporarily skip $d_j$. At last we set $d_j=d_{j+1}$.

\subsubsection{A Linear Time Algorithm}\label{sec:msralg}
Now we are ready to show our algorithm for computing the optimal strategy of the consumer.
\begin{theorem}
\label{theorem:MSRalg}
There is an algorithm for computing the unique optimal strategy of the consumer. The time and space complexity of the algorithm are linear.
\end{theorem}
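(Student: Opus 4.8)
The plan is to establish two things separately: correctness (the algorithm outputs the unique optimal $\mathbf{p}^*$) and the linear time and space bound. For correctness I would start from the reduction already set up. By Theorem~\ref{theorem:MSR} every candidate optimal strategy is parametrized by the breakpoints $B=d_1\ge d_2\ge\cdots\ge d_n\ge d_{n+1}=0$ together with $\alpha_1$, and by Lemma~\ref{lemma:MSRconratio} the competitive ratio $\lambda$ is a fixed positive multiple of $\alpha_1$. Fixing $\alpha_1=1$, dropping the normalization (\ref{PalphaNormolise}), and maximizing the unnormalized mass $\Omega=P_1$ over the feasible region $\{d: B=d_1\ge\cdots\ge d_{n+1}=0\}$ is therefore equivalent to minimizing $\lambda$; so it suffices to prove that the algorithm returns $\arg\max_d P_1$, that this maximizer is unique, and that re-normalizing $\alpha_j\leftarrow\alpha_j/\Omega$ recovers a feasible optimal strategy.

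The heart of the correctness proof is the \emph{decoupling} that lets a global optimization over $d_2,\dots,d_n$ be solved by a single backward sweep. I would prove the structural claim that, with $d_{j+1},\dots,d_n$ already fixed, the stationarity condition $P'_{j-1}(d_j)=0$ depends only on those later breakpoints—through the constant $D_j$ and the ratios $\alpha_\tau/\alpha_j$ obtained from Lemma~\ref{lemma:AlphaRelation}—and is \emph{independent} of the still-undetermined $d_{j-1},\dots,d_1$. This is exactly what makes $d_n,d_{n-1},\dots,d_2$ computable in descending order: $\alpha_{j-1}$ factors out of $P_{j-1}(d_j)$ as a positive constant, and $d_{j-1}$ enters only through the upper limit of $\alpha_{j-1}\int_{d_j}^{d_{j-1}}e^{r_{j-1}x/b_{j-1}}\,\ud x$, whose derivative in $d_j$ does not involve $d_{j-1}$. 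Given the decoupling, I would invoke the concavity results—Lemma~\ref{lemma:dnconcave} for the base step $d_n$ and Lemma~\ref{lemma:GConcave} for the inductive step $d_j$—to conclude that each sub-problem $\max_{d_j}P_{j-1}(d_j)$ has a unique solution: either the interior stationary point (the closed form $d_j=\frac{b_j}{r_j}\ln(\cdots)$), or, when that point falls below $d_{j+1}$ or when $D_jr_j/b_j\ge 1$, the boundary value $d_j=d_{j+1}$. The boundary case means shop $j$ receives zero mass, and I would verify that the merge rule (delete shop $j$, relabel shop $j-1$ as the new neighbor, recompute) preserves both the ordering $d_{j+1}\le d_j\le d_{j-1}$ and the stationarity structure, so the swept-out solution is feasible and coordinatewise stationary; together with strict concavity at each active step this upgrades coordinatewise stationarity to the global maximizer and yields uniqueness of $\mathbf{p}^*$.

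For the complexity bound I would give an amortized (aggregate) argument. The sweep maintains the currently active shops on a stack; processing shop $j$ performs $O(1)$ work to evaluate the closed-form breakpoint and to update $D_j$ via the recursion $D_j=\frac{\alpha_{j+1}}{\alpha_j}\bigl(D_{j+1}+\int_0^{d_{j+1}}e^{r_{j+1}x/b_{j+1}}\,\ud x-\int_0^{d_{j+1}}e^{r_jx/b_j}\,\ud x\bigr)$, after which it may pop (delete) several shops. Since each shop is pushed exactly once and deleted at most once over the entire run, the total number of pop-and-recompute steps is at most $n$, so the aggregate running time is $O(n)$; storing $\{d_j,\alpha_j,D_j\}$ uses $O(n)$ space, and the final normalization plus recovery of all $\alpha_j$ from $\alpha_1$ is a single linear pass.

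The step I expect to be the main obstacle is making the decoupling-plus-greedy argument fully rigorous: showing that solving the first-order conditions sequentially actually attains the \emph{global} maximum of $P_1$ over the ordered-breakpoint region, rather than merely a coordinatewise-stationary point. The clean factorization of $\alpha_{j-1}$ and the concavity lemmas should do most of the work, but I would need to argue carefully that the deletion and merge operations never discard the true optimum and that the monotonicity constraints $d_{j+1}\le d_j\le d_{j-1}$ interact consistently with the stationary points across the entire sweep. The amortized linear-time claim is then the secondary technical point, hinging on charging each recomputation to a distinct shop deletion so that repeated merges cannot blow up the total cost.
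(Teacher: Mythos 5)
Your overall plan mirrors the paper's: fix $\alpha_1=1$, maximize the unnormalized mass $\Omega=P_1$, sweep backwards using the decoupling through $D_j$ and the concavity lemmas, delete shops whose breakpoints collapse, and charge each recomputation to a deletion for the linear bound. The complexity half of your argument is correct and is essentially the paper's (each shop is deleted at most once, so $ComputingBP$ is invoked $O(n)$ times, and $O(n)$ space suffices). The gap is in the correctness half, at exactly the step you flag as the ``main obstacle'': the claim that coordinatewise stationarity together with strict concavity \emph{in each coordinate separately} upgrades to a global maximizer. That inference is false in general --- for instance $f(x,y)=2xy-x^2/2-y^2/2$ is strictly concave in each variable and has a coordinatewise stationary point at the origin which is a saddle, not a maximum --- and neither your proposal nor the paper establishes \emph{joint} concavity of $P_1$ in $(d_2,\dots,d_n)$. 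So ``the concavity lemmas should do most of the work'' does not close the argument; it only yields the necessary first-order conditions that the paper records just before Lemma~\ref{lemma:invoc} (if $d_{j-1}>d_j$ then $P'_{j-1}(d_j)\le 0$; if $d_{j+1}<d_j$ then $P'_{j-1}(d_j)\ge 0$).

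What the paper actually does to close this hole is Lemma~\ref{lemma:invoc}: an induction over invocations of $ComputingBP$ asserting that after $ComputingBP(j)$ completes, the temporary breakpoints $(td_n,\dots,td_j)$ agree with the optimum whenever $d_j^*<d_{next[j]}^*$, and that every deletion is correct (i.e.\ $d_j^*=d_{j+1}^*$). Its proof is not a soft ``stationarity plus concavity'' argument: it compares the algorithm's output against an \emph{arbitrary} feasible competitor $\mathbf{d}$ by moving breakpoints one at a time along two interpolating sequences $\mathbf{td}^{(\eta)}$ and $\mathbf{d}^{(\eta)}$, tracking how the constants $\alpha_\eta$ compare (monotonically, via Lemma~\ref{lemma:AlphaRelation} and the sign of $r_{next[\eta]}/b_{next[\eta]}-r_\eta/b_\eta$) and invoking concavity only of the single-variable functions $S_{next[\eta]}(\cdot)$ at each step of the movement; it also handles the clamped solution $td_\tau^*=\min\{td_\tau,d^*\}$ when an upper bound fixed later in the sweep binds --- the same clamping reappears as Case 2 ($B=d_1^*=d_{prev[1]}^*$, truncation at $B$) in the theorem's own proof, a boundary case your proposal does not address. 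Without this lemma or a substitute global argument, your plan shows that the algorithm's output satisfies the KKT-type conditions, but not that it is the unique global optimum.
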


We first show how to construct our algorithm, and analyze the correctness and the complexity of our algorithm later.

Since delete operations may be executed frequently in the algorithm, we use a linked list to store the shop info. Each shop is an element in this linked list and the shop index decreases when we traverse from the head to the tail. So the head is shop $n$ and the tail is shop $1$. Considering that the shops appear in the form of linked list in the algorithm, we rewrite some equations we may use in the algorithm:
\begin{eqnarray}
\label{CD}
D_j&=& \frac{\alpha_{prev[j]}}{\alpha_{j}}(D_{prev[j]}+\int_{0}^{d_{prev[j]}}\exp(\frac{r_{prev[j]}x}{b_{prev[j]}})dx)\nonumber\\ &&-\int_{0}^{d_{prev[j]}}\exp(\frac{r_jx}{b_j})dx \nonumber \\
&=&\frac{\alpha_{prev[j]}}{\alpha_{j}}D_{prev[j]}-\frac{b_j}{r_j}(\exp(\frac{r_j d_{prev[j]}}{b_j})-1) \nonumber\\
&&+\frac{\alpha_{prev[j]}b_{prev[j]}}{\alpha_{j}r_{prev[j]}}(\exp(\frac{r_{prev[j]}d_{prev[j]}}{b_{prev[j]}})-1)
\end{eqnarray}
Here $\frac{\alpha_{prev[j]}}{\alpha_{j}}$ is represented as follow:
$$\frac{\alpha_{prev[j]}}{\alpha_{j}} =\frac{b_j}{b_{prev[j]}}e^{(\frac{r_j}{b_j} -\frac{r_{prev[j]}}{b_{prev[j]}})d_{prev[j]}}$$
\begin{eqnarray}
\label{BP}
d_j=\frac{b_j}{r_j} \ln(\frac{(b_{next[j]}r_j-b_jr_{next[j]})(1-D_jr_j/b_j)}{b_j(r_j-r_{next[j]})})
\end{eqnarray}
Here is the pseudocode of our algorithm.:
\begin{algorithm}[htb]
\caption{MSR Algorithm}
\label{alg:CompNoEF}
\begin{algorithmic}[1]
\STATE $D_n\leftarrow 0$;
\FOR {$j\leftarrow 1 \text{ to } n$}
\STATE $next[j]\leftarrow j-1$;
\STATE $prev[j]\leftarrow j+1$;
\ENDFOR
\FOR {$j\leftarrow n \text{ to } 2$}
\STATE $ComputingBP(j)$;
\ENDFOR
\FOR {$j\leftarrow n \text{ to } 2$}
\IF {$d_j\ne$"decide later"}
\IF {$d_j>B$}
\STATE $d_j\leftarrow B$;
\ENDIF
\ELSE
\STATE $d_j\leftarrow d_{j+1}$;
\ENDIF
\ENDFOR
\end{algorithmic}
\end{algorithm}
\begin{algorithm}[htb]
\caption{Function $ComputingBP(j)$}
\begin{algorithmic}[1]
\IF {$j\ne n$}
\STATE Update $D_j$ according to (\ref{CD});
\ENDIF
\IF {$D_j\ge b_j/r_j$}
\STATE $d_j\leftarrow$"decide later";
\STATE $next[prev[j]]\leftarrow next[j]$;
\STATE $prev[next[j]]\leftarrow prev[j]$;
\STATE $ComputingBP(prev[j])$;
\ELSE
\STATE Compute $d_j$ according to (\ref{BP});
\IF {$d_j\le d_{j+1}$}
\STATE $d_j\leftarrow$"decide later";
\STATE $next[prev[j]]\leftarrow next[j]$;
\STATE $prev[next[j]]\leftarrow prev[j]$;
\STATE $ComputingBP(prev[j])$;
\ENDIF
\ENDIF
\end{algorithmic}
\end{algorithm}

Though we may revise those breakpoints for many times when running the algorithm, it will still lead to the exact optimal solution at the end. Since the feasible solution of $(d_2,d_3,\cdots,d_n)$ is convex and functions $P_{j-1}(\cdot)$ are always concave, we have the following properties for the optimal solution:

 If $d_{j-1}>d_j$, $P'_{j-1}(d_j)\le 0$;
 if $d_{j+1}<d_j$, $P'_{j-1}(d_j)\ge 0$.

So in our computation method, we delete a shop when and only when the shop should be deleted in the optimal solution. Notice that Line $5,6,7$ and Line $15,16,17$ are what we actually do when we say we delete shop $j$. We say a shop is \emph{alive} if it has not been deleted. Based on the following lemma, we rigorously prove the correctness and complexity of this algorithm.

\begin{lemma}
After an invocation of $ComputingBP(j)$ is completed, the temporary breakpoints, whose indexes are less than or equal to $j$, $\mathbf{td}=(td_n,td_{n-1},\cdots,td_j)$ are identical with the optimal solution $\mathbf{d^*}=(d_n^*,d_{n-1}^*,\cdots,d_j^*)$ if $d_j^*<d_{next[j]}^*$. And all the deletions are correct, i.e., once a shop $j$ is deleted in the algorithm, $d_j^*$ must be equal to $d_{j+1}^*$.
\label{lemma:invoc}
\end{lemma}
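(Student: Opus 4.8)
The plan is to prove Lemma~\ref{lemma:invoc} by strong induction following the recursion of $ComputingBP$, using three facts already in hand: the first-order optimality conditions stated just above the lemma ($d_{j-1}^*>d_j^*\Rightarrow P'_{j-1}(d_j^*)\le 0$ and $d_{j+1}^*<d_j^*\Rightarrow P'_{j-1}(d_j^*)\ge 0$), the strict concavity of every $P_{j-1}$ from Lemmas~\ref{lemma:dnconcave} and~\ref{lemma:GConcave}, and the uniqueness of the optimum that strict concavity over the convex feasible region yields. Strict concavity turns the first-order conditions into an exact rule: if shop $j$ survives in the optimum, i.e. $d_{j+1}^*<d_j^*<d_{j-1}^*$, then $d_j^*$ is the unique zero of $P'_{j-1}$, whereas if $P'_{j-1}$ has no feasible zero the relevant ordering constraint must be tight in $d^*$. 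Establishing these two regimes for each processed shop, and matching them to the two branches of $ComputingBP$, is the whole content of the lemma.

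First I would fix the invariant carried down the main loop. Processing shops in decreasing index, I claim that just before $ComputingBP(j)$ is entered, the temporary breakpoints and the quantities $D_i$ of all currently alive shops $i>j$ already equal the optimal breakpoints of the reduced instance in which every deleted shop is removed. The point of this invariant is that $D_j$ in~(\ref{CD}) is built only from the breakpoints of alive shops lying above $j$; hence it makes the $D_j$ computed inside the call equal to its correct value, and therefore makes the $P_{j-1}$ that the algorithm implicitly optimizes coincide with the true objective of the reduced instance. The hypothesis $d_j^*<d_{next[j]}^*$ in the lemma is exactly what guarantees that the alive lower neighbor $next[j]$ is genuinely present, so that the interior characterization of $d_j^*$ (and not a collapsed boundary) is the one in force.

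With the invariant in place I would analyze the branches of $ComputingBP(j)$. In the compute branch ($D_jr_j/b_j<1$ and the value from~(\ref{BP}) exceeds $d_{j+1}$), Lemma~\ref{lemma:GConcave} gives strict concavity, so~(\ref{BP}) returns the unique root of $P'_{j-1}=0$; the invariant together with $d_j^*<d_{next[j]}^*$ identifies this root with $d_j^*$. In the delete branch I would invoke the dichotomy of Lemma~\ref{lemma:GConcave} directly: when $D_jr_j/b_j\ge 1$ we have $P'_{j-1}<0$ throughout, so the first-order condition forbids $d_{j+1}^*<d_j^*$ and forces $d_j^*=d_{j+1}^*$; when instead the stationary point lies at or below $d_{j+1}$, strict concavity pushes the constrained maximum to the left endpoint, again giving $d_j^*=d_{j+1}^*$. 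This is precisely the second claim, that every deletion is correct. After a deletion the call $ComputingBP(prev[j])$ recomputes the breakpoint of the shop now sitting above the gap against its new neighbor $next[j]$; since erasing shop $j$ leaves the values $D_i$ for all $i\ge prev[j]$ correct, the invariant is reinstated for the smaller alive set and the induction propagates through the resulting cascade.

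The step I expect to be hardest is exactly this upward, self-referential interaction: the decision to delete shop $j$ is made using the neighbor value $d_{j+1}$ that was computed while $j$ was still assumed present, yet the deletion changes the instance that $d_{j+1}$ was optimizing. The crux is to show that collapsing a correctly deleted shop does not disturb the optimality already certified above it, so that the reduced optimum embeds consistently into the single global optimum $d^*$. I would close this by two observations: the number of deletions is at most $n$, so the cascade terminates; and because, with all other breakpoints fixed, each $d_j$ enters the objective only through $P_{j-1}$, the stationary conditions certified for the alive shops together with the tight-constraint conditions certified for the deleted shops are collectively the Karush--Kuhn--Tucker conditions of the one convex program, which by uniqueness pin down $d^*$. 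Assembling the per-branch claims along the recursion then yields both assertions of Lemma~\ref{lemma:invoc}.
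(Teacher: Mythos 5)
Your proposal reproduces the easy half of the paper's argument but has a genuine gap at exactly the point you yourself flag as hardest. The induction skeleton, the treatment of the compute branch (unique zero of $P'_{j-1}$ under strict concavity), and the delete branch ($D_jr_j/b_j\ge 1$ forces $P'_{j-1}<0$, hence $d_j^*=d_{j+1}^*$; stationary point at or below $d_{j+1}$ forces the boundary) all match the paper's Case 2 and the $j=n$ base case. The gap is your closing step: you assert that the stationarity conditions for alive shops plus tightness conditions for deleted shops ``are collectively the Karush--Kuhn--Tucker conditions of the one convex program, which by uniqueness pin down $d^*$.'' But the program is not known to be convex. Lemmas~\ref{lemma:dnconcave} and~\ref{lemma:GConcave} establish only \emph{coordinate-wise} (quasi-)concavity of $P_{j-1}$ in the single variable $d_j$ with all other breakpoints held fixed; the objective $\Omega(d_2,\dots,d_n)$ is a sum of products of exponentials (through the recursively defined $\alpha_\tau$ and $D_j$) and is not shown to be jointly concave. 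Coordinate-wise concavity does not make coordinate-wise KKT points globally optimal, nor does it give uniqueness --- consider $f(x,y)=xy$, coordinate-wise concave with a saddle at the origin. So your appeal to KKT sufficiency and to ``uniqueness from strict concavity over the convex feasible region'' is unsupported, and this is precisely what the lemma needs, since the deletion cascade changes the instance that earlier-certified breakpoints were optimizing.

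What the paper does instead, and what your proof is missing, is an explicit exchange argument (its Case 1) proving a stronger \emph{truncation} property: when the constraint binds, i.e. $d_{next[j]}^*\le td_j$, the optimal solution for the alive shops is $td_\tau^*=\min\{td_\tau,\,d_{next[j]}^*\}$. This is shown by moving breakpoints one at a time along two sequences $\mathbf{td}^{(\eta)}$ and $\mathbf{d}^{(\eta)}$ (from the temporary solution toward the truncation, respectively toward an arbitrary competitor), and tracking two monotone quantities across the moves: the truncated solution keeps $\alpha_{next[\eta]}$ no larger (via Lemma~\ref{lemma:AlphaRelation}, since $r_{next[\eta]}/b_{next[\eta]}-r_\eta/b_\eta<0$) while keeping the accumulated probability $P_{next[j]}$ no smaller (via single-variable concavity of $S_{next[\eta]}=P_{next[\eta]}/\alpha_{next[\eta]}$). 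This hand-crafted monotonicity argument is the substitute for joint concavity; without it, or some equivalent global-optimality certificate, your induction does not close, because your ``invariant'' presumes the reduced instance's optimum coincides with the temporary breakpoints even in the regime where a later-binding constraint forces them all to be truncated.
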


Here $\mathbf{td}=(td_n,td_{n-1},\cdots,td_j)$ are the temporary values of $d_n,d_{n-1},\cdots,d_j$ just after this invocation, $d_n^*,d_{n-1}^*,\cdots,d_2^*$ are the optimal breakpoints, and $next[\cdot]$ and $prev[\cdot]$ denote the current state of the linked list, not the eventual result.


\begin{proof} (Theorem~\ref{theorem:MSRalg})
We first show the correctness of the algorithm. According to Lemma~\ref{lemma:invoc}, we know that all the deletions are correct. Also, we know that $\forall j_1,j_2$ such that $1<j_1<j_2$, and that shop $j_1$ and shop $j_2$ are alive, $td_{j_1}>td_{j_2}$ when the algorithm terminates. There are 2 cases:

Case 1: $B=d_1^*>d_{prev[1]}^*$, the final solution is the unique optimal solution by Lemma~\ref{lemma:invoc}.

Case 2: $B=d_1^*=d_{prev[1]}^*$. Similar to the proof of Case 1 in Lemma~\ref{lemma:invoc}, the solution of the alive breakpoints $\mathbf{td^*}=(td_n^*,td_{next[n]}^*,\cdots,td_{prev[1]}^*,d_1^*)$, satisfying that $\forall \tau\in[n], td_\tau^*=\min\{td_\tau, d_{1}^*\}$, is the unique optimal solution.

Next, we analyze the complexity of the algorithm. Obviously, the space complexity is $O(n)$. For the time complexity, we just need to prove that this algorithm invokes the function $ComputingBP$ for $O(n)$ times. The main function invokes $ComputingBP$ for $O(n)$ times. And notice that a shop is deleted once $ComputingBP$ is invoked recursively by $ComputingBP$. The algorithm can delete $O(n)$ shops at most, therefore the total number of the invocations is $O(n)$.
\end{proof}

\subsection{Optimal Strategy of Nature}
Now we briefly show the optimal strategy $\mathbf{q^*}$ of nature in the following theorem:

\begin{theorem}
\label{theorem:nature}
Suppose that the probability for nature of choosing $y=B$ (actually $y=+\infty$) is $q_B^*$. The optimal strategy $q^*$ of nature satisfies the following properties:
\begin{enumerate}
\item[(a)] $\forall j>1$, we have $\int_{d_j^{*-}}^{d_j^{*+}}q^*(y)\ud y=0$.
\item[(b)] $q^*(y)=\beta_j y e^{-\frac{r_j}{b_j}y},\quad\quad y\in(d_{j+1}^*,d_j^*)$.
\item[(c)] $\beta_1=(q_B r_1/B b_1)e^{\frac{r_1}{b_1}B}$.
\item[(d)] $\frac{b_j}{r_j}\beta_j e^{-\frac{r_j}{b_j}d_j^*}=\frac{b_{j-1}}{r_{j-1}} \beta_{j-1} e^{-\frac{r_{j-1}}{b_{j-1}}d_j^*},\quad \quad j=2,3,\ldots,n$.
\end{enumerate}
\end{theorem}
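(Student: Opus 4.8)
The plan is to exploit the zero-sum (saddle-point) structure of the game, using Theorem~\ref{theorem:MSR}(a) (nature is indifferent among all stopping times against $\mathbf{p}^*$) together with its dual: $\mathbf{p}^*$ is a best response to $\mathbf{q}^*$. Concretely, since $(\mathbf{p}^*,\mathbf{q}^*)$ is an optimal pair with value $\lambda$, every pure consumer action $(j,x)$ satisfies $R_j(x)\triangleq\int_0^B \frac{c_j(x,y)}{r_1 y}\,q^*(y)\,\ud y\ge\lambda$, with equality exactly on the support of $\mathbf{p}^*$. By Theorem~\ref{theorem:MSR}(b) that support is $x\in(d_{j+1}^*,d_j^*)$ for shop $j$, so I obtain the indifference identity $R_j(x)=\lambda$ on each such interval. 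Splitting nature's stopping time into the rent region $y<x$ and the buy region $y\ge x$, and treating the atom $q_B$ at $y=B$ as a boundary term, this reads $\lambda=\frac{r_j}{r_1}\int_0^x q^*(y)\,\ud y+\frac{r_j x+b_j}{r_1}\big(\int_x^B \frac{q^*(y)}{y}\,\ud y+\frac{q_B}{B}\big)$.

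Writing $\tilde H(x)\triangleq\int_x^B \frac{q^*(y)}{y}\,\ud y+\frac{q_B}{B}$, I would differentiate this identity once in $x$; the terms collapse (exactly as in the derivation of $p_j^*$ from equation (\ref{ratioRelation:MSR2})) to the first-order relation $\tilde H(x)=\frac{b_j}{r_j x}q^*(x)$. Differentiating a second time and using $\tilde H'(x)=-q^*(x)/x$ yields the separable ODE $\frac{(q^*)'(x)}{q^*(x)}=\frac{1}{x}-\frac{r_j}{b_j}$, whose solution is precisely $q^*(y)=\beta_j\,y\,e^{-r_j y/b_j}$ on $(d_{j+1}^*,d_j^*)$, giving property (b). Property (c) then falls out of the first-order relation at the right endpoint $x\to B^-$: since $\tilde H(B^-)=q_B/B$ and $q^*(B^-)=\beta_1 B e^{-r_1 B/b_1}$, the relation for $j=1$ gives $\frac{q_B}{B}=\frac{b_1}{r_1 B}\beta_1 B e^{-r_1 B/b_1}$, i.e.\ $\beta_1=(q_B r_1/B b_1)e^{r_1 B/b_1}$. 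For property (d) I would use that $\tilde H$ is continuous across each interior breakpoint $d_j^*$ (being an integral plus a constant), and evaluate the first-order relation on each side: the shop-$j$ side gives $\tilde H(d_j^*)=\frac{b_j}{r_j}\beta_j e^{-r_j d_j^*/b_j}$ and the shop-$(j-1)$ side gives $\tilde H(d_j^*)=\frac{b_{j-1}}{r_{j-1}}\beta_{j-1}e^{-r_{j-1}d_j^*/b_{j-1}}$; equating the two is exactly (d).

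Property (a)---no atom at an interior breakpoint---I would prove by contradiction, leaning on the off-support inequality $R_j(x)\ge\lambda$. Suppose $q^*$ placed mass $m>0$ at some $d_j^*$ with $j>1$. As the shop-$j$ buying time $x$ crosses $d_j^*$ from below, this mass migrates from the buy region (contribution $\frac{r_j x+b_j}{r_1 d_j^*}m$) into the rent region (contribution $\frac{r_j}{r_1}m$), so $R_j(x)$ jumps downward by $\frac{m b_j}{r_1 d_j^*}>0$ while its continuous part stays continuous. Since $R_j(d_j^{*-})=\lambda$ by indifference, this forces $R_j(d_j^{*+})<\lambda$, contradicting $R_j\ge\lambda$. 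The identical computation at $d_1^*=B$ yields no contradiction, because the consumer cannot buy at any $x>B$ (such actions are dominated by Lemma~\ref{lemma:MSRstspace}); this is exactly why the atom $q_B$ at $B$ is permitted while all interior atoms are forbidden.

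I expect the main obstacle to lie in the foundations surrounding the indifference identity rather than in the calculus. I must argue rigorously that the game admits a value $\lambda$ and a saddle point, that the consumer's best-response payoff equals $\lambda$ on $\mathrm{supp}(\mathbf{p}^*)$ and is $\ge\lambda$ off it, and that the resulting distribution is differentiable on each open sub-interval so that both differentiations are legitimate. The atom bookkeeping in part (a)---correctly tracking which region the mass at $d_j^*$ belongs to as $x$ crosses it, and thus pinning down the sign of the jump---also needs care, as does verifying that the constants $q_B$ and $\{\beta_j\}$ determined by (c), (d) and normalization indeed yield a genuine probability distribution.
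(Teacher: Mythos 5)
Your proposal cannot be checked against the paper's own argument for Theorem~\ref{theorem:nature}, because the paper gives none: the theorem is introduced with ``we briefly show,'' followed only by a remark on computing $\beta_j/q_B$ and normalizing, and the appendix contains no corresponding section. So you are supplying a proof where the paper has a hole, and the core of what you propose is correct. I verified the calculus: against $q^*$ the action $(j,x)$ yields $R_j(x)=\frac{r_j}{r_1}\int_0^x q^*(y)\,\ud y+\frac{r_jx+b_j}{r_1}\bigl(\int_x^B\frac{q^*(y)}{y}\,\ud y+\frac{q_B}{B}\bigr)$; differentiating the indifference identity once does collapse to $\tilde H(x)=\frac{b_j}{r_jx}\,q^*(x)$, differentiating again gives $(q^*)'/q^*=1/x-r_j/b_j$ and hence property (b); the endpoint evaluation at $x\to B^-$ gives exactly (c); continuity of $\tilde H$ across $d_j^*$ (which requires (a)) gives exactly (d); and the downward jump of size $mb_j/(r_1d_j^*)$ in $R_j$ across an interior atom gives (a), with the argument correctly failing at $d_1^*=B$ since actions $x>B$ were removed by Lemma~\ref{lemma:MSRstspace}. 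This is the natural dual of the paper's primal analysis: where the paper proves equalization of the consumer's ratio (Lemma~\ref{lemma:MSRconstant}) by perturbation and then differentiates twice to get $p_j^*=\alpha_je^{r_jx/b_j}$, you run the mirror-image computation on nature's side.

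The gap you flag in your final paragraph is, however, a genuine one and not a formality, because it is the only place where your proof touches something the paper never establishes. Your entire derivation rests on: the game having a value $\lambda$, nature possessing an optimal strategy attaining it, and complementary slackness ($R_j(x)\ge\lambda$ for every action, with equality on the support of $\mathbf{p}^*$). The paper's analysis is entirely one-sided (the consumer's minimax problem), and for an infinite zero-sum game with discontinuous payoffs and non-compact strategy spaces one cannot simply cite a minimax theorem; likewise the two differentiations need a regularity bootstrap (no interior atoms makes $Q(x)=\int_0^xq^*$ continuous, the indifference identity then forces $\tilde H$ continuous, the a.e.\ first-order relation lets you pick a continuous version of $q^*$, and only then is the second differentiation legitimate). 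The clean repair, consistent with the paper's level of rigor, is to invert your argument into a verification: define $q^*$ by (b)--(d) plus normalization, with the breakpoints $\mathbf{d}^*$ from Algorithm~\ref{alg:CompNoEF}, then check directly (your first-order relation run backwards) that $R_j(x)=\lambda$ for $x\in(d_{j+1}^*,d_j^*)$ and $R_j(x)\ge\lambda$ elsewhere. Combined with Theorem~\ref{theorem:MSR}, which gives $\sup_y C(\mathbf{p}^*,y)/(r_1y)=\lambda$, this simultaneously proves the value exists, certifies both $\mathbf{p}^*$ and $q^*$ as optimal, and establishes the stated formulas -- without assuming any saddle point and without differentiating an unknown measure.
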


Recall that $\mathbf{d^*}$ is computed by Algorithm~\ref{alg:CompNoEF}. Thus $\forall j\in[n]$, we can compute $\beta_j/q_B$ and we can obtain $1/q_B$ by computing the sum $\sum_{\tau=1}^{n}\int_{d_\tau}^{d_{\tau+1}}q(y)/q_B \ud y$. Since the sum of probability is 1, i.e., $\int_{0}^{B} q(y) \ud y=1$, it is not hard to work out $\mathbf{q}^*$ after normalization.

\subsection{Including Switching Cost}

In this subsection, we consider an extension, the \emph{multi-shop ski rental with switching cost problem}(also \texttt{MSR-S}), in which we allow the consumer switches from one shop to another, but with some extra fee to be paid. At any time, suppose the consumer chooses to switch from shop $i$ to shop $j$, she has to pay for an extra switching cost $c_{ij}\ge 0$. If there exists some $i \neq j$, such that $c_{ij} = +\infty$, then the consumer cannot \emph{directly} switch from shop $i$ to shop $j$.
Consider the following 2 cases:
\begin{itemize}
\item If the consumer is allowed to switch from shop to shop freely, i.e. the switching cost is always 0, she will optimally rent at the shop with the lowest renting price and buy at the shop with the lowest buying price. All that she concerns is when to buy the skies. Thus, this problem (\texttt{MSR-S}) is reduced to the basic ski rental problem (\texttt{SR}).
\item If the switching cost is always $+\infty$, she will never switch to another shop and the \texttt{MSR-S} becomes \texttt{MSR}.
\end{itemize}
We will prove that the consumer never switches between shops even in  \texttt{MSR-S} later.

The settings of \texttt{MSR-S} can be viewed as a directed graph $G=(V,A)$, where $V = \{1,\cdots,n\}$, and $A = \{(i,j): c_{ij}<+\infty\}$. Each arc $(i,j)\in A$ has a cost $c_{ij}$. We define a path $\mathbf{p} \subseteq G$ as a sequence of arcs. Define the cost of $\mathbf{p}$ as the summation of the costs of all arcs on $\mathbf{p}$. Note that if the consumer is allowed to switch from shop $i$ to shop $j$ $(i\neq j)$, there must be a path $\mathbf{p}$ which starts at $i$ and ends at $j$.

It is clear that if the consumer decides to switch from shop $i$ to shop $j$ ($(i,j) \in A$) at any time, she will choose the shortest path from $i$ to $j$ in the graph $G$. Denote the cost of the shortest path from $i$ to $j$ by $c_{ij}^*$. We obtain that $c_{ij}^* \leq c_{ij}$, for any $(i,j) \in A$. Moreover, for any different $i, j , k\in V$ such that $(i,j), (j,k), (i,k) \in A$, we have:
\begin{equation} \label{relation:MSR-S}
c_{ik}^* \leq c_{ij}^*+c_{jk}^*
\end{equation}

Comparing to \texttt{MSR}, \texttt{MSR-S} has a much richer action set for the consumer, where the consumer is able to choose where to rent and for how long to rent at that shop.

Although we allow the consumer to switch from shop to shop as many times as she wants, the following lemma shows that the consumer will never choose to switch to another shop and continue renting, i.e. the only moment that the consumer will switch is exactly when she buys the skis.

\begin{lemma} \label{lemma:MSR-S}
Any strategy which includes switching from one shop to another shop and continuely renting the skis is dominated.
\end{lemma}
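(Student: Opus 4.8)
The plan is to prove the lemma by producing, for any strategy that contains a switch followed by a nonzero renting interval, a single competing strategy that costs no more against \emph{every} stopping time $y$ and strictly less against some $y$; this is precisely domination, so one explicit modification suffices. First I would reduce to pure strategies and replace each pairwise switching cost by the shortest-path cost $c_{ij}^*$, which is a metric obeying the triangle inequality (\ref{relation:MSR-S}). A pure strategy is then a free choice of initial shop, a piecewise-constant ``renting shop'' trajectory whose switches are paid for along the way, and a final purchase. The organizing quantity will be $\ell$, the shop with the \emph{smallest} renting rate among all shops at which the consumer actually rents, and $\tau$, the first time she is renting at $\ell$. I would then split into the cases $\tau>0$ and $\tau=0$.

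If $\tau>0$, the consumer does not start at the cheapest renting shop she ever uses, and I would compare her strategy against the one that starts at $\ell$, rents there throughout $[0,\tau]$, and then follows the original strategy verbatim from time $\tau$ on (the two trajectories agree at $\tau$, where the original one is also at $\ell$). Since on $[0,\tau]$ the original trajectory rents at rates $\ge r_\ell$ and pays nonnegative switching costs, whereas the modified one rents at rate $r_\ell$ for free, the modified strategy is no costlier for any $y$ and strictly cheaper for every $y>0$ (it rents strictly below the original rate on a set of positive measure before $\tau$, and saves the switch into $\ell$). Because the initial shop is chosen \emph{for free}, this comparison survives even for very small $y$, which is exactly where incurring a switching cost early would otherwise hurt.

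If $\tau=0$ she starts at $\ell$, so the \emph{first} switch is necessarily an up-switch to a strictly higher renting rate; and since a switch-and-rent exists by hypothesis, this first switch must be followed by a positive renting interval (otherwise it would be a switch-and-buy and no switch-and-rent could occur). Here I would \emph{postpone} that first switch to the next event: keep renting at $\ell$ until the time of the subsequent switch or purchase, perform the switch there, and collapse two consecutive switches into one direct switch using the triangle inequality (\ref{relation:MSR-S}) when needed. A direct computation then shows the cost is unchanged for $y$ before the original switch time and strictly smaller afterwards, since over the postponed interval the consumer pays the cheaper rate $r_\ell$ rather than a higher rate while never paying more in switching cost.

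In both cases the constructed strategy weakly dominates and strictly improves on some $y$, which establishes the lemma. The main obstacle is not any single algebraic manipulation but rather ensuring that domination holds \emph{simultaneously for all stopping times} $y$, and in particular for small $y$, where an earlier-incurred switching or entry cost could a priori make a rerouted strategy worse; the two structural facts that the initial shop is free and that switching costs are nonnegative and satisfy the metric inequality (\ref{relation:MSR-S}) are exactly what rule this out.
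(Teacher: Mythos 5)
Your proof is correct, and it relies on the same two elementary surgeries as the paper's proof---postponing a switch to the next switching/buying event, and replacing an initial segment of the action by renting at a cheaper shop from time zero (which is free, since \texttt{MSR-S} has no entry fees)---together with the triangle inequality (\ref{relation:MSR-S}) to absorb consecutive switches. What differs is the case decomposition. The paper looks only at the \emph{first} switching block, say from shop $i$ ending at shop $k$ at time $x_0$ (simultaneous switches consolidated into one), and branches on the local comparison of rates: if $r_i<r_k$ it postpones that switch to the next event, and if $r_i\ge r_k$ it starts at $k$ from time $0$, deletes the switch, and follows the original action after $x_0$. You branch instead on a global quantity: whether the action starts at the cheapest shop $\ell$ it ever rents in. Your $\tau>0$ surgery is more global than anything in the paper---it replaces the whole prefix up to the first visit to $\ell$, jumping over possibly many switches at once, and yields strict improvement at \emph{every} $y>0$---while your $\tau=0$ case coincides with the paper's Case 1. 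Both routes deliver the lemma; the paper's local split keeps the bookkeeping to a single switch, whereas yours gives a cleaner invariant (after the surgery the action starts at its cheapest rented shop). One small inaccuracy to flag: in your $\tau=0$ case, the parenthetical claim that the first switch ``must be followed by a positive renting interval, otherwise it would be a switch-and-buy'' is not literally true, since the first switch can be followed by another switch at the same instant whose target \emph{is} subsequently rented in; but the collapsing step you already invoke (consolidate simultaneous switches via (\ref{relation:MSR-S}) before postponing) repairs this, exactly as the paper does when it treats all switches at time $x_0$ as one big switching operation.
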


This lemma significantly reduces the action set of the consumer to the same one as \texttt{MSR}. Thus, if the consumer considers to switch, she must switch to another shop at the buying time. Further, she chooses a shop such that the sum of the buying cost and the switching cost (if any) is minimized. Once the consumer decides to switch for buying, she will switch at most once, since the buying cost only increases otherwise. Therefore, for any strategy, suppose $s$ is the shop in which the consumer rents the skis right before the buying time, we define the buying price of $s$ as follows:
\begin{displaymath}
b_s' = \min\{b_s, b_{j}+c_{sj}, \forall j \neq s\}
\end{displaymath}
Observe that once $s$ is settled, $b_s'$ is settled. As a result, \texttt{MSR-S} is reduced to \texttt{MSR}, in which for any shop $j$,  the rent is still $r_j$ per unit time while the buying price $b'_j$ is $\min \{b_j, \min_{i\neq j}\{b_{i}+c_{ji}\}\}$.
\section{Ski rental with ENTRY FEE}

In this section, we discuss another extension of \texttt{MSR}, the \emph{multi-shop ski rental with entry fee included problem} (\texttt{MSR-E}). In this problem, all the settings are the same to those of \texttt{MSR}, except that each shop has an entry fee. Once the consumer enters a shop, she pays for the entry fee of this shop and cannot switch to another shop. Our goal is to minimize the worst case competitive ratio. Notice that \texttt{MSR} can be viewed as a special case of \texttt{MSR-E} in which the entry fee of each shop is zero.

We introduce this problem not only as an extension of \texttt{MSR}, but more importantly, as a necessary step to solve a more general extension, the (\texttt{MSR-ES}) problem in next section. We will show that \texttt{MSR-ES} can be converted into \texttt{MSR-E} with minor modifications.

\subsection{Single Shop Ski Rental with Entry Fee}
We start by briefly introducing the special case of \texttt{MSR-E} when $n=1$. The entry fee, renting price and buying price are supposed to be $a\ge 0$, $r>0$ and $b>0$. Without loss of generality, we assume that $r=1$.

It can be verified that
\vspace{-1mm}
\begin{enumerate}
\item[(i)] Using dominance, the buying time of the consumer $x\in[0,b]$, and the stopping time chosen by nature $y\in(0,b]$.
\vspace{-2mm}
\item[(ii)] For all $ y\in(0,b]$, the ratio is a constant if the consumer chooses the optimal mixed strategy.
    \vspace{-2mm}
\item[(iii)] No probability mass appears in $(0,b]$.
\end{enumerate}

By calculation, we obtain the following optimal mixed strategy:
\vspace{-2mm}
\begin{itemize}
\item The probability that the consumer buys at time $x=0$ is
$p_0=a/((a+b)e-b)$.
\vspace{-2mm}
\item The probability density function that the consumer buys at time $x\in(0,b]$ is
$p(x)=\frac{\exp(x/b)}{b(e-\frac{b}{a+b})}$.
\vspace{-2mm}
\item The competitive ratio is $\frac{e}{e-\frac{b}{a+b}}$.
\end{itemize}

Note that the biggest difference from \text{MSR} is that $p_0$ may be  probability mass, which means that the consumer may have non-zero probability to buy at the initial time.
\subsection{Analysis of MSR-E}

In this problem, assume that there are $n$ shops in total. For any shop $j\in [n]$, the entry fee, renting price and buying price of shop $j$ are $a_j\ge 0$, $r_j>0$ and $b_j>0$, respectively. Similar to the procedures in \texttt{MSR}, we use a tuple $(j,x)$ in which $j\in[n],x\in[0,+\infty)\cup\{+\infty\}$ to denote an action for the consumer and a number $y\in(0,+\infty)\cup\{+\infty\}$ for nature.

Without loss of generality, in this problem, we assume that
\begin{itemize}
\item $r_1\le r_2\le\cdots\le r_n$;
\item $\forall i,j,~a_i<a_j+b_j$;
\item $\forall i<j$, $a_i>a_j$ or $a_i+b_i>a_j+b_j$.
\end{itemize}
The second condition is because that shop $i$ is dominated by shop $j$ if $a_i\ge a_j+b_j$. For the third condition, we know $r_i\le r_j$ since $i<j$. So shop $j$ is dominated by shop $i$ if we also have $a_i\le a_j$ and $a_i+b_i\le a_j+b_j$.

Denote $B$ as follows:
\begin{eqnarray*}
&\mathrm{minmize}& B\\
&\mathrm{subject~to}& \forall i,~a_i+Br_i\ge\min_j(a_j+b_j)
\end{eqnarray*}
Similar to Lemma~\ref{lemma:MSRstspace} in \texttt{MSR}, we reduce the action sets for both players by the following lemma:
\begin{lemma}
\label{lemma:EFstspace}
For nature, any action $y \in [B,+\infty)$ is dominated. For the consumer, any action $(j,x)$  is dominated, in which $x \in (B,+\infty)\cup\{+\infty\}, j \in [n]$.
\end{lemma}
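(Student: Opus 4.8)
The plan is to mirror the proof of Lemma~\ref{lemma:MSRstspace} almost line for line, the single new ingredient being a description of $\mathrm{OPT}(y)$ for large $y$ that invokes the defining optimization problem of $B$. First I would record the \texttt{MSR-E} cost, in which the entry fee $a_j$ is always incurred:
\begin{displaymath}
c_j(x,y) =
\begin{cases}
a_j + r_j y, & y < x\\
a_j + r_j x + b_j, & y \ge x.
\end{cases}
\end{displaymath}
As in \texttt{MSR}, for each fixed $(j,x)$ this is non-decreasing in $y$, being strictly increasing while nature stops before the buying time and flat afterwards.

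The key step is to show that $\mathrm{OPT}(y)$ is constant, equal to $b^* \triangleq \min_j(a_j + b_j)$, once $y \ge B$. For a known duration $y$ the offline optimum at shop $j$ is $a_j + \min\{r_j y, b_j\}$ (rent throughout, or buy at time $0$), so $\mathrm{OPT}(y) = \min_j\{a_j + \min\{r_j y, b_j\}\}$. Buying immediately at the cheapest shop gives $\mathrm{OPT}(y) \le b^*$. For the matching lower bound I would argue shop by shop: if $r_j y \ge b_j$ the shop-$j$ cost is $a_j + b_j \ge b^*$; if $r_j y < b_j$ then, since $y \ge B$, the defining constraint $a_j + B r_j \ge \min_i(a_i + b_i) = b^*$ gives $a_j + r_j y \ge a_j + r_j B \ge b^*$. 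Hence every shop costs at least $b^*$, so $\mathrm{OPT}(y) = b^*$ for all $y \ge B$. This is the only place the somewhat opaque definition of $B$ is used, and it is the main (essentially the only) obstacle; the rest is transcription.

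Given these two facts the lemma follows. For nature, fix any $(j,x)$: since $c_j(x,y)$ is non-decreasing in $y$ while $\mathrm{OPT}(y) = b^*$ is constant on $[B,+\infty)$, the ratio $c_j(x,y)/\mathrm{OPT}(y)$ is non-decreasing there and bounded above by its limit $(a_j + r_j x + b_j)/b^*$ as $y \to +\infty$; thus every finite $y \ge B$ is weakly dominated by the never-stopping action $y = +\infty$, leaving nature's effective strategies as $(0,B) \cup \{+\infty\}$. For the consumer, fix a shop $j$ and any $x' \in (B,+\infty)\cup\{+\infty\}$ and compare buying at $B$ against buying at $x'$ over exactly this reduced set: for $y \in (0,B)$ both actions are still renting, so $c_j(B,y) = c_j(x',y) = a_j + r_j y$, while at $y = +\infty$ we have $c_j(B,+\infty) = a_j + r_j B + b_j \le a_j + r_j x' + b_j = c_j(x',+\infty)$ (the right side being $+\infty$ when $x' = +\infty$). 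Therefore
\begin{displaymath}
c_j(B,y) - c_j(x',y) \le 0, \qquad \forall y \in (0,B) \cup \{+\infty\},
\end{displaymath}
so buying at $B$ weakly dominates buying at any $x' > B$ in the same shop. After both reductions the effective action sets become $\Psi_n = \{y \in (0,B]\}$ (with $y = B$ representing $+\infty$) and $x \in [0,B]$, exactly as in \texttt{MSR}.
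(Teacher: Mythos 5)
Your proposal is correct and follows essentially the same route as the paper, whose own proof is just a two-line sketch deferring to Lemma~\ref{lemma:MSRstspace}: dominate nature's $y\in[B,+\infty)$ by $y=+\infty$ using monotonicity of $c_j(x,\cdot)$ together with constancy of $\mathrm{OPT}$ beyond $B$, then dominate the consumer's $(j,x')$ with $x'>B$ by $(j,B)$ over the reduced set $(0,B)\cup\{+\infty\}$. Your explicit verification that $\mathrm{OPT}(y)=\min_j(a_j+b_j)$ for all $y\ge B$ (via the defining constraints of $B$) is exactly the detail the paper leaves implicit, and your care in restricting the consumer's comparison to nature's already-reduced action set handles the one subtlety correctly.
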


Similar to \texttt{MSR}, the consumer's action set is reduced to buying time $x\in [0,B]$, and nature's action set is reduced to $\{y\in(0,B]\}$.

The strategy spaces for both the consumer and the nature are identical to those in \texttt{MSR}. Similarly, we use $\mathbf{p}$ to denote a mixed strategy of the consumer and $\mathbf{p^*}$ to denote the optimal mixed strategy. If the consumer chooses mixed strategy $\mathbf{p}$ and nature chooses $y$, we denote the cost function as follows:
\begin{eqnarray}
C(\mathbf{p},y) &=& \sum_{j\in[n]}\bigg(\int_0^y (a_j+r_j x + b_j) p_j(x) \ud x \nonumber\\
&& +  \int_y^B (a_j+r_j y) p_j(x) \ud x\bigg)
\end{eqnarray}
We define $\mathrm{OPT}(y)$ as the offline optimal strategy when nature chooses the action $y$, i.e.
$$\mathrm{OPT}(y)=\min_j\{a_j+r_jy\},\quad y\in(0,B]$$
By \cite{decomputational}, we can compute the function $\mathrm{OPT}(y)$ in linear time. The objective of the consumer is $\min_\mathbf{p}\max_y\frac{C(\mathbf{p},y)}{\mathrm{OPT}(y)}$.

Similar to \texttt{MSR}, we give the following lemmas:
\begin{lemma}\label{lemma:MSR-E constant}
For the optimal strategy $p^*$ of the consumer, $\frac{C(\mathbf{p^*},y)}{\mathrm{OPT}(y)}$ is a constant for any $ y \in (0,B],$.
\end{lemma}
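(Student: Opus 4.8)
The plan is to mirror the proof of Lemma~\ref{lemma:MSRconstant} from \texttt{MSR} and argue by contradiction through a mass-shifting (transport) perturbation. Write $R(y)\triangleq C(\mathbf{p}^*,y)/\mathrm{OPT}(y)$ and let $\lambda\triangleq\max_{y\in(0,B]}R(y)$ be the worst-case ratio guaranteed by the optimal $\mathbf{p}^*$; by optimality $\lambda$ is the smallest attainable worst-case ratio. Suppose, for contradiction, that $R$ is \emph{not} identically $\lambda$ on $(0,B]$. The goal is to exhibit another feasible consumer strategy (a probability distribution over actions $(j,x)$ with $x\in[0,B]$) whose worst-case ratio is strictly below $\lambda$, contradicting optimality.

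First I would record the structural facts the argument rests on. Differentiating the cost under the Leibniz rule gives $\partial_y C(\mathbf{p},y)=\sum_j\bigl(b_j p_j(y)+r_j\int_y^B p_j(x)\,\ud x\bigr)\ge 0$, so $C(\mathbf{p}^*,\cdot)$ is nondecreasing in $y$, while $\mathrm{OPT}(y)=\min_j\{a_j+r_j y\}$ is increasing, piecewise-linear and concave as a lower envelope of affine functions. I would then argue $C(\mathbf{p}^*,\cdot)$ is continuous on $(0,B]$ by ruling out interior probability mass — the analogue of fact (iii) in the single-shop case and of the boundedness Lemma~\ref{lemma:MSRfinite} in \texttt{MSR} — since a point mass $p_{j,x_0}$ at an interior $x_0$ would force a jump of size $b_j p_{j,x_0}$ in $C$ as $y$ crosses $x_0$, which a short exchange argument rules out. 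With $R$ continuous, the slack set $\{y:R(y)<\lambda\}$ is relatively open and, being nonempty, contains an interval $I$, whereas the binding set $\{y:R(y)=\lambda\}$ is closed and nonempty.

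The core of the argument is the perturbation itself, and its key algebraic input is how moving a unit of buying probability within a fixed shop $j$ from an earlier time $x_1$ to a later time $x_2$ changes the cost: the entry fee $a_j$ cancels exactly as in \texttt{MSR}, so $C(\cdot,y)$ is unchanged for $y<x_1$, decreases by $b_j-r_j(y-x_1)$ for $x_1\le y<x_2$, and increases by $r_j(x_2-x_1)$ for $y\ge x_2$. Because such a transport is mass-preserving, normalization is automatically maintained, so feasibility reduces to controlling $R$. I would transport mass so as to strictly lower $C$ on the binding set while, by continuity, keeping the slack region slack under a sufficiently small perturbation; the resulting strategy has $\max_y R<\lambda$, the desired contradiction.

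The hard part will be constructing this perturbation so that it simultaneously strictly relaxes the previously-binding constraints and never pushes the ratio above $\lambda$ anywhere — in particular at the kinks of $\mathrm{OPT}(y)$, which are new to \texttt{MSR-E} relative to \texttt{MSR} (where $\mathrm{OPT}(y)=r_1y$ was smooth) and at which $R$ fails to be differentiable. Coping with these non-smooth points, together with the possible probability mass at $x=0$ (where $\mathrm{OPT}(y)\to\min_j a_j$ as $y\to 0^+$) and the coupling across shops with differing $(a_j,r_j,b_j)$, is where the case analysis becomes delicate. If the explicit construction proves too intricate, I would fall back on the minimax value of the zero-sum game, which gives $R\equiv\lambda$ on the support of nature's optimal strategy $\mathbf{q}^*$ (cf. Theorem~\ref{theorem:nature}), and then separately show that this support exhausts $(0,B]$.
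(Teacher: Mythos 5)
Your overall strategy --- contradiction via probability-mass transport --- is indeed the paper's approach (the paper proves this lemma by carrying over the proof of Lemma~\ref{lemma:MSRconstant} almost verbatim), but the proposal has two genuine gaps. First, the only concrete perturbation you supply is transport \emph{within a fixed shop}, chosen precisely because the entry fee $a_j$ cancels. That tool is not sufficient: the case analysis in the proof of Lemma~\ref{lemma:MSRconstant} (built on the sets $L$ and $R$) requires, in case (a), a \emph{cross-shop} move in which mass from early buying times is sent to a late buy at time $B$ in a \emph{different} shop, and there the entry fee of the destination shop does not cancel. The entire content of the paper's proof of this lemma is that single modification: the destination of that move must change from shop $1$ (as in \texttt{MSR}) to the shop $k=\arg\min_k \{a_k+r_k y\}$ attaining the offline optimum, exactly so that the added cost can be charged against $\mathrm{OPT}$ and the ratio increase at $y=B$ stays below $\lambda_0/2$. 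Your proposal never confronts this, and you explicitly defer ``the hard part'' (the construction that strictly relaxes binding constraints without violating slack ones, handling the kinks of $\mathrm{OPT}$, the mass at $x=0$, and cross-shop coupling); but that construction, with its three cases and its explicit treatment of point masses, \emph{is} the proof --- what remains after deferring it is essentially a restatement of the goal.

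Second, your intended use of ``no interior probability mass'' (continuity of $C(\mathbf{p}^*,\cdot)$) as an ingredient is circular relative to the paper's logical order: finiteness and absence of interior mass (Lemma~\ref{lemma:MSR-E finite}, like Lemma~\ref{lemma:MSRfinite} in \texttt{MSR}) is \emph{derived from} the constant-ratio identity by differentiating it, so you cannot assume it when proving that identity without an independent argument, which you do not give; the paper's proof instead allows point masses and handles them inside the perturbation (see the two subcases of case (b) in the proof of Lemma~\ref{lemma:MSRconstant}). The fallback via the minimax value has the same difficulty in disguise: establishing that nature's optimal strategy has full support on $(0,B]$ is essentially equivalent to the statement being proved, the application of a minimax theorem to this continuum game is itself unjustified, and Theorem~\ref{theorem:nature} is stated for \texttt{MSR} only, after the consumer's optimal strategy has already been characterized.
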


\begin{lemma} \label{lemma:MSR-E finite}
$\forall x \in (0,B], p^*_j(x) < +\infty$.
\end{lemma}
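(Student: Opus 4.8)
The plan is to prove the contrapositive statement by contradiction: I will show that the optimal strategy $\mathbf{p}^*$ can carry \emph{no} probability mass (atom) at any point of $(0,B]$, which is exactly what $p^*_j(x)<+\infty$ means under the paper's convention (recall that an atom is encoded as $p_j(x)=+\infty$ with mass $p_{j,x}=\int_{x^-}^x p_j(t)\,\ud t>0$). The engine of the argument is Lemma~\ref{lemma:MSR-E constant}, which guarantees $C(\mathbf{p}^*,y)=\lambda\,\mathrm{OPT}(y)$ for all $y\in(0,B]$, together with the elementary observation that $\mathrm{OPT}(y)=\min_j\{a_j+r_jy\}$ is continuous in $y$ (a minimum of affine functions). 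The idea is that an atom would force the expected-cost curve $y\mapsto C(\mathbf{p}^*,y)$ to jump, while it is pinned to the continuous curve $\lambda\,\mathrm{OPT}(y)$.

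First I would suppose, for contradiction, that some shop $j$ has a positive mass $p_{j,x_0}>0$ at a point $x_0\in(0,B]$, and track how this atom contributes to $C_j(y)=\int_0^y(a_j+r_jx+b_j)p^*_j(x)\,\ud x+\int_y^B(a_j+r_jy)p^*_j(x)\,\ud x$ as $y$ passes through $x_0$. Respecting the integration convention (the buy-integral over $y=x_0$ includes the endpoint $x_0$, the rent-integral excludes it), for $y<x_0$ the atom sits in the renting region and contributes $(a_j+r_jy)\,p_{j,x_0}$, whereas at $y=x_0$ it has moved into the buying region and contributes $(a_j+r_jx_0+b_j)\,p_{j,x_0}$. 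Hence the atom produces the one-sided gap $C_j(x_0)-\lim_{y\to x_0^-}C_j(y)=b_j\,p_{j,x_0}>0$. The contributions of the absolutely-continuous part of $\mathbf{p}^*$ and of any other atoms are continuous at $x_0$ (the former because $\int_0^y$ and $\int_y^B$ of an $L^1$ density are continuous in $y$), so the whole curve $C(\mathbf{p}^*,\cdot)$ has a strictly positive jump at $x_0$.

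This contradicts Lemma~\ref{lemma:MSR-E constant}: since $C(\mathbf{p}^*,y)=\lambda\,\mathrm{OPT}(y)$ for every $y\in(0,B]$ and $\mathrm{OPT}$ is continuous, the left limit $\lim_{y\to x_0^-}C(\mathbf{p}^*,y)=\lambda\,\mathrm{OPT}(x_0)=C(\mathbf{p}^*,x_0)$ forces the jump $b_j\,p_{j,x_0}$ to vanish, whence $p_{j,x_0}=0$, a contradiction. Therefore no atom exists in $(0,B]$ and $p^*_j(x)<+\infty$ there for all $j\in[n]$. I would close by remarking that the endpoint $x=0$ is genuinely excluded: a mass at $0$ adds the \emph{constant} $(a_j+b_j)\,p_{j,0}$ to $C_j(y)$ for every $y\in(0,B]$ and thus creates no discontinuity, which is consistent with the single-shop analysis where the atom $p_0$ at time $0$ is optimal.

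I expect the main obstacle to be purely bookkeeping: getting the endpoint conventions right so that the atom's contribution is evaluated on the correct side of $x_0$, and verifying that the absolutely-continuous remainder of $C(\mathbf{p}^*,\cdot)$ is indeed continuous. Once the jump $b_j\,p_{j,x_0}$ is isolated, the contradiction with the continuity of $\lambda\,\mathrm{OPT}(y)$ is immediate. If a quantitative bound in the style of Lemma~\ref{lemma:MSRfinite} were later needed, one could, now that no atoms are present, differentiate the identity $C(\mathbf{p}^*,y)=\lambda\,\mathrm{OPT}(y)$ to obtain $\sum_j b_j p^*_j(y)+\sum_j r_j\int_y^B p^*_j(x)\,\ud x=\lambda\,\mathrm{OPT}'(y)\le\lambda r_n$, which yields the explicit bound $p^*_j(y)\le \lambda r_n/b_j$; but the finiteness claim itself needs only the continuity argument above.
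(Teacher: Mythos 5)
Your proof is correct, but it takes a genuinely different route from the paper's. The paper disposes of this lemma in one line --- ``similar to the proof of Lemma~\ref{lemma:MSRfinite}'' --- meaning: differentiate the constant-ratio identity of Lemma~\ref{lemma:MSR-E constant} in $y$ to get $\sum_j b_j p^*_j(y)+\sum_j r_j\int_y^B p^*_j(x)\,\ud x=\lambda\,\mathrm{OPT}'(y)$, read off the pointwise bound $p^*_j(y)\le\lambda\,\mathrm{OPT}'(y)/b_j$, and bound $\lambda$ by comparison with a pure strategy. You instead argue qualitatively: an atom at $x_0\in(0,B]$ would force a jump of size $b_jp_{j,x_0}>0$ in $y\mapsto C(\mathbf{p}^*,y)$, which is impossible because that curve is pinned to the continuous function $\lambda\,\mathrm{OPT}(y)$. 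Your jump computation respects the paper's endpoint convention correctly, and your exclusion of $x=0$ is exactly the right observation: in \texttt{MSR-E} the optimum genuinely may place mass at time $0$ (as the single-shop solution shows), which is why the lemma is stated on $(0,B]$ only. Comparing what each approach buys: yours is more elementary and more robust, since it never differentiates and therefore handles the kinks of the piecewise-linear $\mathrm{OPT}(y)$ (the offline breakpoints) without any case analysis, and it even puts the paper's own derivation on firmer footing --- one should know there are no atoms \emph{before} differentiating the identity; the paper's route, on the other hand, directly produces an explicit uniform density bound of the kind the later exchange argument (Lemma~\ref{lemma:moveP MSR-E}, via the analogue of the proposition inside the proof of Lemma~\ref{lemma:moveP MSR}) actually uses, whereas your argument gives finiteness only --- though, as you note, the quantitative bound is recoverable afterwards by differentiation. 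The one bookkeeping point you should add: if $\mathbf{p}^*$ has countably many atoms, the claim that ``the contributions of any other atoms are continuous at $x_0$'' needs a one-line justification, e.g.\ each such contribution is continuous at $x_0$ and bounded by $(a_j+r_jB+b_j)$ times its mass, so the series converges uniformly and the sum is continuous; with that clause inserted, the proof is complete.
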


The problem is formalized as follows:
\begin{align}
\textrm{minimize}& \quad\quad\quad\lambda \label{problem:MSREF1}\\
\textrm{subject to}& \quad \frac{C(\mathbf{p}, y)}{\mathrm{OPT}(y)} = \lambda,\quad\forall y \in (0,B]\tag{\ref{problem:MSREF1}a} \\
& \quad \sum_{j=1}^n \int_{0}^B p_j(x) \ud x = 1 \tag{\ref{problem:MSREF1}b}\\
& \quad p_j(x) \geq 0,\quad \forall x \in [0,B] \tag{\ref{problem:MSREF1}c}
\end{align}
Note that there may be probability mass at $x=0$. For Problem~\ref{problem:MSREF1}, we find that the optimal mixed strategy $\mathbf{p^*}$ of the consumer is also segmented.

\begin{lemma} \label{lemma:moveP MSR-E}
In the optimal mixed strategy $\mathbf{p}^*$, there exists $n+1$ breakpoints $B=d_{1}\ge d_{2}\ge\cdots\ge d_{n+1} = 0$, which partition $[0,B]$ into $n$ sub-intervals, such that $\forall j=1,\cdots,n$, $\forall x \in (d_{j+1},d_{j})$, $p_{j}^*(x) > 0$ and $p_{i}^*(x) = 0$ for any $i \neq j$.
\end{lemma}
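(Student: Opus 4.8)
The plan is to mirror the local-exchange (rearrangement) argument used for Lemma~\ref{lemma:moveP MSR}, carried over to the entry-fee cost function, and to deal with the one genuinely new difficulty caused by the fact that in \texttt{MSR-E} the buying prices need not be monotone in the shop index. First I would fix the target ordering statement: for any two shops $j<j'$ (so $r_j\le r_{j'}$), shop $j$ should never buy strictly earlier than shop $j'$ in the optimum. Concretely, it suffices to show that if $\int_{x-\epsilon}^{x}p_j^*(t)\,\ud t>0$ for some $x$, then $\int_{x'}^{B}p_{j'}^*(t)\,\ud t=0$ for every $x'\ge x$ and every $j'>j$, exactly as in the proof sketch of Lemma~\ref{lemma:moveP MSR}. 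I would reuse verbatim the measure-theoretic first step: from a putative violation (positive density of shop $j$ just left of $x$ and positive density of shop $j'$ at some $x'>x$) extract two sub-intervals $(x_1,x_1+\theta)$ and $(x_2,x_2+\theta)$ with $x_1+\theta\le x_2$ on which $\min\{p_j^*,p_{j'}^*\}$ carries positive mass, then swap equal amounts of buying probability: shop $j'$ is moved from the later interval to the earlier one and shop $j$ from the earlier interval to the later one, producing a competitor strategy $\mathbf{p}^1$.

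Next I would compute the resulting change in expected cost. Writing the swap as interchanging the buying times of equal masses of shops $j$ and $j'$, the per-unit change at stopping time $y$ is $[c_j(x_2,y)-c_j(x_1,y)]+[c_{j'}(x_1,y)-c_{j'}(x_2,y)]$, and the crucial simplification is that the entry fees $a_j,a_{j'}$ appear in both terms of each bracket and \emph{cancel}; hence the comparison reduces to exactly the fee-free computation of \texttt{MSR}. Splitting on $y$: for $y<x_1$ the change is $0$; for $y\ge x_2$ it equals $(r_j-r_{j'})(x_2-x_1)\le 0$ using $r_j\le r_{j'}$; and for $x_1\le y<x_2$ it equals $(r_j-r_{j'})(y-x_1)+(b_{j'}-b_j)$.

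The main obstacle is this middle regime. In \texttt{MSR} it is nonpositive because $b_1>\cdots>b_n$, but \texttt{MSR-E} assumes no monotonicity of $b_j$, and indeed $(b_{j'}-b_j)$ can be positive, in which case the naive swap \emph{increases} the cost for $y$ near $x_1$ and is not an admissible improving move (recall that by Lemma~\ref{lemma:MSR-E constant} the optimum has a constant ratio, so any competitor must satisfy the pointwise inequality $\le 0$ for all $y$ to be no worse). I would resolve this by a preliminary structural step that confines the exchange to shops that are \emph{simultaneously active}: using the non-domination assumptions (for $i<j$, $a_i>a_j$ or $a_i+b_i>a_j+b_j$, together with $a_i<a_j+b_j$), I would argue that whenever $r_j<r_{j'}$ but $b_j\le b_{j'}$, shops $j$ and $j'$ cannot both carry positive density in the optimum---so one of them contributes an \emph{empty} interval---whereas for any pair that is simultaneously active the buying price decreases along increasing rent, making $b_{j'}-b_j\le 0$ and hence the whole middle-regime expression $\le 0$, with strict inequality on a subinterval.

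This yields a change that is $\le 0$ for all $y$ and strictly negative on $(x_1,B]$, contradicting the optimality and constant-ratio property of $\mathbf{p}^*$. Establishing the ``active shops have monotone buying prices'' claim---equivalently, ruling out the $b_{j'}>b_j$ configuration for coexisting active shops---is the step I expect to be hardest, and is precisely where \texttt{MSR-E} departs from \texttt{MSR}, since in the fee-free problem the assumed monotonicity of $b$ makes it automatic. Finally, having shown the supports are ordered by rent, I would set $d_j$ to be the supremum of shop $j$'s support (inactive shops pinching their interval to a single point), obtaining $B=d_1\ge d_2\ge\cdots\ge d_{n+1}=0$ together with the stated partition; the possible probability mass at $x=0$ sits inside shop $n$'s interval and does not affect the open-interval statement.
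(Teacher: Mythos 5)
Your high-level plan (port the exchange argument of Lemma~\ref{lemma:moveP MSR} to the entry-fee costs) is the right one, and your three-regime computation for the \emph{equal-mass} swap is correct, including the observation that the entry fees cancel. The genuine gap is precisely the step you defer: the structural claim that two shops with $r_j<r_{j'}$ and $b_j\le b_{j'}$ can never both carry positive density in the optimum does not follow from the standing assumptions of \texttt{MSR-E}, and it appears to be false. Take $a_1=5,\,r_1=1,\,b_1=10$ and $a_2=1,\,r_2=2,\,b_2=12$. All assumptions hold ($r_1\le r_2$, $a_i<a_j+b_j$, $a_1>a_2$), and both shops are genuinely needed: shop~2 keeps the ratio bounded as $y\to 0$ (here $\mathrm{OPT}(y)\to a_2=1$ while any shop-1 action costs at least $a_1=5$), and shop~1 has the cheaper rent on $(4,B]$ and total buying cost $a_1+b_1=15$, versus $a_2+r_2B+b_2=29$ for late buying at shop~2 (here $B=8$). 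So the optimum mixes both shops, yet $b_1<b_2$, contradicting your claim; and in this instance your equal-mass swap is not an admissible move, since its middle-regime change $(r_1-r_2)(y-x_1)+(b_2-b_1)=-(y-x_1)+2$ is strictly positive for $y$ near $x_1$. In other words, your proposal cannot establish the lemma even in instances where the lemma is true.

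The missing device is the \emph{weighted} swap that the paper uses in the proof of Lemma~\ref{lemma:moveP MSR} and that carries over verbatim once fees are present: exchange unequal masses in the proportion $\mu:\mu'=(b_j+r_j):(b_{j'}+r_{j'})$. Because each shop's own total mass is conserved, the entry fees still cancel and the change for $y<x_1$ is still zero, but now in \emph{both} remaining regimes the sign of $C(\mathbf{p}^*,y)-C(\mathbf{p}^1,y)$ is governed by the single quantity $r_{j'}b_j-r_jb_{j'}$ (one computes the change to be proportional to $(1+y-x_1)(r_jb_{j'}-r_{j'}b_j)$ in the middle regime and to $(x_2-x_1)(r_jb_{j'}-r_{j'}b_j)$ in the last one). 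That is, the ordering that matters is by $b_j/r_j$, not by $b_j$: in the example above $r_2b_1-r_1b_2=8>0$, so the weighted swap yields a strict improvement exactly where your equal swap fails. A residual case remains when even $r_{j'}b_j-r_jb_{j'}\le 0$ (e.g.\ $a_1=10,r_1=1,b_1=5$ versus $a_2=0,r_2=2,b_2=100$); there one must show such a pair cannot both be active, which in that example follows because any mass on shop~1 forces an unbounded ratio as $y\to0$ -- an argument about usability in the game, not about the price assumptions. This delicate point is glossed over by the paper as well (its stated proof is only a pointer to the construction in Lemma~\ref{lemma:MSR-E constant}), but reducing the entire difficulty to the coexistence claim $b_{j'}\le b_j$ is where your proposal breaks.
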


\noindent \textbf{Remark:} Though we prove the form of the optimal solution to \texttt{MSR-E}, computing the analytical solution is very challenging. The point mass at $x=0$ makes this problem much more difficult than \texttt{MSR}, because one needs to guarantee the nonnegativity of this probability mass. Moreover, the non differentiable points of $\mathrm{OPT}(y)$, which we call the offline breakpoints, complicate the probability density function form in each segment $(d_{j+1},d_j)$. In fact, we can obtain the exact analytic optimal solution when $n=2$.

\section{ski rental with ENTRY FEE AND SWITCHING}

Now we introduce the last extension of \texttt{MSR}, the \emph{entry fee included, switching allowed problem} (\texttt{MSR-ES}). All the settings are identical to those of \texttt{MSR-E}, except that the consumer is allowed to switch at any time. When a consumer enters or switches to a shop, she pays the entry fee of the shop. For instance, if a consumer enters shop 1 at first, then switches from shop 1 to shop 2 and returns to shop 1 at last, she pays the entry fee of shop 1 twice and the entry fee of shop 2 once.

Similar to \texttt{MSR-E}, there exist $n$ shops. The entry fee, renting price, buying price of shop $j$ are denoted by $a_j\ge 0, r_j>0, b_j>0$, respectively. Without loss of generality, we assume that
\begin{itemize}
\item $r_1\le r_2\le\cdots\le r_n$;
\item $\forall i,j,~a_i<a_j+b_j$;
\item $\forall i<j$, $a_i>a_j$ or $a_i+b_i>a_j+b_j$.
\end{itemize}
As in the \texttt{MSR-S} case, if there exists $ i,j\in [n]$ such that $b_i> a_j+b_j$, then instead of buying in shop $i$, the consumer will switch from shop $i$ to shop $j$ to buy skis  \footnote{This phenomenon is called ``switching for buying''.}. This is equivalent to setting $b_i$ to be $\min_{j\neq i}\{ a_j+b_j\}$. Therefore, without loss of generality, we assume that
\begin{itemize}
\item $\forall i,j,~b_i\le a_j+b_j$.
\end{itemize}
In the remainder of this section, we first define the action set and formulate our problem. Then, we show that the strategy space can be reduced by Lemma~\ref{lemma:EAdominate}, \ref{lemma:reduceSpace} and \ref{lemma:EAconstant}. In Lemma~\ref{lemma:EAf2p}, we show that for each switching operation, the operations of the consumer before or after the switching is not important. The only things we care about are when the switching happens, and which shops the switching operation relates to. Thus, we can construct a virtual shop for each switching operation, and (nearly) reduce this \texttt{MSR-ES} problem to \texttt{MSR-E}. Finally, we show that \texttt{MSR-ES} has the similar nice properties as \texttt{MSR-E} which we have known in Lemma~\ref{lemma:MSR-E finite} and \ref{lemma:moveP MSR-E}.
\subsection{Notations and Analysis of MSR-ES}
\subsubsection{Reduced Strategy Space}
The action set for nature is $\{y>0\}$, defined as before. To represent the action set formally, we firstly introduce the operation tuple $\sigma=(i,j,x)$ to denote the switching operation of switching from shop $i$ to shop $j$ at time $x$. For special cases, $(0,j,0)$ denotes the entering operation that the consumer enters shop $j$ at the very beginning; $(i,0,x)$ denotes the buying operation that the consumer buys at shop $i$ at time $x$. An operation tuple can also be represented as $(j,x)$ for short, denoting the switching operation to shop $j$ at time $x$ if $j>0$, the entering operation if $x=0$, and the buying operation at time $x$ if $j=0$.

Then, an action $\psi$ is expressed as a sequence (may be infinite) of the operation tuples:
$$\psi=\{(j_0,x_0),(j_1,x_1),(j_2,x_2),\cdots\}$$
satisfying that
\begin{itemize}
\item $0=x_0\le x_1\le x_2\le\cdots$;
\item if there exists $x\ge 0$ such that $(0,x)\in \psi$, it is the last element in $\psi$.
\end{itemize}
or the full form with the same constraints:
$$\psi=\{(0,j_0,x_0),(j_0,j_1,x_1),(j_1,j_2,x_2),\cdots\}$$
Similar to other extensions, we reduce the action set. In this model, the definition of $B$ is the same as those of \texttt{MSR-E}:
\begin{eqnarray*}
&\mathrm{minimize}& B\\
&\mathrm{subject~to}& \forall i,~a_i+Br_i\ge\min_j(a_j+b_j)
\end{eqnarray*}
and we give the following lemma:
\begin{lemma}\label{lemma:EAdominate}
From the perspective of nature, any strategy $y\in[B,+\infty)$ is dominated. While for the consumer, any strategy in which the buying time $x\in (B,+\infty)\cup \{+\infty\}$ is dominated.
\end{lemma}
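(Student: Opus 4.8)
The plan is to prove Lemma~\ref{lemma:EAdominate} by adapting the dominance argument of Lemma~\ref{lemma:MSRstspace} (and its \texttt{MSR-E} analogue Lemma~\ref{lemma:EFstspace}) to the richer action space of \texttt{MSR-ES}. The key observation is that the definition of $B$ is chosen precisely so that for every shop $i$, renting at shop $i$ up to time $B$ already costs at least $\min_j(a_j+b_j)$, which is an upper bound on the offline optimal cost. First I would establish the form of $\mathrm{OPT}(y)$ in this setting: since switching only adds entry fees, the prophet's best offline strategy is either to enter a single shop and rent forever, or to enter a single shop and buy immediately, so $\mathrm{OPT}(y)=\min_j\{a_j+r_j y\}$ for $y\le B$, and crucially $\mathrm{OPT}(y)$ stays flat at $\min_j(a_j+b_j)$ for all $y\ge B$ (by the defining constraint of $B$, the renting line of every shop has caught up with the minimum buy-cost by time $B$). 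This is the feature that makes both halves of the lemma work.

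For nature's half, I would fix any consumer action $\psi$ and argue that the consumer's realized cost $C_\psi(y)$, as a function of nature's stopping time $y$, is non-decreasing in $y$: extending the skiing horizon can only force the consumer to pay more rent and possibly more switching/entry fees, never less. Combined with the fact that $\mathrm{OPT}(y)$ is constant for $y\ge B$, the per-instance ratio $C_\psi(y)/\mathrm{OPT}(y)$ is non-decreasing on $[B,+\infty)$, so
\begin{displaymath}
\frac{C_\psi(y)}{\mathrm{OPT}(y)}\le \lim_{y\to+\infty}\frac{C_\psi(y)}{\mathrm{OPT}(y)}=\frac{C_\psi(+\infty)}{\min_j(a_j+b_j)},\quad \forall y\ge B.
\end{displaymath}
Hence any nature strategy placing mass on $[B,+\infty)$ is weakly dominated by the strategy $y=+\infty$, and the action set of nature reduces to $(0,B]$ exactly as before.

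For the consumer's half, I would show that no action whose buying time $x$ exceeds $B$ can be optimal, by exhibiting the modified action that buys at $B$ instead (in whatever shop the original action would have bought). Because nature's effective support is now $(0,B]$, for every $y\in(0,B]$ buying at $B$ is indistinguishable from buying at $x>B$ (under either action the consumer is still renting at time $y<B$, so the costs coincide on the entire reduced support of nature), while at the boundary the $B$-action is no more expensive. This pointwise comparison of the cost functions, uniform over $y\in(0,B]$, gives the desired domination; the infinite buying time $x=+\infty$ is handled as the limiting case. I expect the main obstacle to be bookkeeping rather than conceptual: because an \texttt{MSR-ES} action $\psi$ is a whole sequence of switching operations (and may be infinite), I must verify that monotonicity of $C_\psi(y)$ in $y$ and the cost comparison at $x=B$ hold uniformly across all such sequences, including ones that switch repeatedly while renting. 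The clean way to control this is to note that after the buying operation the cost is frozen, so only the prefix of $\psi$ up to the buying time matters, and on that prefix every additional unit of time (or delayed purchase past $B$) can only increase rent and entry-fee payments—so the argument of Lemma~\ref{lemma:MSRstspace} carries over once the right $\mathrm{OPT}(y)$ plateau is in place.
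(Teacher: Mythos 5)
Your proposal is correct and follows essentially the same route as the paper's proof: both use the plateau of $\mathrm{OPT}(y)$ at $\min_j(a_j+b_j)$ for $y\ge B$ together with monotonicity of the consumer's cost in $y$ to dominate every nature strategy in $[B,+\infty)$ by $y=+\infty$, and both dominate any late-buying consumer action by one that keeps its pre-$B$ operations and buys at time $B$, comparing costs pointwise on the reduced support $(0,B)\cup\{+\infty\}$. The only (harmless) divergence is the target shop of the modified action: the paper buys in the shop occupied at time $B$, which is exactly why it invokes the standing assumption $b_i\le a_j+b_j$, whereas you pull the original action's final switch forward to time $B$ and buy where it would have bought, which dominates without that assumption but leaves the never-buying case's target shop to be fixed separately (e.g., the currently occupied shop).
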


Similar to \texttt{MSR}, we reduce the consumer's buying time to the interval $[0,B]$, and nature's action set to $\{y\in(0,B]\}$.

The following lemma shows that a consumer may switch from shop $i$ to shop $j$ for renting, only when $r_i>r_j$ and $a_i<a_j$.
\begin{lemma} \label{lemma:reduceSpace}
If a strategy of the consumer: $$\psi = \{(0,j_0,x_0),(j_0,j_1,x_1),\cdots,(j_{|\psi|-2},0,x_{|\psi|-1})\}$$ satisfies any of the following conditions, then it is dominated.
\begin{itemize}
\item $\exists 0<\tau<|\psi|-1$ such that $x_{\tau-1}=x_\tau$;
\item $\exists (i,j,x)\in \psi$ such that $r_i\le r_j$ and $(j,0,x)\notin\psi$;
\item $\exists (i,j,x)\in \psi$ such that $a_i\ge a_j$ and $(j,0,x)\notin\psi$.
\end{itemize}
\end{lemma}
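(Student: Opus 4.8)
The plan is to prove each of the three conditions implies dominance by exhibiting, for a strategy $\psi$ witnessing that condition, a modified strategy $\psi'$ whose \emph{actual} cost satisfies $C(\psi',y)\le C(\psi,y)$ for every stopping time $y\in(0,B]$ (with strict inequality on a subinterval whenever the relevant entry fees or renting durations are positive). Since both strategies share the same denominator $\mathrm{OPT}(y)$, dominance in actual cost gives dominance in the competitive ratio; and since a dominated pure action can be deleted from the support of any mixed strategy without raising its cost, it suffices to argue at the level of pure actions. Throughout I use the piecewise accounting: by any time $t$ the consumer has paid the entry fee of every shop entered up to $t$, plus renting cost accrued at the currently occupied shop, plus a buying cost if she has already bought. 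I would compare $\psi$ and $\psi'$ on the ranges cut out by the time interval I modify, since the two strategies coincide outside that interval.

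For the first condition, let $x_{\tau-1}=x_\tau=:x$ with $0<\tau<|\psi|-1$, so the consumer switches into shop $j_{\tau-1}$ and out of it at the same instant, spending zero time there. I would delete the redundant operation, letting her pass directly from $j_{\tau-2}$ to $j_\tau$ at time $x$. Because no renting ever occurs at $j_{\tau-1}$, the two strategies incur identical renting and buying costs at every $y$, while $\psi'$ saves the entry fee $a_{j_{\tau-1}}\ge 0$; hence $\psi'$ weakly dominates $\psi$.

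For the second condition, fix a switch $(i,j,x)\in\psi$ for renting (so $(j,0,x)\notin\psi$) with $r_i\le r_j$, and let $x''$ be the time of the next operation. If that operation is a further switch $(j,k,x'')$, I keep the consumer at $i$ throughout $(x,x'']$ and replace the pair by the single switch $(i,k,x'')$: this saves $a_j$ and, as $r_i\le r_j$, never raises renting cost. If instead the next operation is the purchase $(j,0,x'')$, I \emph{defer} the switch to the buying instant, replacing $(i,j,x)$ by $(i,j,x'')$, so she rents at the cheaper rate $r_i$ over $(x,x'']$ and still buys at $j$ for price $b_j$. A direct check on $y<x$, $x\le y<x''$, and $y\ge x''$ gives $C(\psi',y)\le C(\psi,y)$, the only differences being the deferred or removed entry fee $a_j\ge 0$ and the renting saving $(r_j-r_i)(\cdot)\ge 0$. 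Note the deferred switch sits at the buying time and is a legitimate ``switch for buying''; it does not re-trigger the first condition, whose index range excludes $\tau=|\psi|-1$.

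For the third condition, fix a switch $(i,j,x)\in\psi$ for renting with $a_i\ge a_j$. If $r_i\le r_j$, this same operation witnesses the second condition and we are done. Otherwise $r_j<r_i$, and I \emph{skip} shop $i$: letting $(h,i,x_{\mathrm{prev}})$ be the operation that brought her to $i$ (with $x_{\mathrm{prev}}$ the immediately preceding operation time), I replace it together with $(i,j,x)$ by the single operation $(h,j,x_{\mathrm{prev}})$, so she enters $j$ directly at $x_{\mathrm{prev}}$. This saves $a_i\ge a_j\ge 0$ and, because $r_j<r_i$, lowers the renting cost over $(x_{\mathrm{prev}},x]$, while leaving everything before $x_{\mathrm{prev}}$ and after $x$ untouched; hence $\psi'$ dominates $\psi$. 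The main obstacle is exactly this third case: the natural ``skip $i$'' construction fails when $r_j>r_i$ (the cheaper entry fee is outweighed by costlier renting), so I must split on the sign of $r_i-r_j$ and route the adverse subcase back through the second condition. A secondary subtlety, already flagged above, is that in the second condition one cannot simply erase the switch but must \emph{defer} it to the buying time whenever the consumer buys at $j$, in order to preserve her access to the buying price $b_j$.
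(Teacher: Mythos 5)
Your proposal is correct and follows essentially the same route as the paper's proof: for each condition you exhibit the same modified action (deleting the zero-duration stay; merging consecutive switches or deferring the switch to the buying instant when $r_i\le r_j$; skipping shop $i$ entirely when $a_i\ge a_j$ and $r_i>r_j$, routing the case $r_i\le r_j$ back through the second condition) and verify cost dominance piecewise in $y$. The only cosmetic difference is that the paper invokes Lemma~\ref{lemma:EAdominate} to justify that a renting switch is never the last operation, and treats the same-time degenerate cases as explicit subcases, both of which your constructions handle implicitly.
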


Here we give some intuitions. In these three cases, we can construct a new action $\psi'$ by deleting one specified operation from $\psi$, and show that $\psi$ is dominated by $\psi'$.

This lemma rules out a huge amount of dominated strategies from our action set and allows us to define the operation set:
\begin{eqnarray*}
\Sigma&\triangleq&\bigg\{\sigma=(i,j,x):i,j\in[n], \ r_i>r_j,\ a_i<a_j, \\
&&x\in(0,B]\bigg\}\bigcup\bigg\{(0,j,0):j\in[n]\bigg\}\\&&\bigcup\bigg\{(j,0,x):j\in[n],x\in[0,B]\bigg\}
\end{eqnarray*}
Thus, we only need to consider such an action set:
\begin{eqnarray*}
\Psi_c\triangleq\bigg\{\psi&=&\{(0,j_0,x_0),(j_0,j_1,x_1),\cdots,(j_{|\psi|-2},0,x_{|\psi|-1})\}\\
&:&0=x_0<x_1<\cdots<x_{|\psi|-2}\le x_{|\psi|-1}\le B,\\
&&r_{j_0}>r_{j_1}>\cdots>r_{j_{|\psi|-2}}~,\\
&&a_{j_0}<a_{j_1}<\cdots<a_{j_{|\psi|-2}}
\bigg\}
\end{eqnarray*}
Since $r_{j_0}>r_{j_1}>\cdots>r_{j_{k}}$, we get $j_0>j_1>\cdots>{j_{k}}$ and $2\leq |\psi|\le n+1$.
\subsubsection{Mathematical Expression of the Cost, Ratio and the optimization problem}
For nature's action $y\in(0,B]$ and the consumer's action $\psi=\{(j_0,x_0),\cdots,(0,x_{|\psi|-1})\}\in\Psi_c$, we define the cost $c(\psi,y)$ as follows:
\begin{align*}
c(\psi,y)\triangleq
\begin{cases}
\sum_{\tau=0}^{k-1} [a_{j_\tau}+r_{j_\tau} (x_{\tau+1}-x_{\tau})]+r_{j_{k}}(y-x_k),\\
\quad\quad\quad\quad\text{ if }\exists 0<k<|\psi|, x_{k-1}\le y<x_k;\\
\sum_{\tau=0}^{|\psi|-2} (a_{j_\tau}+r_{j_\tau} (x_{\tau+1}-x_{\tau}))+b_{j_{|\psi|-2}},\\
\quad\quad\quad\quad\text{ if }y\ge x_{|\psi|-1}.
\end{cases}
\end{align*}

For any action $\psi=\{(j_0,x_0),\cdots,(0,x_{|\psi|-1})\}$, we use $\mathbf{s}(\psi)$ to denote the order of the operations:
$$\mathbf{s}(\psi)\triangleq \{(0,j_0),(j_0,j_1),\cdots,(j_{|\psi|-2},0)\}$$
or the short form:
$$\mathbf{s}(\psi)\triangleq \{j_0,j_1,\cdots,j_{|\psi|-2},0\}$$
Further, we define $\mathcal{S}$ as the collection $\mathbf{s}(\Psi_c)$ as follows:
\begin{eqnarray*}
\mathcal{S}\triangleq\{\mathbf{s}&=&\{j_0,j_1,\cdots,j_k,0\}~:~k\ge 0,\\
&&r_{j_0}>r_{j_1}>\cdots>r_{j_{k}}~,~a_{j_0}<a_{j_1}<\cdots<a_{j_{k}}
\}
\end{eqnarray*}

Note that $j_0,j_1,\cdots,j_k\in[n]$ and $\{0\}\notin \mathcal{S}$, so the amount of elements in $\mathcal{S}$ is upper bounded by $|\mathcal{S}|\le 2^n-1$.

\noindent We group all the actions in $\Psi_c$ whose $\mathbf{s}(\psi)$ are identical. Thus, we partition $\Psi_c$ into $|\mathcal{S}|$ subsets.

\noindent For any action $\psi$, let $\mathbf{x}(\psi)$ denote the sequence of the operation time, defined as follows:
$$\mathbf{x}(\psi)\triangleq(x_1,\cdots,x_{|\psi|-1})$$

For each operation order $\mathbf{s}\in \mathcal{S}$, we define $\mathcal{X}_\mathbf{s}$ as the collection $\{\mathbf{x}(\psi):\mathbf{s}(\psi)=\mathbf{s}\}$, i.e.,
$$
\mathcal{X}_\mathbf{s}\triangleq\{\mathbf{x}=(x_1,x_2,\cdots,x_{|\mathbf{s}|-1}):
0<x_{1}<\cdots<x_{|\mathbf{s}|-1}\le B\}
$$

We observe that any $ \mathbf{s}\in \mathcal{S}$ and $ \mathbf{x}\in\mathcal{X}_\mathbf{s}$ can be combined to a unique action $\psi(\mathbf{s},\mathbf{x})$. Further, we can use $c_{\mathbf{s}}(\mathbf{x},y)$ and $c(\psi({\mathbf{s}},\mathbf{x}),y)$ interchangeably.

For each $\mathbf{s}\in\mathcal{S}$, we define the probability density function $f_{\mathbf{s}}:\mathcal{X}_\mathbf{s}\rightarrow[0,+\infty)\cup \{+\infty\}$
\footnote{ $f_{\mathbf{s}}(\mathbf{x})=+\infty$ represents probability mass on $\mathbf{x}$. }
, which satisfies $$\sum_{\mathbf{s}\in\mathcal{S}}\idotsint\limits_{\mathbf{x}\in\mathcal{X}_\mathbf{s}} f_{\mathbf{s}}(\mathbf{x})\ud\mathbf{x}=1$$

Let $\mathbf{f}=\{f_\mathbf{s}:\mathbf{s}\in\mathcal{S}\}$ denote a mixed strategy for the consumer.
Given a mixed strategy $\mathbf{f}$ of the consumer and nature's choice $y$, the expected competitive ratio is defined as follows:
\begin{equation}
R(\mathbf f, y)\triangleq\frac{C(\mathbf f,y)}{\mathrm{OPT}(y)}
\end{equation}
where $\mathrm{OPT}(y) = \min_{j\in[n]}\{a_j+r_j y\}$, and
\begin{equation}
\label{def:jointExpCost}
C(\mathbf{f}, y) \triangleq \sum_{\mathbf{s}\in\mathcal{S}}\idotsint\limits_{\mathbf{x}\in\mathcal{X}_\mathbf{s}}c_{\mathbf{s}}(\mathbf{x},y) f_{\mathbf{s}}(\mathbf{x})\ud\mathbf{x}
\end{equation}

\noindent The objective of the consumer is $\min_{\mathbf{f}} \max_{y} R(\mathbf{f}, y)$. The following lemma proves that $\forall y\in(0,B]$, $R(\mathbf{f^*},y)$ is a constant in which $\mathbf{f^*}$ is an optimal mixed strategy.
\begin{lemma}\label{lemma:EAconstant}
If $\mathbf{f^*}$ is an optimal solution of the problem $\arg\min_{\mathbf{f}} \max_{y} R(\mathbf{f}, y)$, then there exists a constant $\lambda$ such that $\forall y\in(0,B]$, $R(\mathbf{f^*}, y)=\lambda$.
\end{lemma}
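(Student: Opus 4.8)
The plan is to mirror the argument for the corresponding statement in \texttt{MSR} (Lemma~\ref{lemma:MSRconstant}) and proceed by contradiction, exploiting the optimality of $\mathbf{f^*}$ through a local perturbation of its density. First I would set $\lambda\triangleq\max_{y\in(0,B]}R(\mathbf{f^*},y)$, which is exactly the objective value attained by $\mathbf{f^*}$, so that $R(\mathbf{f^*},y)\le\lambda$ holds for every $y\in(0,B]$ by optimality; the goal is to upgrade this inequality to an equality everywhere. A preliminary step records that $y\mapsto R(\mathbf{f^*},y)$ is continuous on $(0,B]$: for each fixed $\mathbf{x}$ the cost $c_\mathbf{s}(\mathbf{x},y)$ is continuous, piecewise-linear and non-decreasing in $y$ (its slope is the current renting rate, and it becomes flat once $y$ passes the buying time), hence $C(\mathbf{f^*},y)=\sum_{\mathbf{s}}\idotsint\limits_{\mathbf{x}\in\mathcal{X}_\mathbf{s}}c_\mathbf{s}(\mathbf{x},y)f_\mathbf{s}(\mathbf{x})\,\ud\mathbf{x}$ is continuous in $y$; since $\mathrm{OPT}(y)=\min_{j}\{a_j+r_jy\}>0$ on $(0,B]$, the ratio is continuous there.

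Suppose for contradiction that $R(\mathbf{f^*},\cdot)\not\equiv\lambda$. By continuity the set $U\triangleq\{y\in(0,B]:R(\mathbf{f^*},y)<\lambda\}$ is open and non-empty, and on it the slack $h(y)\triangleq\lambda\,\mathrm{OPT}(y)-C(\mathbf{f^*},y)$ is strictly positive. I would then construct a feasible strategy $\mathbf{f}'$ with strictly smaller worst-case ratio, contradicting the minimality of $\lambda$. The natural framing is the equivalent reformulation, used for \texttt{MSR} in Section~\ref{sec:computing}, in which minimizing $\lambda$ is traded for maximizing the total placed probability $\sum_{\mathbf{s}}\idotsint\limits_{\mathbf{x}\in\mathcal{X}_\mathbf{s}}f_\mathbf{s}(\mathbf{x})\,\ud\mathbf{x}$ subject to the cost cap $C(\mathbf{f},y)\le\lambda\,\mathrm{OPT}(y)$ for all $y$: the presence of slack on $U$ should let me inject a small additional buying probability whose induced cost increment lands inside the slack region, violating maximality of the mass and, after renormalization, strictly lowering the achievable ratio.

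The crux, and the step I expect to be hardest, is that this perturbation is not local in $y$: adding buying probability at a time $x_0$ raises $C(\mathbf{f},y)$ at every $y\ge x_0$ through the incurred buy cost (and the running rent for $y<x_0$), so the cost increment is supported on a whole upper range of stopping times rather than on a single point. To keep $\mathbf{f}'$ feasible I must therefore show the slack region can be taken to be upward-closed up to $B$; concretely I would first argue that tightness propagates from the top, i.e. $R(\mathbf{f^*},B)=\lambda$, and more generally that $h$ cannot remain positive on an interval whose right endpoint is a tight point, by redistributing mass from just below that endpoint. A secondary nuisance, absent in \texttt{MSR} but present here as in \texttt{MSR-E}, is the possible probability mass at $x=0$ (buying immediately) together with the non-differentiable offline breakpoints of $\mathrm{OPT}$; I would handle the former by perturbing the atom at $0$ separately and the latter by working with one-sided derivatives on each linear piece of $\mathrm{OPT}$. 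Finally, since $f_\mathbf{s}$ lives on the simplex $\mathcal{X}_\mathbf{s}\subseteq\mathbb{R}^{|\mathbf{s}|-1}$ rather than on a single buying-time axis, I would reduce the bookkeeping by observing that only the distribution induced by $\mathbf{f^*}$ over the cost profiles $y\mapsto c_\mathbf{s}(\mathbf{x},y)$ matters, and each such profile is a non-decreasing piecewise-linear function of $y$; this lets the \texttt{MSR}-style redistribution carry over essentially verbatim.
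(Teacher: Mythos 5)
Your overall plan (perturb $\mathbf{f^*}$ and contradict optimality) is in the same spirit as the paper, whose proof of this lemma is a reduction to the argument for Lemma~\ref{lemma:MSRconstant}; however, two of your steps contain genuine gaps. The first is the preliminary continuity claim, which is false: for a fixed pure action the cost $c_{\mathbf{s}}(\mathbf{x},y)$ is \emph{not} continuous in $y$ --- it jumps upward by an entry fee each time $y$ crosses a switching time and by the buying price when $y$ crosses the buying time (already in \texttt{MSR}, $c_j(x,y)$ jumps by $b_j$ at $y=x$). Hence $R(\mathbf{f^*},\cdot)$ is in general only right-continuous, and it genuinely jumps at any interior time carrying probability mass. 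Ruling out interior atoms of $\mathbf{f^*}$ is exactly the content of the finiteness lemmas (Lemma~\ref{lemma:MSRfinite}, Lemma~\ref{lemma:MSR-ES seg}), which are proved \emph{after}, and by means of, the constant-ratio property, so assuming continuity here is circular. The paper is structured to avoid this: it uses only right-continuity plus the one-sided estimate of Proposition~\ref{proposition:con3}, and it carries possible point masses through the whole argument (Proposition~\ref{proposition:con1} and the mass case in case (b)).

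The second and more serious gap is that the crux of the proof is not carried out, and the mechanism you propose in its place cannot work as described. In your mass-maximization reformulation, injecting additional buying probability $\epsilon$ at a time $x_0$ raises $C(\mathbf{f},y)$ at \emph{every} $y\in(0,B]$: by roughly $\epsilon(a+ry)$ for $y<x_0$ (entry fee plus rent) and by roughly $\epsilon(a+rx_0+b)$ for $y\ge x_0$. So even if the slack set were upward-closed up to $B$, a pure injection at $x_0$ inside it violates the cap $C(\mathbf{f},y)\le\lambda\,\mathrm{OPT}(y)$ at every tight $y<x_0$; an injection alone is feasible only if there is slack everywhere, which is precisely what you are trying to refute, so no contradiction can be extracted this way. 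What is actually needed is a \emph{paired} redistribution --- removing mass from one region and re-inserting it elsewhere --- with a case analysis on whether near-tight stopping times lie to the left of the slack region, to the right, or both, with the shifted amount sized against the available slack (the $\epsilon_L,\epsilon_R$ bookkeeping) and with separate treatment when the shifted weight is an atom. That is exactly cases (a), (b), (c) in the paper's proof of Lemma~\ref{lemma:MSRconstant}, plus the \texttt{MSR-ES}-specific modification that the operation moved forward in case (b) may be a switching rather than a buying operation. Your sentence about ``redistributing mass from just below that endpoint'' points in this direction, but the construction and all of its verification --- where the entire difficulty of the lemma resides --- are left unspecified.
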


The formalized optimization problem is as follows:
\begin{align}
\mathrm{minimize}& ~~\lambda \label{problem:EA1}\\
\mathrm{subject~ to}& ~~\frac{C(\mathbf f,y)}{\mathrm{OPT}(y)} = \lambda,\forall y \in (0,B]\tag{\ref{problem:EA1}a}\\
& ~~\sum_{\mathbf{s}\in\mathcal{S}}\idotsint\limits_{\mathbf{x}\in\mathcal{X}_\mathbf{s}} f_{\mathbf{s}}(\mathbf{x})\ud\mathbf{x}=1 \tag{\ref{problem:EA1}b}\\
& ~~f_{\mathbf s}(\mathbf x) \geq 0,\forall \mathbf{s}\in\mathcal{S}\tag{\ref{problem:EA1}c}
\end{align}

\subsection{Reduction to MSR-E}
For a mixed strategy $\mathbf{f}$, we define the probability density function of an operation $\sigma=(i,j,x)$ as follows:
\begin{equation} \label{def:pdfForEvent}
\mathbf{p}^{(\mathbf{f})}(\sigma) \triangleq \sum_{\mathbf{s}\in\mathcal{S}:(i,j)\in\mathbf{s}}~\idotsint\limits_{\mathbf{x}_{-\{x\}}:\sigma\in\psi(\mathbf{s},\mathbf{x})} f_{\mathbf{s}}(\mathbf{x})\ud(\mathbf{x}_{- \{x\}})
\end{equation}
where $\mathbf{x}_{-\{x\}}$ is the vector $\mathbf{x}$ in which the element corresponding to $x$ is eliminated. Here $p_{(i,j)}^{(\mathbf{f})}(x)$ can also be viewed as a marginal probability density function.
Also the p.d.f of an operation can be expressed in another form:
$$p_{(i,j)}^{(\mathbf{f})}(x)\triangleq \mathbf{p}^{(\mathbf{f})}((i,j,x))$$

Then we give the following lemma:
\begin{lemma}\label{lemma:EAf2p}
For any 2 mixed strategies $\mathbf{f_1},\mathbf{f_2}$ for the consumer, we have $C(\mathbf{f_1},y)=C(\mathbf{f_2},y)$ for all $y\in(0,B]$ if $\mathbf{p}^{(\mathbf{f_1})}(\sigma)=\mathbf{p}^{(\mathbf{f_2})}(\sigma)$ for all $\sigma\in\Sigma$, i.e., for a mixed strategy $\mathbf{f}$, we only care about its marginal ${\mathbf{p}}^{(\mathbf{f})}(\sigma)$.
\end{lemma}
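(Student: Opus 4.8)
The plan is to reduce the lemma to one structural fact: the cost $c(\psi,y)$ of any action decomposes \emph{additively} over its constituent operations, with each summand depending only on that single operation and on $y$. Once this decomposition is in hand, $C(\mathbf{f},y)$ becomes a linear functional of the marginals $\mathbf{p}^{(\mathbf{f})}$, and the conclusion is immediate: equal marginals give equal cost.

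The first and central step is to establish
$$c(\psi,y)=\sum_{\sigma\in\psi} g(\sigma,y),\qquad g\big((i,j,x),y\big)=\mathbf{1}[x\le y]\Big(a_j+b_i\,\mathbf{1}[j=0]+(r_j-r_i)(y-x)\Big),$$
under the conventions $r_0=0$ and $a_0=0$. The entry fees and the buying cost are already local to individual operations, so the only real work is the renting cost. The key observation is that each operation $(i,j,x)$ shifts the instantaneous renting rate by $r_j-r_i$ at time $x$: entering raises it from $0$ to $r_{j_0}$, a renting switch lowers it from $r_i$ to $r_j$, and buying drops it from $r_i$ to $0$. Hence the rate at any time $t\le y$ equals $\sum_{(i,j,x)\in\psi:\,x\le t}(r_j-r_i)$, and integrating over $[0,y]$ telescopes to $\sum_{(i,j,x)\in\psi:\,x\le y}(r_j-r_i)(y-x)$; the buying operation's $-r_i(y-x)$ term automatically cancels the phantom renting that would otherwise be credited after the purchase. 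I would then check the two branches in the definition of $c(\psi,y)$ (stopped before buying, versus buying before stopped) against this formula, which is the place where the indexing of the partial interval containing $y$ must be treated carefully.

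Next I would substitute the decomposition into $C(\mathbf{f},y)=\sum_{\mathbf{s}}\idotsint_{\mathbf{x}\in\mathcal{X}_\mathbf{s}} c_{\mathbf{s}}(\mathbf{x},y)\,f_{\mathbf{s}}(\mathbf{x})\,\ud\mathbf{x}$. Because $|\mathcal{S}|\le 2^n-1$ is finite and each $\psi$ has at most $n+1$ operations, the inner sum over $\sigma\in\psi(\mathbf{s},\mathbf{x})$ is finite and may be exchanged with the integral (or one invokes Tonelli, since every integrand is nonnegative even when $f_{\mathbf{s}}$ carries probability mass). For a fixed $\mathbf{s}$ and a fixed position $\ell$, the operation there is $(j_{\ell-1},j_\ell,x_\ell)$; integrating out every coordinate except $x_\ell$ and regrouping by the operation pair $(i,j)$ reproduces exactly the marginal of (\ref{def:pdfForEvent}), because the strictly decreasing rents guarantee that a given pair $(i,j)$ occupies a unique position in each $\mathbf{s}$ containing it. This yields $C(\mathbf{f},y)=\int_{\Sigma} g(\sigma,y)\,\mathbf{p}^{(\mathbf{f})}(\sigma)\,\ud\sigma$ (summing over operation types and integrating over operation times), a linear functional of the marginals alone, so $\mathbf{p}^{(\mathbf{f_1})}=\mathbf{p}^{(\mathbf{f_2})}$ forces $C(\mathbf{f_1},y)=C(\mathbf{f_2},y)$ for all $y\in(0,B]$.

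The main obstacle is the decomposition itself, not the bookkeeping that follows. Pinning down $g$ requires attributing the entry fee and buying cost to the correct operation, conditioning every contribution on the event $x\le y$ so that operations preempted by nature are not charged, and verifying that the telescoped renting term matches both branches of $c(\psi,y)$, including the fractional interval in which $y$ lies. By contrast, I expect the summation--integration exchange and the regrouping into marginals to be routine, given the finiteness of $\mathcal{S}$ and the bounded length of each action.
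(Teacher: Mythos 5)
Your proposal is correct, and at the structural level it follows the same master plan as the paper: write the cost of an action as a sum over its operations of terms that depend only on the single operation and on $y$, then exchange summation with integration and regroup by operation type, so that $C(\mathbf{f},y)$ becomes a linear functional of the marginals $\mathbf{p}^{(\mathbf{f})}$ alone. The difference is in the decomposition itself. The paper does not prove the lemma in place; it defers it to the proof of Theorem~\ref{theorem:MSR-ES}, where the stronger identity~(\ref{eqn:f2q}b) is shown using the virtual-shop costs $c_{(i,j)}(x,y)$ with prices $a_{(i,j)}=a_i-a_j$, $r_{(i,j)}=r_i-r_j$, $b_{(i,j)}=a_j$, summed only over switching and buying operations (the entering operation carries no separate summand). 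Your $g$ charges an operation nothing unless it actually occurs ($x\le y$), whereas the paper's summand charges every pair $(i,j)$ in the action regardless of whether nature preempts it; the two summands differ by the $x$-independent quantity $(a_i-a_j)+(r_i-r_j)y$, and these differences telescope along an action to exactly your entering term $a_{j_0}+r_{j_0}y$, which is why both decompositions reproduce $c(\psi,y)$. What each buys: your accounting is the more transparent way to see the lemma (preempted operations cost nothing, rents telescope), and since $\Sigma$ explicitly contains the entering operations $(0,j,0)$, the hypothesis on marginals does cover your extra entering term; the paper's accounting is engineered so that each summand is \emph{literally} the cost function of a ski-rental shop with entry fee, which is what upgrades the lemma into the reduction of \texttt{MSR-ES} to \texttt{MSR-E} --- under your $g$ a switch costs nothing before it happens, which no \texttt{MSR-E} shop does, so your decomposition proves the lemma but could not replace the paper's in Theorem~\ref{theorem:MSR-ES}. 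One caution for the verification step you flagged: the paper's displayed definition of $c(\psi,y)$ has an indexing slip in the renting branch (the term $r_{j_k}(y-x_k)$ paired with the condition $x_{k-1}\le y<x_k$ is not the cost actually incurred), so you should check your telescoped formula against the intended cost of the action --- which it does match --- rather than against that display read literally.
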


\begin{figure*}[htb]
\subfigure{
\begin{minipage}[b]{0.32\linewidth}
\centering
\includegraphics[width=1\textwidth]{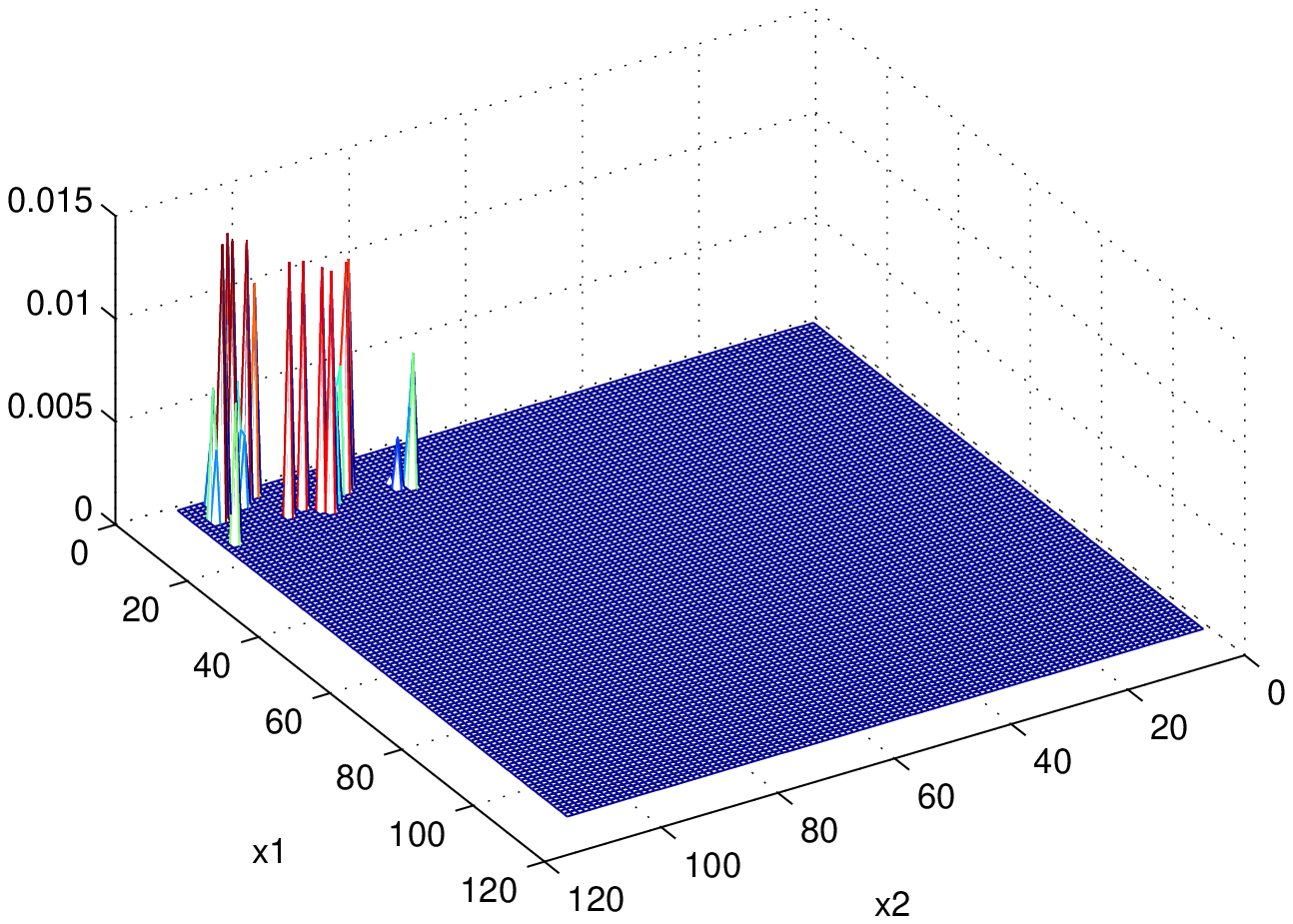}
\caption{PDF of the strategies with switching actions, i.e., function $f_\mathbf{s}^*(x)$ when $\mathbf{s}=\{2,1,0\}$. $x_1$ is the switching time and $x_2$ is the buying time.}
\end{minipage}
\quad
\begin{minipage}[b]{0.32\linewidth}
\centering
\includegraphics[width=1\textwidth]{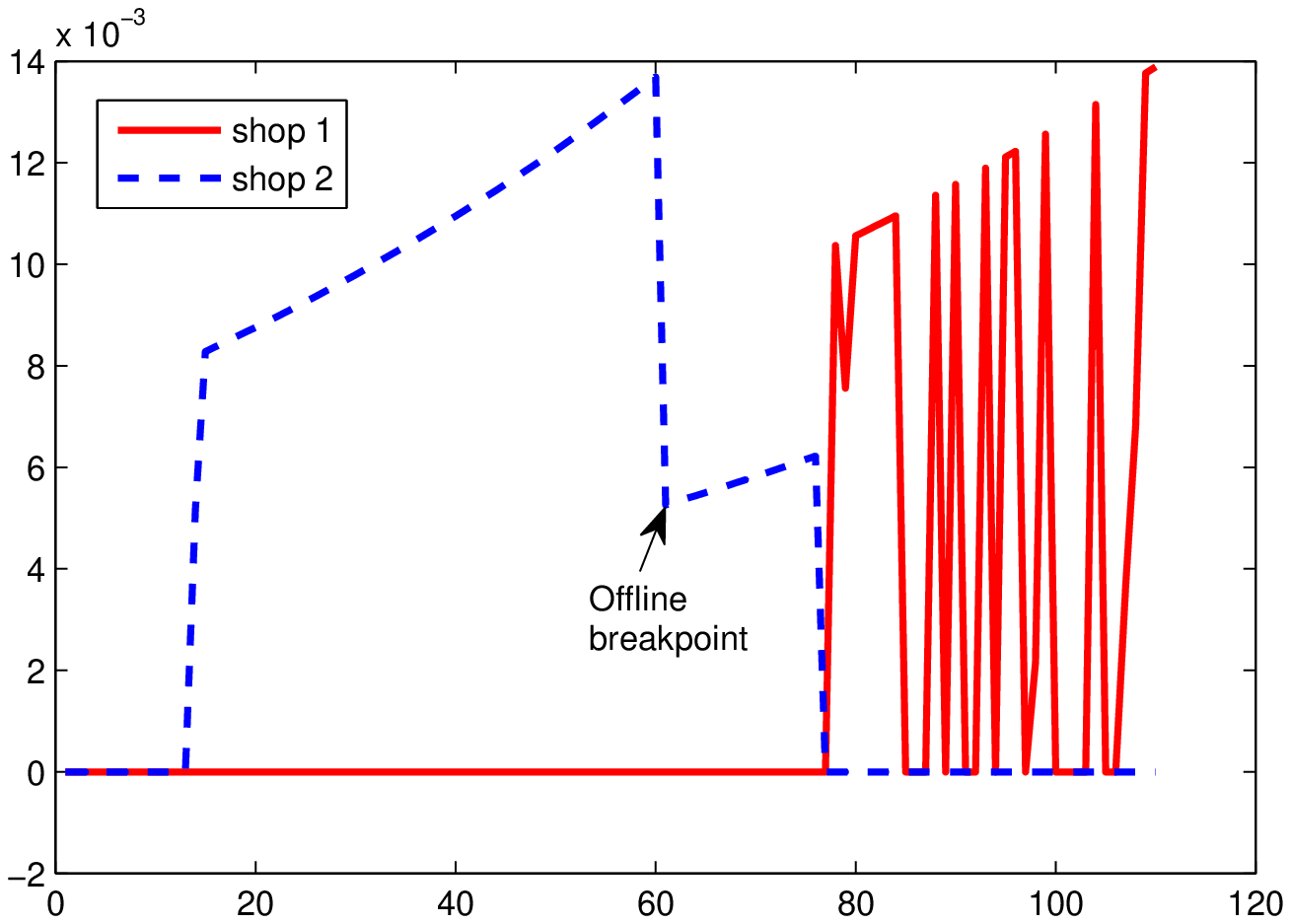}
\caption{PDF of the strategies without switching actions, i.e., function $f_\mathbf{s}^*(x)$ when $|s|=2$. Blue: $f_\mathbf{s}^*(x)$ when $\mathbf{s}=\{2,0\}$; Red: $f_\mathbf{s}^*(x)$ when $\mathbf{s}=\{1,0\}$.}
\end{minipage}
\quad
\begin{minipage}[b]{0.32\linewidth}
\centering
\includegraphics[width=1\textwidth]{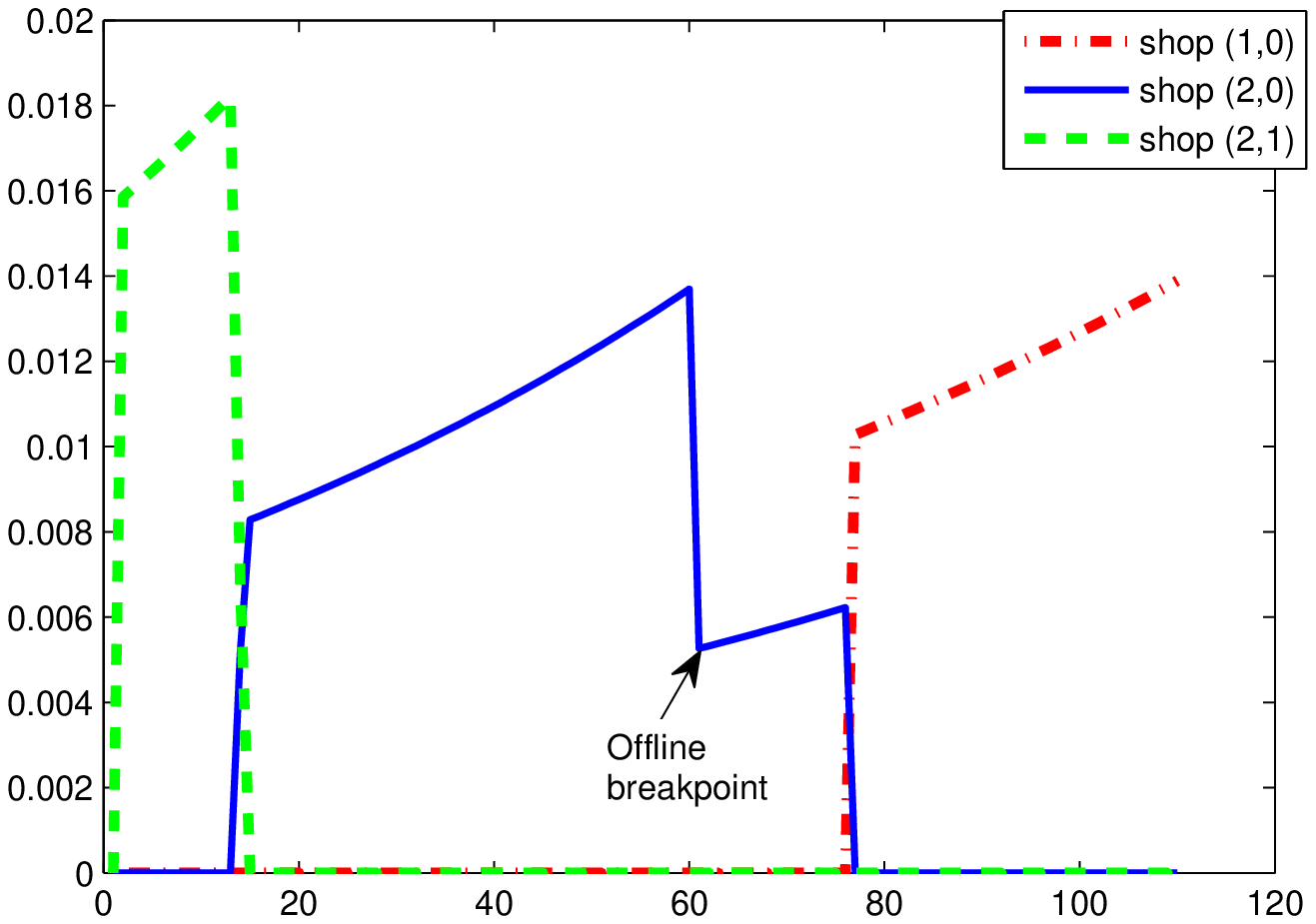}
\caption{PDF of the virtual shops, i.e., function $p_{(i,j)}^{(\mathbf{f})}(x)$ when $(i,j)$ is a virtual shop. Green: ${p_{(2,1)}^*}(x)$; Blue: ${p_{(2,0)}^*}(x)$; Red: ${p_{(1,0)}^*}(x)$.}
\end{minipage}
}
\end{figure*}

Therefore, the target of the problem converts from the optimal $\mathbf{f}$ to the optimal $\mathbf{p}$. Note that for the switching operation from shop $i$ to shop $j$, we do not care about which action $\psi$ it belongs to. Instead, the only thing that matters is when this switching operation happens. This is similar to the one shop case, in which we only care about when the consumer decides to buy. Thus, for each switching pair $(i,j)$, we consider it as a virtual shop. Among these $O(n^2)$ virtual shops, no switching will appear. Thus, we show that the \texttt{MSR-ES} problem is almost the same as \texttt{MSR-E}.

Now we show our settings for virtual shops. For all $i,j\in[n]$ such that $a_i<a_j,r_i>r_j$, we define the virtual shop $(i,j)$ with entry fee $a_{(i,j)}=a_i-a_j$, renting price $r_{(i,j)}=r_i-r_j$ and buying price $b_{(i,j)}=a_j$. We regard the switching time from $i$ to $j$ as the buying time in  virtual shop $(i,j)$. For special case, the prices of virtual shop $(j,0)$ is the same as the real shop $j$. Through this setting, it is not hard to verify that for any action $\psi$ and any $0\leq y\leq B$, the cost function $c(\psi,y)$ is exactly the summation of the cost in the corresponding virtual shops. Similar to a real shop, we define the cost for each virtual shop $(i,j)$:
\begin{eqnarray*}
C_{(i,j)}(\mathbf{p^{(f)}},y)&\triangleq&\int_{0}^{y}(a_{(i,j)}+r_{(i,j)}x+b_{(i,j)})p_{(i,j)}^{(\mathbf{f})}(x)\ud x\\&&+\int_{y}^{B}(a_{(i,j)}+r_{(i,j)}y)p_{(i,j)}^{(\mathbf{f})}(x)\ud x
\end{eqnarray*}
Now we are ready to formalize \texttt{MSR-ES} by the following theorem:
\begin{theorem}
\label{theorem:MSR-ES}
The optimization problem for the consumer can be formalized as follows:
\begin{align}
\mathrm{minimize}&~~ \lambda \label{eqn:f2q}\\
\mathrm{subject ~to}&~~ \frac{C(\mathbf{f},y)}{\mathrm{OPT}(y)}=\lambda \tag{\ref{eqn:f2q}a}\\
&~~C(\mathbf{f},y)={\sum_{(i,j)\in[n]^2:a_i<a_j,r_i>r_j}C_{(i,j)}(\mathbf{p^{(f)}},y)}\nonumber\\
&\quad \quad \quad ~~+{\sum_{j\in[n]}C_{(j,0)}(\mathbf{p^{(f)}},y)},  \quad \forall y\in [0,B] \tag{\ref{eqn:f2q}b}\\
&~~{\sum\limits_{j\in[n]}\int_{0}^{B}p_{(j,0)}^\mathbf{(f)}(x)\ud x=1} \tag{\ref{eqn:f2q}c}\\
&\sum\limits_{j\in[n]:a_j<a_i,r_j>r_i}\int_{y}^{B}p_{(j,i)}^\mathbf{(f)}(x)\ud x \le \int_{y}^{B}p_{(i,0)}^\mathbf{(f)}(x)\ud x \nonumber\\
&+\sum\limits_{j\in[n]:a_i>a_j,r_i<r_j}\int_{y}^{B}p_{(i,j)}^\mathbf{(f)}(x)\ud x,~~\forall i\in[n],y\in(0,B] \tag{\ref{eqn:f2q}d}
\end{align}
\end{theorem}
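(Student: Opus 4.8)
The plan is to read problem (\ref{eqn:f2q}) as the original min--max problem of Section~4 rewritten purely in terms of the operation marginals $\mathbf{p}^{(\mathbf{f})}$, and to verify its four ingredients one at a time. The objective together with constraint (\ref{eqn:f2q}a) is immediate from Lemma~\ref{lemma:EAconstant}: at any optimal $\mathbf{f}^*$ the ratio $R(\mathbf{f}^*,y)$ equals a single constant $\lambda$ on all of $(0,B]$, so minimizing $\max_y R(\mathbf{f},y)$ is the same as minimizing $\lambda$ subject to $C(\mathbf{f},y)/\mathrm{OPT}(y)=\lambda$ for every $y\in(0,B]$. As usual one first relaxes the equality to ``$\le\lambda$'' and then observes that the optimum forces it to bind everywhere.

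Next I would establish the cost decomposition (\ref{eqn:f2q}b). I would start from the pointwise identity already noted before the theorem: for each action $\psi$ and each $y$, $c(\psi,y)$ equals the sum, over the virtual shops $(i,j)$ and $(j,0)$ occurring in $\mathbf{s}(\psi)$, of the single-shop cost obtained by plugging the virtual prices $a_{(i,j)}=a_i-a_j$, $r_{(i,j)}=r_i-r_j$, $b_{(i,j)}=a_j$ into the per-shop cost at the corresponding operation time (this is the telescoping that makes the $a_i$ and $r_i$ terms of a switch cancel correctly). Integrating this identity against $f_{\mathbf{s}}(\mathbf{x})$, exchanging the finite sum over $\mathbf{s}\in\mathcal{S}$ with the integrals, and collapsing the contribution of each fixed operation $\sigma=(i,j,x)$ into its marginal via definition (\ref{def:pdfForEvent}) yields exactly $C(\mathbf{f},y)=\sum_{(i,j)}C_{(i,j)}(\mathbf{p}^{(\mathbf{f})},y)+\sum_j C_{(j,0)}(\mathbf{p}^{(\mathbf{f})},y)$; Lemma~\ref{lemma:EAf2p} is what guarantees this expression depends on $\mathbf{f}$ only through its marginals, so the rewrite is well defined. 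Constraint (\ref{eqn:f2q}c) is then the normalization: every admissible action in $\Psi_c$ terminates in exactly one buying operation $(j,0,x)$, so the total mass of the buying marginals, $\sum_j\int_0^B p_{(j,0)}^{(\mathbf{f})}(x)\,\ud x$, equals the total probability $1$.

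The main work, and the step I expect to be the real obstacle, is constraint (\ref{eqn:f2q}d), which is a flow-conservation (consistency) condition on the switching marginals. For the necessity direction I would argue structurally: along any admissible action the consumer occupies each shop $i$ on a single time interval, entering it either at time $0$ or by a switch $(j,i)$ with $j>i$, and leaving it either by buying $(i,0)$ or by a switch-out $(i,k)$ with $k<i$, the departure time being strictly larger than the arrival time. Hence every instance that \emph{switches into} $i$ after time $y$ must also \emph{leave} $i$ after time $y$; pushing this injection through the marginals gives $\sum_{j}\int_y^B p_{(j,i)}^{(\mathbf{f})}\le\int_y^B p_{(i,0)}^{(\mathbf{f})}+\sum_{k}\int_y^B p_{(i,k)}^{(\mathbf{f})}$, the slack accounting for departures of instances already present in $i$ at time $y$ or entering at time $0$. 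For the reduction to \texttt{MSR-E} to be exact one also needs the converse: any collection of nonnegative marginals obeying (\ref{eqn:f2q}c)--(\ref{eqn:f2q}d) is realized by some admissible $\mathbf{f}$. I would obtain this by a flow-decomposition argument, viewing the marginals as a time-indexed flow on the shop order $n>\cdots>1$ and decomposing it into admissible entry-to-buy paths, with (\ref{eqn:f2q}d) being precisely the feasibility (nonnegative residual outflow at every cut $y$) that makes such a decomposition possible. Combining the two directions shows that optimizing over $\mathbf{f}$ is equivalent to optimizing over marginals subject to (\ref{eqn:f2q}b)--(\ref{eqn:f2q}d), which is the asserted formalization; the delicate point is making the path-decomposition rigorous at the level of densities, in particular threading together the possible mass points at entering and switching times.
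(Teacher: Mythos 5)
Your proposal is correct and, on the parts the paper actually proves, follows the same route: constraint (\ref{eqn:f2q}a) is delegated to Lemma~\ref{lemma:EAconstant}, and (\ref{eqn:f2q}b) is obtained exactly as in the paper, by the pointwise telescoping identity $c(\psi,y)=\sum_{\tau}c_{(j_{\tau-1},j_{\tau})}(x_{\tau},y)$ for the virtual prices $a_{(i,j)}=a_i-a_j$, $r_{(i,j)}=r_i-r_j$, $b_{(i,j)}=a_j$, followed by integration against $f_{\mathbf{s}}$ and collapsing onto the marginals of (\ref{def:pdfForEvent}). Where you genuinely diverge is on (\ref{eqn:f2q}c) and (\ref{eqn:f2q}d): the paper's own proof disposes of these in a single sentence, asserting without argument that they are necessary and sufficient for a system of marginals $\mathbf{p}$ to be realizable as $\mathbf{p}^{(\mathbf{f})}$ for some legal strategy $\mathbf{f}$. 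You instead sketch both directions --- necessity from the observation that any switch into shop $i$ after time $y$ forces an exit (a buy or a switch-out) from $i$ after $y$, and sufficiency by a flow-decomposition / Hall-type matching of exits to earlier entries along the strictly decreasing shop order. This is precisely the content the paper omits, and your plan for it is sound; indeed (\ref{eqn:f2q}d) is equivalent, after subtracting both sides from the corresponding totals, to ``exits from $i$ up to time $y$ do not exceed entries into $i$ up to time $y$,'' which is the cut condition your path decomposition needs. The delicate points you flag are real: the matching must respect strict time increase between consecutive switches (only the final buy may share its time with the last switch), and possible mass points must be threaded separately. One further remark: as printed, the second sum in (\ref{eqn:f2q}d) carries the condition $a_i>a_j,\ r_i<r_j$, under which $(i,j)$ is not an admissible switch at all; your structural reading (exits from $i$ are buys $(i,0)$ or switches $(i,k)$ with $a_i<a_k$, $r_i>r_k$) silently corrects what is evidently a typo, and is the version under which both your necessity argument and the paper's claim hold.
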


Thus, \texttt{MSR-ES} can be regarded as \texttt{MSR-E} with $O(n^2)$ shops. The difference is that the summation of the buying probabilities in each virtual shop may be larger than 1. Fortunately, those nice properties of \texttt{MSR-E} still hold for \texttt{MSR-ES}.
\begin{lemma}\label{lemma:MSR-ES seg}
Lemma~\ref{lemma:MSR-E finite} and \ref{lemma:moveP MSR-E} still hold for the virtual shops in the \texttt{MSR-ES} problem.
\end{lemma}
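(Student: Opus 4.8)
The plan is to leverage the reduction established in Theorem~\ref{theorem:MSR-ES}, which recasts \texttt{MSR-ES} as an \texttt{MSR-E} instance over the $O(n^2)$ virtual shops $(i,j)$, the only genuine novelty being the coupling constraints (\ref{eqn:f2q}d). Since the cost of a strategy $\mathbf{f}$ depends on it only through the operation marginals $p_{(i,j)}^{(\mathbf{f})}$ (Lemma~\ref{lemma:EAf2p}), and each virtual-shop cost $C_{(i,j)}$ has exactly the same algebraic form as the cost of a real shop in \texttt{MSR-E}, I would re-run the two perturbation arguments behind Lemma~\ref{lemma:MSR-E finite} and Lemma~\ref{lemma:moveP MSR-E}, and then certify that each perturbation can be realized without violating (\ref{eqn:f2q}d).

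First, for the finiteness part (the analogue of Lemma~\ref{lemma:MSR-E finite}), I would argue by contradiction: suppose some virtual shop $(i,j)$ carries a probability mass at an interior point $x_0\in(0,B]$. Because $C_{(i,j)}$ mirrors the \texttt{MSR-E} single-shop cost, the same local computation shows that smearing this mass over a short interval $(x_0-\delta,x_0]$ strictly decreases $C(\mathbf{f},y)$ for $y$ in that window while leaving it unchanged elsewhere, hence strictly lowers the worst-case ratio after renormalization. The subtlety is that this smearing alters the tail integrals $\int_y^B p_{(i,j)}$ appearing in (\ref{eqn:f2q}d). I would handle it by performing the perturbation at the level of the underlying actions $\psi$ rather than on raw marginals: take the positive-probability actions whose switch/buy operation sits at $x_0$ and jitter that operation time within $(x_0-\delta,x_0]$, keeping the rest of each action sequence fixed. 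For $\delta$ small enough the strict ordering $0<x_1<\cdots<x_{|\psi|-1}$ is preserved, so the result is a legitimate member of $\Psi_c$ and therefore automatically satisfies (\ref{eqn:f2q}d); by Lemma~\ref{lemma:EAf2p} its cost equals that computed from the perturbed marginals, giving the desired contradiction.

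Second, for the segmentation part (the analogue of Lemma~\ref{lemma:moveP MSR-E}), I would order the virtual shops by the ratio $b_{(i,j)}/r_{(i,j)}$ and run the exchange argument of Lemma~\ref{lemma:moveP MSR} and Lemma~\ref{lemma:moveP MSR-E}: if two virtual shops with the ``wrong'' ratio ordering both receive positive density on overlapping windows, swapping suitable slivers of their buying probability between an earlier and a later window strictly improves the ratio, contradicting optimality. Again the swap must respect (\ref{eqn:f2q}d); I would realize it as a modification of whole actions so that the perturbed strategy lies in $\Psi_c$ by construction and (\ref{eqn:f2q}d) holds for free, with Lemma~\ref{lemma:EAf2p} reducing the cost comparison to the marginal-level computation already carried out for \texttt{MSR-E}. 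This yields the $n'+1$ breakpoints partitioning $[0,B]$ among the virtual shops, with $n'=O(n^2)$.

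The step I expect to be the main obstacle is precisely reconciling the improving perturbations with the coupling constraints (\ref{eqn:f2q}d): a naive rearrangement of a single marginal $p_{(i,j)}$ can shrink the outflow tail of shop $i$ and thereby violate the flow-balance inequality for $i$. The resolution hinges on always lifting the perturbation to the action level and invoking Lemma~\ref{lemma:EAf2p}, so that feasibility is inherited automatically while the cost, being marginal-dependent, still strictly decreases exactly as in \texttt{MSR-E}. Carefully checking that the action-level jitter and swap can be carried out without breaking either the time-ordering or the entry/switch structure of each $\psi$ is the technical heart of the argument.
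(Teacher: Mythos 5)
Your overall architecture (reduce to virtual shops via Theorem~\ref{theorem:MSR-ES}, work with the marginals of Lemma~\ref{lemma:EAf2p}, rerun the two \texttt{MSR-E} arguments, then certify feasibility (\ref{eqn:f2q}d)) is the right frame, and you are correct that (\ref{eqn:f2q}d) is the only genuinely new obstacle. But both halves of your execution have gaps. For the finiteness half, your key computation is wrong: smearing a point mass $\pi$ of virtual shop $(i,j)$ from $x_0$ backward onto $(x_0-\delta,x_0]$ does \emph{not} decrease the cost on that window and leave it unchanged elsewhere; it \emph{increases} the cost for every $y$ in the window by roughly $b_{(i,j)}\pi$ (those stopping times now see the buying price paid on the moved mass), while the only saving, of order $r_{(i,j)}\delta\pi$, occurs for $y\ge x_0$. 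So for small $\delta$ this perturbation strictly worsens the worst-case ratio rather than improving it. In fact no perturbation is needed here at all, which is what the paper's pointer to Lemma~\ref{lemma:MSRfinite} amounts to: each $C_{(i,j)}(\mathbf{p^{(f)}},\cdot)$ is non-decreasing in $y$, so a mass $\pi$ at $x_0\in(0,B]$ forces an upward jump of size $b_{(i,j)}\pi>0$ in $C(\mathbf{f^*},\cdot)$ at $y=x_0$ (note $b_{(i,j)}>0$ for every virtual shop, since $b_{(i,j)}=a_j>a_i\ge 0$ for switch shops and $b_{(j,0)}=b_j>0$ for buy shops), which contradicts Lemma~\ref{lemma:EAconstant} because $\lambda\,\mathrm{OPT}(\cdot)$ is continuous; the density bound then follows by differentiating the constant-ratio identity through the decomposition (\ref{eqn:f2q}b). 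This route never touches (\ref{eqn:f2q}d), so the difficulty you spend your first argument on simply does not arise.

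For the segmentation half, the feasibility worry you raise is exactly right (and is silently ignored by the paper), but your resolution is an assertion, not an argument, and its natural implementation can fail. To realize the sliver swap at the action level you must (i) move operation $B=(i',j')$ from the late window into the early window inside actions whose \emph{preceding} operation occurs before the early window, and (ii) move operation $A=(i,j)$ from the early window into the late window inside actions whose \emph{following} operation occurs after the late window. Nothing guarantees that a positive measure of such actions exists: all of $B$'s density in the late window may come from actions that switch into $i'$ strictly between the two windows, and all of $A$'s density in the early window may come from actions that buy shortly after it. Repairing this by also relocating the offending neighboring operations perturbs additional marginals, and then the cost comparison is no longer the clean two-shop exchange of Lemma~\ref{lemma:moveP MSR} on which the contradiction rests. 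So the step you yourself call the technical heart --- ``feasibility is inherited automatically'' --- is precisely the step that is still missing from your proof.
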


As in other extensions before, the probability density function of the virtual shops is segmented and each segment is an exponential function. The consumer only assigns positive buying probability in exactly one virtual shop at any time. As the buying time increases, she follows the virtual shop order in which the ratio between buying price and renting price is increasing.

In the following 3 figures, we give a simple example when $n=2$ in order to make Lemma~\ref{lemma:MSR-ES seg} easier to understand. We approach the optimal strategy through the discrete model and the figures show the p.d.f. functions in the optimal strategy. It can be seen that Lemma~\ref{lemma:MSR-ES seg} is verified. Since there is a proof for Lemma~\ref{lemma:MSR-ES seg}, we do not give more complicated examples. The parameters for the 2 shops are as follows: $a_1=80, r_1=1, b_1=110, a_2=20, r_2=2, b_2=180$.

\section{Conclusions}
In this paper, we consider the multi-shop ski rental problem (\texttt{MSR}) and its extensions (\texttt{MSR-S}, \texttt{MSR-E}, and \texttt{MSR-ES}), in which there are multiple shops and the consumer wants to minimize the competitive ratio.

For each problem, we prove that in the optimal mixed strategy of the consumer, she only assigns positive buying probability to \emph{exactly one} shop at any time. The shop order strongly relates to the ratio between buying price and renting price, even in which entry fee is involved. Further, in the basic problem (\texttt{MSR}), we derive a linear time algorithm for computing the optimal strategy of the consumer. For \texttt{MSR-S}, we prove that under the optimal mixed strategy, the consumer only switches to another shop at the buying time.

In problems \texttt{MSR-E} and \texttt{MSR-ES}, we show that the optimal strategy can be solved if the breakpoints are known. Similar to the basic problem (\texttt{MSR}), we conjecture that the quasi-concave property also holds for these two variants. Further, we conjecture that there exists an iteration algorithm using gradient decent technique, which might converge to the optimal solution.



%
\bibliographystyle{abbrv}
\bibliography{sigproc}  
%
%
\appendix
\section{PROOF OF THEOREM 1}

\subsection{Proof of Lemma~\ref{lemma:MSRconstant}}
Denote the optimal game value achieved by the optimal strategy $\mathbf{p^*}$ by $\lambda_{\mathbf p^*}$. For any strategy $\mathbf{p}\in \calP$ and any $y \in (0,B]$, denote $\lambda_{\mathbf{p}}(y)$ to be the difference between the maximum competitive ratio between the ratio on $y$, i.e.,
\begin{displaymath}
\lambda_{\mathbf p}(y) \triangleq \sup_{y\in (0,B]} \left\{\frac{C(\mathbf{p},y)}{r_1 y}\right\} - \frac{C(\mathbf{p},y)}{r_1 y}
\end{displaymath}
By definition, we know that $\lambda_{\mathbf{p}}(y)\geq 0, \forall \mathbf{p}\in \calP, y \in (0,B] $.

To prove this lemma, we first to state the following propositions:
\begin{proposition}
\label{proposition:con1}
For any $\mathbf{p}\in \calP$ and $y_0 \in (0,B]$, if there exists a constant $\delta>0$, such that $\forall y\in (y_0-\delta, y_0)$, $\lambda_{\mathbf{p}}(y)>\lambda_{\mathbf{p}}(y_0)$, then at least one of the following properties is satisfied:
\begin{enumerate}
\item There exists some $j\in [n]$ such that $p_j(y_0)$ is a probability mass.
\item There exists some $j \in [n]$, such that $\int_{y_0-\delta}^{y_0} p_j(t)\ud t > 0$.
\end{enumerate}
\end{proposition}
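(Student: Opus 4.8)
The plan is to argue by contradiction, after first rewriting the hypothesis in a more transparent form. I would define the normalized ratio function $g(y) = C(\mathbf{p}, y)/(r_1 y)$ for $y \in (0, B]$ and set $\Lambda = \sup_{y \in (0,B]} g(y)$, so that $\lambda_{\mathbf{p}}(y) = \Lambda - g(y)$. With this notation the assumed inequality $\lambda_{\mathbf{p}}(y) > \lambda_{\mathbf{p}}(y_0)$ on $(y_0 - \delta, y_0)$ is exactly $g(y) < g(y_0)$ there; that is, the ratio strictly increases as $y$ climbs to $y_0$ from the left. I would then suppose, toward a contradiction, that \emph{both} (1) and (2) fail. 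Together these say that no shop assigns any probability — neither density on the open interval $(y_0 - \delta, y_0)$ nor mass at the point $y_0$ — to the set $(y_0 - \delta, y_0]$; equivalently $\sum_j \int_{y_0 - \delta}^{y_0} p_j(t)\,\ud t = 0$.

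The core step is a direct computation of $C(\mathbf{p}, y)$ on this interval. Recall $C_j(p_j, y) = \int_0^y (r_j x + b_j) p_j(x)\,\ud x + y r_j \int_y^B p_j(x)\,\ud x$. For $y \in (y_0 - \delta, y_0]$ the first integral equals the constant $A_j := \int_0^{y_0 - \delta}(r_j x + b_j) p_j(x)\,\ud x$, because $(y_0 - \delta, y]$ carries no probability, and the tail integral equals the constant $N_j := \int_{y_0}^B p_j(x)\,\ud x$, again because $(y, y_0]$ carries none. Hence $C_j(p_j, y) = A_j + y r_j N_j$ is affine in $y$, and summing gives $C(\mathbf{p}, y) = A + R y$ with $A = \sum_j A_j \ge 0$ and $R = \sum_j r_j N_j \ge 0$ both constant on $(y_0 - \delta, y_0]$. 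Consequently $g(y) = A/(r_1 y) + R/r_1$ is non-increasing in $y$ there, so $g(y) \ge g(y_0)$ for every $y \in (y_0 - \delta, y_0)$. This contradicts $g(y) < g(y_0)$ and establishes the proposition.

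The step demanding the most care is the bookkeeping at the right endpoint $y_0$, where the integration convention (integrals taken over half-open intervals $(a,b]$) and a possible probability mass interact. I would make explicit that negating (1) removes any atom at $y_0$ while negating (2) removes all density on $(y_0 - \delta, y_0)$, and that exactly these two facts are what force both $\int_0^y (r_j x + b_j) p_j\,\ud x$ and $\int_y^B p_j\,\ud x$ to be constant up to and including $y = y_0$ — so that the affine formula, and hence $g(y) \ge g(y_0)$, holds at the endpoint itself and not merely in the interior. A minor side case is $y_0 - \delta \le 0$, handled by replacing $y_0 - \delta$ with $\max\{0, y_0 - \delta\} > 0$ (equivalently, shrinking $\delta$), after which the same argument applies verbatim.
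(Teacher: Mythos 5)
Your proof is correct and takes essentially the same route as the paper's: both argue by contradiction, noting that if neither property holds then $C(\mathbf{p},y)=A+Ry$ is affine on $(y_0-\delta,y_0]$ with constants $A,R\ge 0$, so the ratio $C(\mathbf{p},y)/(r_1 y)$ is non-increasing there and hence $\lambda_{\mathbf{p}}(y)\le\lambda_{\mathbf{p}}(y_0)$, contradicting the hypothesis. Your extra bookkeeping about the half-open integration convention at $y_0$ and the side case $y_0\le\delta$ only makes explicit what the paper leaves implicit.
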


\begin{proof}
If both properties are unsatisfied, then $\forall j \in [n]$, $p_{j,y_0}=0$ and  $ \int_{y_0-\delta}^{y_0}p_j(t)\ud t=0$. Then for any $j\in[n]$ and any $y\in (y_0-\delta, y_0)$, we have
\begin{eqnarray*}
C_j(p_j, y) &=& \int_0^{y_0-\delta} (r_j t+b_j)p_j(t) \ud t+ \int_{y_0}^{B} r_j yp^*_j(t) \ud t\\
C_j(p_j, y_0) &=& \int_0^{y_0-\delta} (r_j t+b_j)p_j(t) \ud t+ \int_{y_0}^{B} r_j y_0 p^*_j(t) \ud t
\end{eqnarray*}
This directly derives that
\begin{displaymath}
\frac{C(\mathbf p, y)}{r_1 y} \geq \frac{C(\mathbf p, y_0)}{r_1 y_0}
\end{displaymath}
which contradicts the fact that $\lambda_{\mathbf{p}}(y) > \lambda_{\mathbf{p}}(y_0)$.
\end{proof}

\begin{proposition}
\label{proposition:con2}
For any $\mathbf{p}\in \calP$ and $0<a<b$, if $\forall y\in (a,b)$, $\lambda_{\mathbf{p}}(y)>\lambda_{\mathbf{p}}(a)$, then there exists some $j\in [n]$, such that $\int_0^{a}p_j(t)\ud t>0$.
\end{proposition}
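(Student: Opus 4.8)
The plan is to argue by contradiction, in the same reductio style as Proposition~\ref{proposition:con1}. First I would rewrite the hypothesis in terms of the ratio $g(y)\triangleq C(\mathbf{p},y)/(r_1 y)$: since the supremum term in $\lambda_{\mathbf p}(\cdot)$ is a constant independent of the argument, the assumption $\lambda_{\mathbf p}(y)>\lambda_{\mathbf p}(a)$ for all $y\in(a,b)$ is exactly $g(y)<g(a)$ for all $y\in(a,b)$. Hence it suffices to exhibit a single $y\in(a,b)$ with $g(y)\ge g(a)$. Suppose, toward a contradiction, that $\int_0^a p_j(t)\,\ud t=0$ for every $j$; since $p_j\ge 0$, this forces $p_j\equiv 0$ a.e.\ on $[0,a]$, with no probability mass anywhere in $[0,a]$ (by the integration convention, a mass at $a$ would be counted in $\int_0^a$).

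Under this assumption the cost at $a$ collapses to a clean expression. Writing $W_j\triangleq\int_a^B p_j(x)\,\ud x$, the buying part of $C_j(p_j,a)$ vanishes because $p_j\equiv 0$ on $[0,a]$, so $C_j(p_j,a)=a r_j W_j$ and therefore $C(\mathbf{p},a)=a\sum_j r_j W_j$. For $y\in(a,B]$ I would then compute the difference $C_j(p_j,y)-r_j W_j\, y$. Splitting $W_j=\int_a^y p_j+\int_y^B p_j$ and cancelling the renting terms yields
$$C_j(p_j,y)-r_j W_j\, y=\int_a^y\big(b_j-r_j(y-x)\big)\,p_j(x)\,\ud x,$$
where the integral ranges over $(a,y]$ and automatically includes any probability masses located there.

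The key observation is the sign of the integrand. For $x\in(a,y]$ we have $b_j-r_j(y-x)\ge b_j-r_j(y-a)$, which is strictly positive as soon as $y-a<b_j/r_j$. Because $b_j/r_j$ is decreasing in $j$ (recall $b_1>\cdots>b_n$ and $r_1<\cdots<r_n$), it is enough to choose $y\in(a,b)$ with $y-a<b_n/r_n$; such a $y$ exists since $(a,b)$ is nonempty and $y$ may be taken arbitrarily close to $a$. For this $y$ every summand is nonnegative, so $C(\mathbf{p},y)\ge y\sum_j r_j W_j=\tfrac{y}{a}\,C(\mathbf{p},a)$, i.e.\ $g(y)\ge g(a)$, contradicting the hypothesis.

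I expect the main obstacle to be the bookkeeping around probability masses rather than any deep idea: because $C(\mathbf{p},\cdot)$ may jump at mass points, one cannot simply invoke a derivative argument such as $g'(a^+)\ge 0$; instead the proof must work with the exact decomposition above and rely on the fact that only a single favourable $y$ near $a$ is needed. The decomposition confines all mass points to the region $(a,y]$, where the integrand is positive, so they only help the inequality; the one thing to verify carefully is the integration convention at the endpoint $a$, which the standing assumption $\int_0^a p_j=0$ already handles.
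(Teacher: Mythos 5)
Your proof is correct and takes essentially the same route as the paper's: both argue by contradiction, choose $y$ slightly to the right of $a$ (you require $y-a<b_n/r_n$, the paper uses $y_0-a<\min\{\tfrac{b-a}{2},\tfrac{b_n}{2r_n}\}$), and exploit the inequality $r_jx+b_j>r_jy$ for $x\in(a,y]$ to conclude $\frac{C(\mathbf{p},y)}{r_1y}\ge\frac{C(\mathbf{p},a)}{r_1a}$, contradicting the hypothesis. The only difference is presentational: you isolate the difference $C_j(p_j,y)-r_jW_jy$ as the explicit integral $\int_a^y\bigl(b_j-r_j(y-x)\bigr)p_j(x)\,\ud x$, whereas the paper performs the same cancellation of the renting terms beyond $y$ directly inside the ratio difference.
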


\begin{proof}
Assume by contradiction that $\forall j\in [n]$, we have $\int_0^{a}p_j(t)\ud t=0$. For any $y_0\in(a,a+\min\{\frac{b-a}{2},\frac{b_n}{2r_n}\})$, we have
\begin{align*}
&\frac{C(\mathbf p, y_0)}{r_1 y_0} - \frac{C(\mathbf p, a)}{r_1 a}\\
\l\ge&\frac{\sum_j \int_{a}^{y_0} (r_j t+b_j)p_j(t) \ud t}{r_1 y_0}-\frac{\sum_j \int_{a}^{y_0} r_j p_j(t) \ud t}{r_1 }
\geq 0
\end{align*}
The last inequality is because $r_j t+b_j>r_j y_0$ for $a< t\leq y_0$. But note that $\lambda_{\mathbf{p}}(y_0)>\lambda_{\mathbf{p}}(a)$ which makes a contradiction. Thus, we prove the proposition.
\end{proof}

\begin{proposition}
\label{proposition:con3}
For any $\mathbf{p}\in \calP$, any $\epsilon>0$ and $y_0\in (0,B]$, there exists a constant $\delta\in (0,y_0/4)$, such that for any $y \in (y_0-\delta, y_0)$, $\lambda_{\mathbf p}(y) \geq \lambda_{\mathbf p}(y_0)-\epsilon$.
\end{proposition}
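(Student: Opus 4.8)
The plan is to strip the statement down to a one-sided regularity property of the single function $g(y):=C(\mathbf{p},y)/(r_1y)$. Writing $M:=\sup_{z\in(0,B]}g(z)$ so that $\lambda_{\mathbf{p}}(y)=M-g(y)$, observe that $g(y_0)$ is finite (since $c_j(x,y_0)\le r_jB+b_j\le r_nB+b_1$ gives $C(\mathbf{p},y_0)\le r_nB+b_1$, and $y_0>0$). Hence the pointwise bound $g(y)\le g(y_0)+\epsilon$ immediately yields $\lambda_{\mathbf{p}}(y)=M-g(y)\ge M-g(y_0)-\epsilon=\lambda_{\mathbf{p}}(y_0)-\epsilon$, and this implication remains valid even if $M=+\infty$. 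Thus it suffices to produce a $\delta\in(0,y_0/4)$ on which $g(y)\le g(y_0)+\epsilon$ holds for every $y\in(y_0-\delta,y_0)$.

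The key observation I would use is that $C(\mathbf{p},\cdot)$ is non-decreasing. For each fixed buying time $x$, the cost $c_j(x,y)$ equals $r_jy$ for $y<x$ and the constant $r_jx+b_j$ for $y\ge x$; as $y$ grows it increases linearly and then jumps up (by $b_j>0$) to its final value, so $c_j(x,\cdot)$ is non-decreasing in $y$ for every $x$. Integrating these pointwise inequalities against the nonnegative measure $p_j$ and summing over $j$ shows $C(\mathbf{p},y_1)\le C(\mathbf{p},y_2)$ whenever $y_1\le y_2$, with no need to separate densities from probability masses.

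With monotonicity in hand the estimate is immediate. For $y<y_0$ we have $C(\mathbf{p},y)\le C(\mathbf{p},y_0)$, so
\begin{displaymath}
g(y)=\frac{C(\mathbf{p},y)}{r_1y}\le\frac{C(\mathbf{p},y_0)}{r_1y}.
\end{displaymath}
The right-hand side is continuous in $y$ and tends to $C(\mathbf{p},y_0)/(r_1y_0)=g(y_0)$ as $y\to y_0^-$, because $r_1y_0>0$. I would therefore choose $\delta\in(0,y_0/4)$ small enough that $C(\mathbf{p},y_0)/(r_1y)<g(y_0)+\epsilon$ throughout $(y_0-\delta,y_0)$; this forces $g(y)<g(y_0)+\epsilon$ on that interval and, by the reduction above, completes the proof.

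I do not expect a serious obstacle, only the bookkeeping that makes the monotonicity step airtight, which is the one delicate point. At a probability mass located exactly at $y_0$, the paper's integral convention places the mass inside $\int_0^{y_0}$ but outside $\int_{y_0}^{B}$, so it contributes $r_jy\,p_{j,y_0}$ to $C(\mathbf{p},y)$ as $y\to y_0^-$ but $(r_jy_0+b_j)\,p_{j,y_0}$ to $C(\mathbf{p},y_0)$. This only reinforces the inequality used above—indeed $\lim_{y\to y_0^-}C(\mathbf{p},y)=C(\mathbf{p},y_0)-\sum_j b_j\,p_{j,y_0}\le C(\mathbf{p},y_0)$, the left limit being smaller by exactly the nonnegative amount $\sum_j b_j\,p_{j,y_0}$—so the pointwise monotonicity argument already absorbs it. Should one prefer, the same left limit can be obtained directly by dominated convergence, using that $c_j(x,y)\le r_jB+b_j$ is uniformly bounded on the reduced action sets.
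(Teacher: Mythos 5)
Your proof is correct, and it reaches the conclusion by a genuinely different decomposition than the paper's. The paper works directly with the normalized ratio: it fixes the explicit value $\delta=\frac{\epsilon r_1 y_0^2}{4(r_ny_0+b_1+\epsilon r_1 y_0)}$ up front and shows, by a three-case analysis on the position of the buying time $x$ relative to $y$ and $y_0$, that $\frac{c_j(x,y)}{r_1 y}-\frac{c_j(x,y_0)}{r_1 y_0}\leq \epsilon$ for every pure action $(j,x)$, and then integrates this per-action bound against $\mathbf{p}$. You instead split numerator and denominator: the pointwise monotonicity of $c_j(x,\cdot)$ gives monotonicity of $C(\mathbf{p},\cdot)$ with no case analysis and no $\delta$ at all, and the entire quantitative burden is absorbed into the continuity of the factor $1/(r_1 y)$ at $y_0>0$. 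What the paper's route buys is a closed-form $\delta$ depending only on the problem parameters and not on $\mathbf{p}$; your $\delta$ a priori depends on $C(\mathbf{p},y_0)$, although the uniform bound $C(\mathbf{p},y_0)\le r_nB+b_1$ you establish makes it strategy-independent as well, and in any case the proposition (and its single downstream use, for the fixed $\mathbf{p}^*$ in the proof of Lemma 2) only requires a per-$\mathbf{p}$ choice. What your route buys is modularity and robustness: the monotonicity step handles probability masses automatically (your left-limit computation $\lim_{y\to y_0^-}C(\mathbf{p},y)=C(\mathbf{p},y_0)-\sum_j b_j\,p_{j,y_0}$ is exactly right under the paper's integral convention), and your reduction of the statement to $g(y)\le g(y_0)+\epsilon$ also dispatches the degenerate case $\sup_{z}g(z)=+\infty$, which the paper's proof passes over silently.
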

\begin{proof}
Let $\delta=\frac{\epsilon r_1 y_0^2}{4(r_ny_0+b_1+\epsilon r_1 y_0)}$. We first prove that $\Delta=\frac{c_j(x,y)}{r_1 y}-\frac{c_j(x,y_0)}{r_1 y_0}\leq \epsilon$, for any action $(j,x)\in \Psi_c$ and any $y\in (y_0-\delta, y_0)$. Note that $\Delta$ can be negative.
\begin{enumerate}
\item If $x\leq y$, we have
$$
\Delta=\frac{r_j x+b_j}{r_1 y}-\frac{r_j x+b_j}{r_1 y_0}\leq \frac{\delta(r_j x+b_j)}{r_1(y_0-\delta)y_0}\leq\frac{\epsilon r_1 y_0^2/4}{3r_1y_0^2/4}\leq \epsilon
$$
\item If $y<x\leq y_0$, we have
$$
\Delta=\frac{r_j y}{r_1 y}-\frac{r_j x+b_i}{r_1 y_0}=\frac{r_j(y_0-x)-b_j}{r_1 y_0}\leq \frac{r_j\delta-b_j}{r_1 y_0}\leq \epsilon
$$
\item If $y_0<x$, we have
$$
\Delta=\frac{r_j y}{r_1 y}-\frac{r_j y_0}{r_1 y_0}=0\leq\epsilon
$$
\end{enumerate}

Thus, we prove $\frac{c_j(x,y)}{r_1 y}-\frac{c_j(x,y_0)}{r_1 y_0}\leq \epsilon$, $\forall (j,x)\in\Psi_c$ and $\forall y\in(y_0-\delta,y_0)$. This implies that $\forall \mathbf{p}\in\mathcal P$ and $\forall y\in(y_0-\delta,y_0)$, we have
$$\frac{C(\mathbf{p},y)}{r_1 y}-\frac{C(\mathbf{p},y_0)}{r_1 y_0}
=\sum_{j\in[n]}\int_{0}^{B}(\frac{r_j y}{r_1 y}-\frac{r_j y_0}{r_1 y_0})p_j(t)\ud t\leq \epsilon$$
From the above inequality, we obtain $\lambda_{\mathbf p}(y)\ge\lambda_{\mathbf p}(y_0)-\epsilon$.
\end{proof}

Now we begin to prove this lemma.

\begin{proof}
Assume by contradiction that there exists some $y_0 \in (0,B]$ such that $\lambda_{\mathbf p^*}(y_0) = 2\Lambda >0$. By Proposition~\ref{proposition:con3}, there exists a constant $0<\delta <y_0/4$, such that for any $y \in (y_0-\delta, y_0)$, $\lambda_{\mathbf p^*}(y) \geq \Lambda$.

Define two sets $L$ and $R$ as follows:
\begin{eqnarray*}
L &\triangleq& \{y_1: y_1 \leq y_0 - \delta, \lambda_{\mathbf p^*}(y_1)\leq \Lambda/8 \}\\
R &\triangleq& \{y_2: y_2 \geq y_0, \lambda_{\mathbf p^*}(y_2)\leq \Lambda/8 \}
\end{eqnarray*}
Since function $\lambda_{\mathbf p^*}$ is right continuous and Proposition~\ref{proposition:con3}, observe that $L$ and $R$ are union sets of closed intervals except that $L$ might contain a half-closed interval $(0,y]$. We further denote $y_L$ and $y_R$ as follows:
\begin{eqnarray*}
y_L &\triangleq& \max L\\
y_R &\triangleq& \min R
\end{eqnarray*}
Since $\forall y\in(y_0-\delta,y_0),\lambda_{\mathbf{p^*}}(y)\ge\Lambda$, it is impossible that both $L$ and $R$ are empty. According to whether $L$ and $R$ are empty, we can break the proof into three cases. In each case, we construct another strategy $\mathbf{p}^1$ with lower worst competitive ratio, which contradicts the optimality of $\mathbf{p^*}$.

\begin{enumerate}
\item[(a)] $L \neq \emptyset$ while $R = \emptyset$: By Proposition~\ref{proposition:con2}, note that there exists some $j \in [n]$, such that $\int_0^{y_L}p_j^*(t)\ud t>0$. Define $\lambda_0 \triangleq   \lambda_{\mathbf{p^*}}(B)$, and $\epsilon_L \triangleq \min\{\sum_{j\in [n]}\int_0^{y_L}p_j^*(t)\ud t,\frac{\lambda_0 r_1}{2b_1}\}$. By definition, note that $\lambda_0\geq \Lambda/8$, and $0\leq y_1\leq y_L$.

    Further, we define $y_1=\arg \min_{ y} \{\sum_{j\in [n]}\int_0^{y}p_j^*(t)\ud t\geq \epsilon_L\}$. We assume $k$ is the minimum shop index such that
    \begin{itemize}
    \item $\sum_{j=1}^{k-1}\int_0^{y_1}p_j^*(t)\ud t+\sum_{j=k}^n\int_0^{y_1^-}p_j^*(t)\ud t\leq \epsilon_L$,
    \item $\sum_{j=1}^{k}\int_0^{y_1}p_j^*(t)\ud t+\sum_{j=k+1}^n\int_0^{y_1^-}p_j^*(t)\ud t\geq \epsilon_L$.
    \end{itemize}

We construct $\mathbf{p}^1$ as follows:
\begin{eqnarray*}
p_{j,y_1}^1 &=& 0, \quad \forall 1\leq j\leq k-1 \\
p_{k,y_1}^1 &=& p_{k,y_1}^* - \epsilon_L+\sum_{j=1}^{k-1}\int_0^{y_1}p_j^*(t)\ud t+\sum_{j=k}^n\int_0^{y_1^-}p_j^*(t)\ud t \\
p_j^1(y) &=& 0, \quad \forall j\in [n], ~y \in [0, y_1)\\
p_{1,B}^1 &=&  p_{1,B}^*+\epsilon_L\\
p_{j}^1(y) &=& p_{j}^*(y), \quad \textrm{elsewhere}
\end{eqnarray*}

Now we prove that $\lambda_{\mathbf{p}^1} < \lambda_{\mathbf{p^*}}$.

If $0< y< B$, it holds that
\begin{eqnarray*}
\label{inequality:maxv}
&&\frac{C(\mathbf{p}^1,y)}{r_1 y} - \frac{C(\mathbf{p}^*,y)}{r_1 y}\\
&\leq& \frac{ \sum_{j\in [n]}\int_{0}^{y} (r_1 y - r_j t -b_j) (p_j^*(t)-p_j^1(t)) \ud t}{r_1 y} \leq 0
\end{eqnarray*}
where the last inequality holds since that for any $0\leq t\leq y$ and $j\in [n]$, we have $r_1 y - r_j t -b_j < b_n - b_j \leq 0$, and $ p_j^*(t)-p_j^1(t) \geq 0$.

Assume $y_2=\inf \{\arg \min_{y \in L} \lambda_{\mathbf p^*}(y)\}$. If $y_2>0$, by proposition~\ref{proposition:con2}, we have $\sum_{j\in [n]}\int_{0}^{y} (p_j^*(t)-p_j^1(t))\ud t > 0$ for $y\geq y_2$. Thus, inequality~\ref{inequality:maxv} is strictly less while $y_2\leq y<B$.

Otherwise, $y_2=0$. By the definition of $y_2$ and function $\lambda_{\mathbf p^*}$ is right continuous, there must be some $0<y_3< y_1$, such that for any $0<y\leq y_3$, $\lambda_{\mathbf p^*}(y)$ is a constant. Then by Proposition~\ref{proposition:con2}, we have $\sum_{j\in [n]}\int_0^{y_3}(p_j^*(t)-p_j^1(t))\ud t > 0$. Thus, we have $\frac{C(\mathbf{p}^1,y_3)}{r_1 y_3} - \frac{C(\mathbf{p}^*,y_3)}{r_1 y_3}<0$. Note that $p_j^1(y)=0$ for any $j\in [n]$ and $y\in [0,y_3]$. It is obvious that $\frac{\lambda_{\mathbf{p}^1}(y)}{r_1 y}=\frac{\lambda_{\mathbf{p}^1}(y_3)}{r_1 y_3}$ holds for any $y\in (0,y_3]$. Thus, inequality~\ref{inequality:maxv} is strictly less while $0<y<B$.

If $y=B$, it holds that
\begin{eqnarray*}
&&\frac{C(\mathbf{p}^1,B)}{r_1 B} - \frac{C(\mathbf{p}^*,B)}{r_1 B}\\
&=& \frac{ \sum_{j\in [n]}\int_{0}^{y_L} (r_1 B+b_1 - r_j t -b_j) (p_j^*(t)-p_j^1(t)) \ud t}{r_1 B} \\
&\leq & \frac{b_1 B\epsilon_L}{r_1 B}
\leq \frac{\lambda_0}{2}
\end{eqnarray*}
Thus we have $\lambda_{\mathbf{p}^1} < \lambda_{\mathbf{p^*}}$.

\item[(b)]  $L = \emptyset$ while $R \neq \emptyset$: Define $\lambda_0 \triangleq \lambda_{\mathbf{p^*}}(y_0-\frac{\delta}{2})$, $\epsilon_R \triangleq \frac{\lambda_0 r_1}{2 b_1}$. Note that there exists some $j \in [n]$, such that $\int_{y_0}^{y_R}p_j^*(y)\ud y>0$ by Proposition~\ref{proposition:con1}. We construct $\mathbf{p}^1$ as follows:

If $p_{j,y_R}^*$ a probability mass. We construct $\mathbf{p}^1$ as follows:
\begin{eqnarray*}
p_{j,y_R}^1 &=& p_{j,y_R}^* - \min \{p_{j,y_R}^*, \epsilon_R\} \\
p_{1,y_0}^1 &=&\min \{p_{j,y_R}^*, \epsilon_R\}+ p_{1,y_0}^* \\
p_{1}^1(y) &=& p_{1}^*(y), \quad \textrm{elsewhere}
\end{eqnarray*}

Otherwise, we construct $\mathbf{p}^1$ as follows:
\begin{eqnarray*}
p_j^1(y) &=& p_j^*(y)- \min\{p_j^*(y), \frac{\epsilon_R}{y_R-y_0}\}, \quad y \in (y_0, y_R]\\
p_{j,y_0-\frac{\delta}{2}}^1 &=& p_{j,y_0-\frac{\delta}{2}}^*+\int_{y_0}^{y_R}\min\{p_j^*(t), \frac{\epsilon_R}{y_R-y_0}\}\ud t \\
p_{j}^1(y) &=& p_{j}^*(y),\quad \textrm{elsewhere}
\end{eqnarray*}
Now we prove that $\lambda_{\mathbf{p}^1} < \lambda_{\mathbf{p^*}}$.

If $0<y< y_0-\frac{\delta}{2}$, we have
$$
\frac{C(\mathbf{p}^1,y)}{r_1 y} - \frac{C(\mathbf{p}^*,y)}{r_1 y}
= \frac{\int_{(y_0-\frac{\delta}{2})^-}^{B}  r_j y (p^1_j(t)-p^*_j(t)) \ud t}{r_1 y} = 0.
$$

If $y>y_0-\frac{\delta}{2}$, we have
$$
\frac{C(\mathbf{p}^1,y)}{r_1 y} - \frac{C(\mathbf{p}^*,y)}{r_1 y}
= \frac{\int_{y}^{B}  r_j y (p^1_j(t)-p^*_j(t)) \ud t}{r_1 y}
\leq 0.
$$
Specially, the inequality is strictly less if $y\geq y_R$.

If $y=y_0-\frac{\delta}{2}$ and $p_{j,y_R}^*$ is a probability mass, we have

$$
\frac{C(\mathbf{p}^1,y)}{r_1 y} - \frac{C(\mathbf{p}^*,y)}{r_1 y}
=  \frac{b_j \min \{p_{j,y_R}^*, \epsilon_R\}}{r_1 y}
\leq\frac{\lambda_0}{2}.
$$

Otherwise, $y=y_0-\frac{\delta}{2}$ and $p_{j,y_R}^*$ is not a probability mass, we have
$$
\frac{C(\mathbf{p}^1,y)}{r_1 y} - \frac{C(\mathbf{p}^*,y)}{r_1 y}
=  \frac{b_j\int_{y_0}^{y_R} \min\{p_j^*(t), \frac{\epsilon_R}{y_R-y_0}\} \ud t}{r_1 y}
\leq\frac{\lambda_0}{2}.
$$

Thus we have $\lambda_{\mathbf{p}^1} < \lambda_{\mathbf{p^*}}$.

\item[(c)] $L = \emptyset$ while $R \neq \emptyset$: 
    Define $\lambda_0 \triangleq \lambda_{\mathbf{p^*}}(y_0-\frac{\delta}{2})$. In this case, we construct $\mathbf{p}^1$ by combining the settings in the two cases above.
    
    Firstly, we still let $\epsilon_R \triangleq \frac{\lambda_0 r_1}{2b_1}$, and construct a new strategy $\mathbf{p'}$ using the same setting of case (b). By the argument above, we have that $\frac{C(\mathbf{p'},y)}{r_1 y}$ stays while $y\in (0,y_0-\frac{\delta}{2}]$, and decreases while $y\in (y_0-\frac{\delta}{2},B]$. 
    Then we suppose $\frac{C(\mathbf{p'},B)}{r_1 B} - \frac{C(\mathbf{p}^*,B)}{r_1 B}=-\epsilon '$, and set $\epsilon_L$ as follows:
$$
\epsilon_L=\min\{\sum_{j\in [n]}\int_0^{y_L}p_j^*(t)\ud t,\frac{r_1\epsilon'}{2b_1}\}. 
$$

Use the same setting of case (a), we construct a new strategy $\mathbf{p}^1$ from $\mathbf{p'}$. By the argument in the case (a), we observe that $\frac{C(\mathbf{p}^1,B)}{r_1 B} - \frac{C(\mathbf{p'},B)}{r_1 B}\leq \frac{\epsilon '}{2}$. Thus, we have that $\frac{C(\mathbf{p}^1,B)}{r_1 B} - \frac{C(\mathbf{p}^*,B)}{r_1 B}\leq -\frac{\epsilon '}{2}$. Further, $\frac{C(\mathbf{p}^1,y)}{r_1 y}$ decreases almost everywhere except $y=y_0-\frac{\delta}{2}$, and $\frac{C(\mathbf{p}^1,y_0-\frac{\delta}{2})}{r_1 (y_0-\frac{\delta}{2})}$ increases at most $\frac{\lambda_0}{2}$. By the same argument above, we have $\lambda_{\mathbf{p}^1} < \lambda_{\mathbf{p^*}}$.
\end{enumerate}
Thus, in all the cases above, we have proved that $\lambda_{\mathbf{p}^1} < \lambda_{\mathbf{p}^*}$, which contradicts the fact that $\mathbf{p}^*$ is optimal. 
\end{proof}

\subsection{Proof of Lemma~\ref{lemma:MSRfinite}}
\begin{proof}
Of (\ref{ratioRelation:MSR2}), multiply both sides by $r_1 y$ and then take derivatives, we will get
\begin{equation} \label{PfirstDiff}
\sum_{j=1}^{n} \left(b_j p^*_j(y)+\int_{y}^{B}r_jp^*_j(x)\ud x\right)=\lambda r_1
\end{equation}
From this equation, we can directly get that $p^*_j(y) \leq \frac{\lambda r_1}{b_n}, \forall y\in(0,B]$. Note that the pure strategy $(1,B)\in\Psi_c$ has a competitive ratio of $\frac{b_1+b_n}{b_n}$ and, according to Lemma~\ref{lemma:MSRconstant}, the optimal randomized strategy must be better. So the optimal strategy has a competitive ratio $\lambda< \frac{2b_1}{b_n}$. Thus, we have proved that $\forall x\in(0,B],p^*_j(x)<\frac{2b_1 r_1}{b_n^2}$.

Assume that there exists some $j$ such that $p^*_j(0) = + \infty$, then $\lim_{y\rightarrow 0^+} \frac{C(\mathbf{p^*}, y)}{r_1 y} = +\infty$ which contradicts (\ref{ratioRelation:MSR2}). Thus $p^*_j(0) < +\infty$ for any $j=1,\cdots,n$.
\end{proof}

\subsection{Proof of Lemma~\ref{lemma:moveP MSR}}
For an arbitrary $x$, if there exists an $r>0$, such that $\int_{x-r}^{x+r} p_j(t)\ud t=0$, we say $p_j(x)=0$. Since $p_j(x)$ is a probability density function, it does not lose the generality. Hence, to prove this lemma, it suffices to show that $\forall x\in (0,B)$, $\forall \epsilon\in(0,x)$, $\forall j\in[n]$,  if $\int_{x-\epsilon}^{x}p_j^*(x) \ud x> 0$, then $\forall j'>j, x'\ge x$, we must have $\int_{x'}^{B}  p_{j'}^*(x')=0$. To prove it, we assume by contradiction that if $\int_{x-\epsilon}^{x}p_j^*(t)\ud t>0$ and $\int_{x'}^{x'+\epsilon}p_{j'}^*(t)\ud t>0$. And first we need the following proposition:

\begin{proposition}\label{propostion:premoveP}
If $\int_{x-\epsilon}^{x}p_j^*(t)\ud t>0$ and $\int_{x'}^{x'+\epsilon}p_{j'}^*(t)\ud t>0$, then there exist $x_1, x_2$ and $\epsilon_0>0$, such that $(x_1, x_1+\epsilon_0)\subset (x-\epsilon,x) $, $ (x_2, x_2+\epsilon_0)\subset (x', x'+\epsilon)$ and $\int_{0}^{\epsilon_0} \min\{p_j^*(x_1+\theta),p_{j'}^*(x_2+\theta)\} d\theta>0$.
\end{proposition}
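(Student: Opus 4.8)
The plan is to reduce the claim to the positivity of a ``cross-correlation'' of two indicator functions and then exploit the fact that the intersection of two open intervals is again an open interval, so that the two required sub-intervals come essentially for free.

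First I would record, via Lemma~\ref{lemma:MSRfinite}, that $p_j^*$ and $p_{j'}^*$ are finite densities on $(0,B]$, so that there is no probability mass in the interior and the two hypotheses are equivalent to saying that the measurable sets
\[
A \triangleq \{t \in (x-\epsilon,x): p_j^*(t)>0\}, \qquad B \triangleq \{t \in (x',x'+\epsilon): p_{j'}^*(t)>0\}
\]
have positive (and finite) Lebesgue measure; here $x-\epsilon>0$ since $\epsilon<x$, so both sets lie in $(0,B)$.

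The core step is to find a single shift $s$ that aligns a positive-measure part of $A$ with $B$. Applying Tonelli to the non-negative integrand $\mathbf{1}_A(t)\,\mathbf{1}_B(t+s)$ gives
\[
\int_{-\infty}^{\infty} \big|A \cap (B-s)\big|\,\ud s = \int_{-\infty}^{\infty}\!\!\int_{-\infty}^{\infty}\mathbf{1}_A(t)\,\mathbf{1}_B(t+s)\,\ud t\,\ud s = |A|\,|B| > 0,
\]
so there exists at least one $s$ (in fact a positive-measure set of them) with $|A\cap(B-s)|>0$. I would then fix such an $s$ and set $I \triangleq (x-\epsilon,x)\cap\big((x',x'+\epsilon)-s\big)$, the intersection of two open intervals and hence itself an open interval. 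Because $A\subseteq(x-\epsilon,x)$ and $B-s\subseteq(x',x'+\epsilon)-s$, we have $A\cap(B-s)\subseteq I$, so $I$ is non-degenerate; writing $I=(x_1,x_1+\epsilon_0)$ and $x_2\triangleq x_1+s$ yields $(x_1,x_1+\epsilon_0)\subseteq(x-\epsilon,x)$ and $(x_2,x_2+\epsilon_0)\subseteq(x',x'+\epsilon)$. On $A\cap(B-s)$ both $p_j^*(t)>0$ and $p_{j'}^*(t+s)>0$, so the change of variables $t=x_1+\theta$ gives
\[
\int_0^{\epsilon_0}\min\{p_j^*(x_1+\theta),p_{j'}^*(x_2+\theta)\}\,\ud\theta = \int_I \min\{p_j^*(t),p_{j'}^*(t+s)\}\,\ud t > 0,
\]
which is the assertion.

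The only delicate point, which I expect to be the main obstacle, is that the conclusion demands a common \emph{interval} rather than merely a common positive-measure set on which both densities are positive. The device that resolves this is to take $I$ to be the intersection of the two shifted intervals: it is automatically an interval, it automatically contains the aligned set $A\cap(B-s)$, and the averaging identity above guarantees that a shift $s$ making this set non-null exists. Everything else is routine measure theory.
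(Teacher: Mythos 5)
Your proof is correct, and it takes a genuinely cleaner route than the paper's, though both rest on the same underlying idea: average over all shifts and invoke Fubini--Tonelli to conclude that some shift aligns the two densities. The paper works directly with the minimum of the two densities: it uses the uniform bound $T=2b_1r_1/b_n^2$ from Lemma~\ref{lemma:MSRfinite} to derive level-set estimates $\Pr[p_j^*(t_1)>\delta]\ge\delta/T$, lower-bounds the rectangle integral $\int_{x'}^{x'+\epsilon}\!\int_{x-\epsilon}^{x}\min\{p_j^*(t),p_{j'}^*(t')\}\,\ud t\,\ud t'$ by $\epsilon^2\delta^3/T^2>0$, and then re-slices that rectangle into wrap-around diagonals, which forces a two-case analysis (Case 1/Case 2) to extract $x_1,x_2,\epsilon_0$. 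You instead pass to indicator functions of the positivity sets, where Tonelli gives $\int|A\cap(B-s)|\,\ud s=|A|\,|B|>0$ with no quantitative input at all, and you obtain the common window by intersecting the two (shifted) intervals, so the wrap-around case split disappears entirely. Accordingly you need Lemma~\ref{lemma:MSRfinite} only qualitatively --- finiteness of the densities, to rule out interior point masses so that a positive integral is equivalent to a positivity set of positive measure --- whereas the paper consumes it quantitatively; what the paper's version buys in exchange is explicit constants (the lower bound $\epsilon^2\delta^3/T^2$ and closed-form choices of $x_1,x_2,\epsilon_0$), none of which is actually used downstream in Lemma~\ref{lemma:moveP MSR}, where only existence matters. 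Two small points to fix in the write-up: rename your second set, since $B$ collides with the paper's global constant $B=b_n/r_1$; and if the $\subset$ in the statement is read strictly, shrink your interval $I$ by an arbitrarily small $\eta>0$ at each end, which still leaves a positive-measure portion of $A\cap(B-s)$ inside it.
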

\begin{proof}
We denote by $\delta_1>0$ the value of $\int_{x-\epsilon}^{x}p_j^*(t)\ud t/\epsilon$ and by $\delta_2>0$ the value of $\int_{x'}^{x'+\epsilon}p_{j'}^*(t)\ud t/\epsilon$.
Let $T=2b_1r_1/b_n^2$ and, by Lemma~\ref{lemma:MSRfinite},
$\forall j\in[n],\forall t\in(0,B]$, we have $p^*_j(t)<T$. Given 2 independent random variables $t_1\sim U(x-\epsilon,x)$ and $t_2\sim U(x',x'+\epsilon)$, we have $\delta_1=\mathbb E[p_j^*(t_1)]$ and $\delta_2=\mathbb E[p_{j'}^*(t_2)]$. Therefore,
\begin{align*}
&\operatorname{Pr}[p_{j}(t_1)>\delta_1/2]\cdot T+(1-\operatorname{Pr}[p_{j}(t_1)>\delta_1/2])\cdot\delta_1/2\geq \delta_1 \\
&\operatorname{Pr}[p_{j'}(t_2)>\delta_2/2]\cdot T+(1-\operatorname{Pr}[p_{j'}(t_2)>\delta_1/2])\cdot\delta_2/2\geq \delta_2
\end{align*}
Let $\delta=\min\{\delta_1/2, \delta_2/2\}>0$ and from the above 2 inequalities, we obtain
\begin{align*}
&\operatorname{Pr}[p_{j}(t_1)>\delta]\ge \delta/T\\ &\operatorname{Pr}[p_{j'}(t_2)>\delta]\ge \delta/T
\end{align*}

Now we focus on $\int_{x'}^{x'+\epsilon}\ud t' \int_{x-\epsilon}^{x}\min\{p_j^*(t),p_{j'}^*(t')\}\ud t$:
\begin{align}
&\int_{x'}^{x'+\epsilon}\ud t' \int_{x-\epsilon}^{x}\min\{p_j^*(t),p_{j'}^*(t')\}\ud t \nonumber\\
\geq & \epsilon \operatorname{Pr}[p_{j'}(t_2)>\delta] \int_{x-\epsilon}^{x}\min\{p_j^*(t'),\delta\}\ud t \nonumber \\
\geq & \epsilon \delta (\epsilon \operatorname{Pr}[p_{j}(t_1)>\delta]\delta)/T \nonumber \\
=& \epsilon^2 \delta^3/T^2 >0 \label{ineq:prob1}
\end{align}
On the other hand, we have
\begin{align}
&\int_{x'}^{x'+\epsilon}\ud t' \int_{x-\epsilon}^{x}\min\{p_j^*(t),p_{j'}^*(t')\}\ud t \nonumber \\
=&\int_{x-\epsilon}^{x}\ud t \int_{x'}^{x'+\epsilon}\min\{p_j^*(t),p_{j'}^*(t')\}\ud t' \nonumber \\
=&\int_{x-\epsilon}^{x}\ud t \int_{x'}^{x'+x-t}\min\{p_j^*(t),p_{j'}^*(t'+\epsilon+t-x)\}\ud t' \nonumber \\
&+\int_{x-\epsilon}^{x}\ud t \int_{x'+x-t}^{x'+\epsilon}\min\{p_j^*(t),p_{j'}^*(t'+t-x)\}\ud t' \nonumber \\
=&\int_{x'}^{x'+\epsilon}\ud t' \int_{x-\epsilon}^{x-(t'-x')}\min\{p_j^*(t),p_{j'}^*(t'+\epsilon+t-x)\}\ud t \nonumber \\
&+\int_{x'}^{x'+\epsilon}\ud t' \int_{x-(t'-x')}^{x}\min\{p_j^*(t),p_{j'}^*(t'+t-x)\}\ud t \label{ineq:prob2}
\end{align}
Combine (\ref{ineq:prob1}) and (\ref{ineq:prob2}), it is proved that there exists some $t'\in (x', x'+\epsilon)$, such that either $\int_{x-\epsilon}^{x-(t'-x')}\min\{p_j^*(t),p_{j'}^*(t'+\epsilon+t-x)\}\ud t>0$ or $\int_{x-(t'-x')}^{x}\min\{p_j^*(t),p_{j'}^*(t'+t-x)\}\ud t >0$.

\textbf{Case 1}: if $\int_{x-\epsilon}^{x-(t'-x')}\min\{p_j^*(t),p_{j'}^*(t'+\epsilon+t-x)\}\ud t>0$, then we set $x_1=x-\epsilon, x_2=t',\epsilon_0=\epsilon-(t'-x')$. We directly obtain $(x_1, x_1+\epsilon_0)\subset (x-\epsilon,x) $, $ (x_2, x_2+\epsilon_0)\subset (x', x'+\epsilon)$ and $\int_{0}^{\epsilon_0} \min\{p_j^*(x_1+\theta),p_{j'}^*(x_2+\theta)\} d\theta>0$.

\textbf{Case 2}: if $\int_{x-(t'-x')}^{x}\min\{p_j^*(t),p_{j'}^*(t'+t-x)\}\ud t >0$, then we set $x_1=x-(t'-x'), x_2=x',\epsilon_0=t'-x'$. We directly obtain $(x_1, x_1+\epsilon_0)\subset (x-\epsilon,x) $, $ (x_2, x_2+\epsilon_0)\subset (x', x'+\epsilon)$ and $\int_{0}^{\epsilon_0} \min\{p_j^*(x_1+\theta),p_{j'}^*(x_2+\theta)\} d\theta>0$.
\end{proof}

And to prove this lemma, it suffices to prove the following proposition:

\begin{proposition}
$\forall x\in (0,B)$, $\forall \epsilon>0$, $\forall j\in[n]$ and $\forall j'>j, x'\ge x$,  it is impossible that both of the following inequalities hold:
\begin{itemize}
\item $\int_{x-\epsilon}^{x}p_j^*(x) \ud x> 0$
\item $\int_{x'}^{x'+\epsilon}  p_{j'}^*(x')\ud x'>0$
\end{itemize}  
\end{proposition}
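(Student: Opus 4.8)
The plan is to argue by contradiction exactly as the sketch of Lemma~\ref{lemma:moveP MSR} indicates, using Proposition~\ref{propostion:premoveP} as the combinatorial engine and then performing a mass-exchange between the two shops. So suppose both $\int_{x-\epsilon}^{x}p_j^*(t)\,\ud t>0$ and $\int_{x'}^{x'+\epsilon}p_{j'}^*(t)\,\ud t>0$ with $j'>j$ and $x'\ge x$. Proposition~\ref{propostion:premoveP} then hands me two sub-intervals $(x_1,x_1+\epsilon_0)\subset(x-\epsilon,x)$ and $(x_2,x_2+\epsilon_0)\subset(x',x'+\epsilon)$ together with the overlap guarantee $\int_{0}^{\epsilon_0}g(\theta)\,\ud\theta>0$, where $g(\theta)\triangleq\min\{p_j^*(x_1+\theta),p_{j'}^*(x_2+\theta)\}$. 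Note that $x_1+\theta<x\le x'<x_2+\theta$ for every $\theta\in(0,\epsilon_0)$; by Lemma~\ref{lemma:MSRfinite} all the densities in play are finite and carry no point mass, which makes the exchange below well-defined.

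I would then build $\mathbf{p}^1$ from $\mathbf{p}^*$ by swapping, for each $\theta\in(0,\epsilon_0)$, an amount $g(\theta)$ of buying density between the shops: shift $g(\theta)$ of shop $j'$'s density from the late point $x_2+\theta$ to the early point $x_1+\theta$, and simultaneously shift $g(\theta)$ of shop $j$'s density from $x_1+\theta$ to $x_2+\theta$, leaving everything else untouched. Since $g(\theta)\le p_j^*(x_1+\theta)$ and $g(\theta)\le p_{j'}^*(x_2+\theta)$, both decremented densities stay nonnegative, and because each shop's total mass is preserved, $\mathbf{p}^1\in\calP$. Writing $u=x_1+\theta$ and $v=x_2+\theta$ (so $u<v$), the resulting cost change is
\begin{displaymath}
C(\mathbf{p}^1,y)-C(\mathbf{p}^*,y)=\int_0^{\epsilon_0} g(\theta)\,\Phi(\theta,y)\,\ud\theta,\qquad \Phi(\theta,y)=\big[c_j(v,y)-c_j(u,y)\big]-\big[c_{j'}(v,y)-c_{j'}(u,y)\big].
\end{displaymath}

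The crux is to show $\Phi(\theta,y)\le 0$ for all $y\in(0,B]$ and $\Phi(\theta,y)<0$ whenever $y\ge u$. This is a short case analysis on the piecewise form of $c(\cdot,y)$: for $y<u$ both bracketed differences vanish and $\Phi=0$; for $u\le y<v$ one gets $\Phi=(r_j-r_{j'})(y-u)-(b_j-b_{j'})$, which is strictly negative because $r_{j'}>r_j$ and $b_j>b_{j'}$ (precisely the ordering forced by $j'>j$); and for $y\ge v$ one gets $\Phi=(r_j-r_{j'})(v-u)<0$. Hence $C(\mathbf{p}^1,y)\le C(\mathbf{p}^*,y)$ for every $y$, and since $x_1+\epsilon_0\le x<B$, the interval $(x_1+\epsilon_0,B]$ is nonempty and every $\theta$ there falls into a strict case, so $C(\mathbf{p}^1,y)<C(\mathbf{p}^*,y)$ on it.

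Finally I would route this into a contradiction through Lemma~\ref{lemma:MSRconstant}. As $\mathbf{p}^*$ is optimal, that lemma gives $C(\mathbf{p}^*,y)=\lambda r_1 y$ for all $y$, so $\mathbf{p}^1$ has worst-case ratio $\max_y C(\mathbf{p}^1,y)/(r_1 y)\le\lambda$; by minimality of $\lambda$ this maximum equals $\lambda$, making $\mathbf{p}^1$ itself optimal, yet with ratio strictly below $\lambda$ on $(x_1+\epsilon_0,B]$, contradicting the constant-ratio conclusion of Lemma~\ref{lemma:MSRconstant} applied to $\mathbf{p}^1$. I expect the genuine work to sit in two places: pinning down the sign of $\Phi$ in the middle regime $u\le y<v$, which is the only step that uses both price orderings at once and is exactly what enforces the shop-index ordering; and phrasing the closing contradiction correctly, since the exchange leaves the ratio untouched for $y\le x_1$ and therefore does not lower the worst-case value directly — the contradiction must instead be drawn from the fact that an optimal strategy is forced to be tight at every $y$.
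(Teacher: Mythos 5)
Your proposal is correct and takes essentially the same route as the paper's proof: assume both integrals are positive, invoke Proposition~\ref{propostion:premoveP} to extract the two overlapping sub-intervals, perform a mass exchange between shops $j$ and $j'$ that never increases $C(\cdot,y)$ and strictly decreases it for $y>x_1+\epsilon_0$, and close by contradicting the constant-ratio property of optimal strategies applied to the new strategy $\mathbf{p}^1$. The only differences are cosmetic: you swap the equal amount $\min\{p_j^*(x_1+\theta),p_{j'}^*(x_2+\theta)\}$ in both directions and check signs pointwise in $\theta$ (three cases), whereas the paper moves amounts weighted by $(b_{j'}+r_{j'})$ and $(b_j+r_j)$ and computes five cases in $y$, each yielding the positive factor $r_{j'}b_j-r_jb_{j'}$; moreover, your citation of Lemma~\ref{lemma:MSRconstant} for the final contradiction is the correct one (the paper's text points at Lemma~\ref{lemma:MSRstspace}, evidently a typo).
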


\begin{proof}
We assume by contradiction that there exist some $j'>j, x'\ge x,\epsilon>0$ such that both of the above inequalities hold. Then we derive a new strategy $\mathbf p^1$ from $\mathbf p^*$ and we are going to show that the expected cost of the consumer following $p^1$ is not greater than the cost of $p^*$ for any $y$ and is less than the cost of $p^*$ for some $y$, which contradicts the assumption.

$\forall \theta\in(0,\epsilon_0)$, we adopt the following setting:
\begin{eqnarray*}
p_{j}^1(x_1+\theta) &=& p_j^*(x_1+\theta) - \mu'(\theta)\\
p_{j}^1(x_2+\theta) &=& p_{j}^*(x_2+\theta)+ \mu'(\theta)\\
p_{j'}^1(x_1+\theta)&=&p_{j'}^*(x_1+\theta)+ \mu(\theta)\\
p_{j'}^1(x_2+\theta) &=& p_{j'}^*(x_2+\theta) -  \mu(\theta)
\end{eqnarray*}
where $\mu(\theta)= \frac{b_j+r_j}{b_j+r_j+b_{j'}+r_{j'}}
\min\{p_j^*(x_1+\theta),p_{j'}^*(x_2+\theta)\}$, and $\mu'(\theta)=\frac{b_{j'}+r_{j'}}{b_j+r_j+b_{j'}+r_{j'}} \min\{p_j^*(x_1+\theta),p_{j'}^*(x_2+\theta)\}$.
Note that we break $(0,B]$ into five sub-intervals. We show that for each sub-interval,
$$
C(\mathbf p^*,y)-C(\mathbf p^1,y)\geq 0
$$

When $y \leq x_1$:
\begin{eqnarray*}
&&C(\mathbf p^*,y)-C(\mathbf p^1,y) \\
&=& \int_{0}^{\epsilon_0} r_{j} y  \mu'(\theta) \ud \theta - \int_{0}^{\epsilon_0} r_{j} y  \mu'(\theta) \ud \theta\\
&&- \int_{0}^{\epsilon_0} r_{j'} y \mu(\theta) \ud \theta + \int_{0}^{\epsilon_0} r_{j'} y \mu(\theta) \ud \theta
\\
& =& 0
\end{eqnarray*}

When $x_1 < y \leq x_1+\epsilon_0:$
\begin{eqnarray*}
&&C(\mathbf p^*,y)-C(\mathbf p^1,y) \\
&=& \int_{0}^{y-x_1} [r_{j} (x_1+\theta)+b_{j}] \mu'(\theta) \ud \theta + \int_{y-x_1}^{\epsilon_0} r_{j} y \mu'(\theta) \ud \theta \\
&&- \int_{0}^{\epsilon_0} r_{j} y \mu'(\theta) \ud \theta -\int_{0}^{y-x_1} [r_{j'} (x_1+\theta)+b_{j'}] \mu(\theta) \ud \theta \\
&&- \int_{y-x_1}^{\epsilon_0} r_{j'} y \mu(\theta) \ud \theta
+\int_{0}^{\epsilon_0} r_{j'} y \mu(\theta) \ud \theta\\
&=& \mu_1 \frac{r_{j'}b_j-r_jb_{j'}}{b_j+r_j+b_{j'}+r_{j'}} \geq 0
\end{eqnarray*}
where $\mu_1 = \int_{0}^{y-x_1}(1+y-x_1-\theta)\min\{p_j^*(x_1+\theta),p_{j'}^*(x_2+\theta)\}\ud \theta$.

When $x_1+\epsilon_0 < y \leq x_2:$
\begin{eqnarray*}
&&C(\mathbf p^*,y)-C(\mathbf p^1,y) \\
&=&  \int_{0}^{\epsilon_0} [r_{j}(x_1+\theta)+b_j] \mu'(\theta) \ud \theta - \int_{0}^{\epsilon_0} r_j y \mu'(\theta) \ud \theta\\
&&- \int_{0}^{\epsilon_0} [r_{j'}(x_1+\theta)+b_{j'}] \mu(\theta) \ud \theta + \int_{0}^{\epsilon_0} r_{j'} y \mu(\theta) \ud \theta\\
&=& \mu_2\frac{r_{j'}b_j-r_jb_{j'}}{b_j+r_j+b_{j'}+r_{j'}} > 0
\end{eqnarray*}
where $ \mu_2 = \int_0^{\epsilon_0} (1+y-x_1-\theta)\min\{p_j^*(x_1+\theta),p_{j'}^*(x_2+\theta)\}\ud \theta>0$.

When $x_2 < y \leq x_2+\epsilon_0:$
\begin{eqnarray*}
&&C(\mathbf p^*,y)-C(\mathbf p^1,y) \\
&=& \int_{0}^{\epsilon_0} [r_{j} (x_1+\theta)+b_{j}] \mu'(\theta) \ud \theta -\int_{0}^{y-x_2} [r_{j} (x_2+\theta)+b_{j}] \mu'(\theta) \ud \theta \\
&&- \int_{y-x_2}^{\epsilon_0} r_{j} y \mu'(\theta) \ud \theta -\int_{0}^{\epsilon_0} [r_{j'} (x_1+\theta)+b_{j'}] \mu(\theta) \ud \theta \\
&&+ \int_{0}^{y-x_2} [r_{j'} (x_2+\theta)+b_{j'}] \mu(\theta) \ud \theta
+\int_{y-x_2}^{\epsilon_0} r_{j'} y \mu(\theta) \ud \theta\\
&=& \mu_3 \frac{r_{j'}b_j-r_jb_{j'}}{b_j+r_j+b_{j'}+r_{j'}} > 0
\end{eqnarray*}
where $\mu_3 = \int_0^{y-x_2}(x_2-x_1)\min\{p_j^*(x_1+\theta),p_{j'}^*(x_2+\theta)\}\ud \theta+ \allowbreak
\int_{y-x_2}^{\epsilon_0}(1+y-x_1-\theta)\min\{p_j^*(x_1+\theta),p_{j'}^*
(x_2+\theta)\}\ud \theta$ .

When $y > x_2+\epsilon_0:$
\begin{eqnarray*}
&&C(\mathbf p^*,y)-C(\mathbf p^1,y) \\
&=& \int_{0}^{\epsilon_0} [r_{j} (x_1+\theta)+b_{j}] \mu'(\theta) \ud \theta -\int_{0}^{\epsilon_0} [r_{j} (x_2+\theta)+b_{j}] \mu'(\theta) \ud \theta \\
&& -\int_{0}^{\epsilon_0} [r_{j'} (x_1+\theta)+b_{j'}] \mu(\theta) \ud \theta + \int_{0}^{\epsilon_0} [r_{j'} (x_2+\theta)+b_{j'}] \mu(\theta) \ud \theta \\
&=& \mu_4 \frac{r_{j'}b_j-r_jb_{j'}}{b_j+r_j+b_{j'}+r_{j'}} > 0
 \end{eqnarray*}
where $ \mu_4 = \int_0^{\epsilon_0} (x_2-x_1)\min\{p_j^*(x_1+\theta),p_{j'}^*(x_2+\theta)\}\ud \theta>0$.

Thus we prove our earlier claim that $\mathbf p^1$ is not worse than $\mathbf p^*$ for any $y$. Since $\mathbf p^*$ is optimal, therefore $\mathbf p^1$ must be optimal as well.

However, we have shown that $\forall y \in (x_1+\epsilon_0, x_2)$, $C(\mathbf p^1,y_1)<C(\mathbf p^*,y_1)$, i.e.,  $\frac{C(\mathbf p^1,y)}{\mathrm{OPT}(y)}<\frac{C(\mathbf p^*,y)}{\mathrm{OPT}(y)}$, which contradicts Lemma~\ref{lemma:MSRstspace}. This completes the proof of the lemma.
\end{proof}

\subsection{Proof of Lemma~\ref{lemma:AlphaRelation}}
\begin{proof}
Plug $y\rightarrow d_{j}^-$ and $y\rightarrow d_{j}^+$ into (\ref{PfirstDiff}), we get
\begin{align*}
&\sum_{\tau=1}^{n} b_\tau p_\tau^*(d_{j}^-)+ \sum_{\tau=1}^{n} \int_{d_{j}^-}^{B}r_\tau p_\tau^*(x)\ud x=\lambda \\
&\sum_{\tau=1}^{n} b_{\tau}p_{\tau}^*(d_{j}^+)+ \sum_{\tau=1}^{n} \int_{d_{j}^+}^{B}r_\tau p_\tau^*(x)\ud x=\lambda
\end{align*}

Since Lemma~\ref{lemma:MSRfinite} shows that all the $p_j^*(x)$ are finite, we know $\sum_{\tau=1}^{n} \int_{d_{j}^-}^{B}r_\tau p_\tau^*(x)\ud x=\sum_{\tau=1}^{n} \int_{d_{j}^+}^{B}r_{\tau} p_{\tau}^*(x)\ud x$. Therefore, from the above 2 equations, we get $\sum_{\tau=1}^{n} b_\tau p_\tau^*(d_{j}^-)=\sum_{\tau=1}^{n} b_\tau p_\tau^*(d_{j}^+)$. From (\ref{PFormSolution}), we know $\sum_{\tau=1}^{n} b_\tau p_\tau^*(d_{j}^-)=b_j p_j^*(d_{j}^-)$ and $\sum_{\tau=1}^{n} b_\tau p_\tau^*(d_{j}^+)=b_{j-1}p_{j-1}^*(d_{j}^+)$, so we obtain
$$
\alpha_j b_j e^{r_jd_{j}/b_j} = \alpha_{j-1} b_{j-1} e^{r_{j-1}d_{j}/b_{j-1}}, \quad \forall j = 2,\cdots, n
$$
\end{proof}

\section{Computing the optimal strategy}

\subsection{Proof of Lemma~\ref{lemma:MSRconratio}}
\begin{proof}
For any $j=1,\cdots,n$, for any $y \in (d_{j+1},d_j]$, it holds that
\begin{eqnarray*}
C(\mathbf p,y) &=& \sum_{i=1}^n \left(\int_{0}^{y}(r_ix+b_i)p_j(x)\ud x+\int_{y}^{B}yr_i p_j(x)\ud x\right)\\
&=& \sum_{i=j+1}^n \int_{d_{i+1}}^{d_{i}}(r_i x + b_i)\alpha_i e^{\frac{r_i}{b_i}x}\ud x \\
&& + \int_{d_{j+1}}^{y}(r_jx+b_j)\alpha_j e^{\frac{r_j}{b_j}x}\ud x+\int_{y}^{d_j}yr_j \alpha_j e^{\frac{r_j}{b_j}x}\ud x\\
&&+ \sum_{i=1}^{j-1} \int_{d_{i+1}}^{d_{i}}y r_i\alpha_i e^{\frac{r_i}{b_i}x}\ud x\\
&=& \sum_{i=j+1}^n \alpha_i b_i\left(d_{i}e^{d_ie^{\frac{r_i}{b_i}d_i}}-d_{i+1}e^{\frac{r_i}{b_i}d_{i+1}}\right)\\
&&+\alpha_j b_j \left(ye^{\frac{r_j}{b_j}d_j}-d_{j+1}e^{\frac{r_j}{b_j}d_{j+1}}\right)
\\&&+ \sum_{i=1}^{j-1} y \alpha_i b_i \left(e^{\frac{r_i}{b_i}d_{i}}-e^{\frac{r_{i}}{b_{i}}d_{i+1}}\right)\\
&=& y \alpha_1 b_1 e^{\frac{b_1}{r_1}B}
\end{eqnarray*}
where the last equality holds because of (\ref{alphaRelation1}).
\end{proof}

\subsection{Proof of Lemma~\ref{lemma:dnconcave}}

\begin{proof}
We use the second derivative to prove it. Take the first derivative with repect to $d_n$:
\begin{eqnarray}
&& P'_{n-1}(d_n)\nonumber
\\&=& e^{\frac{r_{n-1}}{b_{n-1}}d_n}(\frac{r_{n-1}}{r_n}-1-(\frac{r_{n-1}}{r_n}-\frac{b_{n-1}}{b_n})e^{-\frac{r_n}{b_n}d_n})\label{pnfd}
\end{eqnarray}
Take the second derivative with respect to $d_n$:
\begin{eqnarray*}
&& P''_{n-1}(d_n)/\alpha_{n-1}
\\&=& e^{\frac{r_{n-1}}{b_{n-1}}d_n}\bigg[\frac{r_{n-1}}{b_{n-1}}(\frac{r_{n-1}}{r_n}-1)\\
&&-(\frac{r_{n-1}}{b_{n-1}}-\frac{r_n}{b_n})(\frac{r_{n-1}}{r_n}-\frac{b_{n-1}}{b_n})e^{-\frac{r_n}{b_n}d_n}\bigg]
\end{eqnarray*}
One can check that $\frac{r_{n-1}}{r_n}-1<0$, $(\frac{r_{n-1}}{b_{n-1}}-\frac{r_n}{b_n})<0$ and $(\frac{r_{n-1}}{r_n}-\frac{b_{n-1}}{b_n})<0$. Hence, $P''_{n-1}(d_n)<0$, and $P_{n-1}(d_n)$ is strictly concave.
\end{proof}

\subsection{Proof of Lemma~\ref{lemma:GConcave}}

\begin{proof}
Similarly, we take the first derivative with repect to $d_n$:
\begin{eqnarray}
&& P'_{j-1}(d_j)/\alpha_{j-1}\nonumber
\\&=& e^{\frac{r_{j-1}}{b_{j-1}}d_j}(\frac{r_{j-1}}{r_j}-1-(\frac{r_{j-1}}{r_j}-\frac{b_{j-1}}{b_j})e^{-\frac{r_j}{b_j}d_j})\nonumber\\
&&+\frac{D_j b_{j-1}}{b_j}(\frac{r_{j-1}}{b_{j-1}}-\frac{r_j}{b_j})e^{(\frac{r_{j-1}}{b_{j-1}}-\frac{r_j}{b_j})d_j}\label{pjfd}
\end{eqnarray}
Notice that $\frac{r_{j-1}}{b_{j-1}}-\frac{r_j}{b_j}<0$. So if $D_jr_j/b_j\ge 1$, we have:
\begin{eqnarray*}
&& P'_{j-1}(d_j)/\alpha_{j-1}
\\&\leq& e^{\frac{r_{j-1}}{b_{j-1}}d_j}(\frac{r_{j-1}}{r_j}-1-(\frac{r_{j-1}}{r_j}-\frac{b_{j-1}}{b_j})e^{-\frac{r_j}{b_j}d_j})\\
&&+\frac{ b_{j-1}}{r_j}(\frac{r_{j-1}}{b_{j-1}}-\frac{r_j}{b_j})e^{(\frac{r_{j-1}}{b_{j-1}}-\frac{r_j}{b_j})d_j}\\
&=& e^{\frac{r_{j-1}}{b_{j-1}}d_j}(\frac{r_{j-1}}{r_j}-1)<0\label{pjfd}
\end{eqnarray*}

If $D_jr_j/b_j< 1$, we take the second derivative with respect to $d_n$:
\begin{eqnarray*}
&& P''_{j-1}(d_j)/\alpha_{j-1}
\\&=& \frac{r_{j-1}}{b_{j-1}}(\frac{r_{j-1}}{r_j}-1)e^{\frac{r_{j-1}}{b_{j-1}}d_j}
\\&&-(\frac{r_{j-1}}{b_{j-1}}-\frac{r_j}{b_j})(\frac{r_{j-1}}{r_j}-\frac{b_{j-1}}{b_j})e^{(\frac{r_{j-1}}{b_{j-1}}-\frac{r_j}{b_j})d_j}\\
&&+\frac{D_j b_{j-1}}{b_j}(\frac{r_{j-1}}{b_{j-1}}-\frac{r_j}{b_j})^2 e^{(\frac{r_{j-1}}{b_{j-1}}-\frac{r_j}{b_j})d_j}\\
&=&\frac{r_{j-1}}{b_{j-1}}(\frac{r_{j-1}}{r_j}-1)e^{\frac{r_{j-1}}{b_{j-1}}d_j}\\
&&+(\frac{D_j b_{j-1}}{b_j}-\frac{b_{j-1}}{r_j})(\frac{r_{j-1}}{b_{j-1}}-\frac{r_j}{b_j})^2 e^{(\frac{r_{j-1}}{b_{j-1}}-\frac{r_j}{b_j})d_j}
\end{eqnarray*}
Easy to know that $\frac{r_{j-1}}{r_j}-1<0$ and $\frac{D_j b_{j-1}}{b_j}-\frac{b_{j-1}}{r_j}<0$. So $P''_{j-1}(d_j)<0$ and $P_{j-1}(d_j)$ is concave.
\end{proof}

\subsection{Proof of Lemma~\ref{lemma:invoc}}
\begin{proof}
By induction, when $j=n$, under the assumption that all the deletions in the past are correct, easy to know that the temporary $td_n$ makes $P'_{next[n]}(d_n)=0$. And $d_n^*<d_{next[j]}^*$ implies that $P'_{next[n]}(d_n^*)\le 0$. Also we know that $P'_{next[n]}(d_n=0)>0$ and $P_{next[n]}(d_n)$ is a concave function. So the optimal $d_n^*$ must make $P'_{next[n]}(d_n)=0$, i.e. $d_n^*=td_n$.

When $j<n$, there may be deletions during an invocation of $ComputingBP(j)$. So we split into 2 cases:

\textbf{Case 1}: Shop $j$ is not deleted in this invocation.

It means that the temporary $td_j>td_{prev[j]}$ and no deletion happens during this invocation. Under the assumption that all the deletions in the past are correct, it is easy to verify that the temporary breakpoints $td_n,td_{n-1},\cdots,td_j$ have maximized $P_{next[j]}$ when $d_{next[j]}^*>td_j$.

But if $d_{next[j]}^*\le td_j$, we can prove the solution of the alive breakpoints $\mathbf{td^*}= (td_n^*,td_{next[n]}^*,\cdots,td_j^*,d_{next[j]}^*)$ satisfying that $\forall \tau, td_\tau^*=\min\{td_\tau, d_{next[j]}^*\}$ is better than any other solution $\mathbf{d}=(d_n,d_{next[n]},\cdots,d_j,d_{next[j]}^*)$ satisfying $\forall \tau,d_\tau\le d_{next[j]}^*$.
Notice that all the deletions in the past are supposed to be correct.

To prove it, we extend the feasible solution field from $\{\mathbf{d}=(d_n,d_{next[n]},\cdots,d_{next[j]}):d_{\tau_1}\ge d_{\tau_2}\text{ if }\tau_1<\tau_2\}$ to $\{\mathbf{d}=(d_n,d_{next[n]},\cdots,d_{next[j]}):d_\tau\le d_{next[j]}\}$, i.e. we relax the constraint for the order of the breakpoints and want to prove the optimality of $\mathbf{td^*}$ in the relaxed field which is a stronger proposition. We redefine $$P_{next[j]}\triangleq\sum_{\tau\ge next[j],\tau\text{ is alive}} (\int_{d_{\tau+1}}^{d_\tau} p_\tau(x)dx)$$ and for any 2 solutions of the breakpoints $\mathbf{d_1}$ and $\mathbf{d_2}$, we define $$S_{next[j]}\triangleq P_{next[j]}/\alpha_{next[j]}$$
Consider that all the breakpoints are at $\mathbf{td}$ initially, we move them one by one in the order from $td_j$ to $td_n$.

We construct a movement sequence $$\mathbf{td^{(\eta)}}= (td_n^{(\eta)},td_{next[n]}^{(\eta)},\cdots,td_j^{(\eta)},d_{next[j]}^*)$$ satisfying $$\forall\tau>\eta, td_\tau^{(\eta)}=td_\tau;\forall\tau\le\eta, td_\tau^{(\eta)}=\min\{td_\tau,d_{next[j]}^*\}$$ and obviously $\mathbf{td}^{(n)}=\mathbf{td^*},\mathbf{td}^{(next[j])}=\mathbf{td}$. Also we define $$\eta^*=\min_\eta \eta,\text{ s.t. } td_\eta<d_{next[j]}^*$$

Similarly we construct another sequence for the movement from $\mathbf{td}$ to $\mathbf{d}$ $$\mathbf{d^{(\eta)}}=(d_n^{(\eta)},d_{next[n]}^{(\eta)},\cdots,d_j^{(\eta)},d_{next[j]}^*)$$ satisfying $$\forall\tau>\eta, d_\tau^{(\eta)}=td_\tau;\forall\tau\le\eta, td_\tau^{(\eta)}=d_\tau$$ and easy to see $\mathbf{d}^{(n)}=\mathbf{d},\mathbf{d}^{(next[j])}=\mathbf{td}$.

By induction, first we have $\mathbf{d}^{(next[j])}=\mathbf{td}^{(next[j])}$. Then the proof is divided into 2 steps.

\textbf{Step 1}: when $\eta<\eta^*$, we assume that $$\alpha_{next[\eta]}(\mathbf{td}^{(next[\eta])})\le \alpha_{next[\eta]}(\mathbf{d}^{(next[\eta])})$$ and $$P_{next[j]}(\mathbf{td}^{(next[\eta])})\ge P_{next[j]}(\mathbf{d}^{(next[\eta])})$$

And we want to prove that $\alpha_{\eta}(\mathbf{td}^{(\eta)})\le \alpha_{\eta}(\mathbf{d}^{(\eta)})$ and $P_{next[j]}(\mathbf{td}^{(\eta)})\ge P_{next[j]}(\mathbf{d}^{(\eta)})$. According to the definition of the sequence and Lemma~\ref{lemma:AlphaRelation}, easy to know that $$\alpha_{next[\eta]}(\mathbf{td}^{(\eta)})=\alpha_{next[\eta]}(\mathbf{td}^{(next[\eta])})$$ $$\alpha_{next[\eta]}(\mathbf{d}^{(\eta)})=\alpha_{next[\eta]}(\mathbf{d}^{(next[\eta])})$$
So $\alpha_{next[\eta]}(\mathbf{td}^{(\eta)})\le \alpha_{next[\eta]}(\mathbf{d}^{(\eta)})$. Since $\eta<\eta^*$, $td_\eta^{(\eta)}=d_{next[j]}^*\ge d_\eta^{(\eta)}$. Then derived from Lemma~\ref{lemma:AlphaRelation}, we get
$$\frac{\alpha_{\eta}(\mathbf{td}^{(\eta)})}{\alpha_{next[\eta]}(\mathbf{td}^{(\eta)})} =\frac{b_{next[\eta]}}{b_\eta}\exp((\frac{r_{next[\eta]}}{b_{next[\eta]}}-\frac{r_\eta}{b_\eta})td_\eta^{(\eta)})$$
$$\frac{\alpha_{\eta}(\mathbf{d}^{(\eta)})}{\alpha_{next[\eta]}(\mathbf{d}^{(\eta)})} =\frac{b_{next[\eta]}}{b_\eta}\exp((\frac{r_{next[\eta]}}{b_{next[\eta]}}-\frac{r_\eta}{b_\eta})d_\eta^{(\eta)})$$
Notice that $\frac{r_{next[\eta]}}{b_{next[\eta]}}-\frac{r_\eta}{b_\eta}<0$, so we have $\frac{\alpha_{\eta}(\mathbf{td}^{(\eta)})}{\alpha_{next[\eta]}(\mathbf{td}^{(\eta)})}\le\frac{\alpha_{\eta}(\mathbf{td}^{(\eta)})}{\alpha_{next[\eta]}(\mathbf{td}^{(\eta)})}$ and also $\alpha_{\eta}(\mathbf{td}^{(\eta)})\le \alpha_{\eta}(\mathbf{d}^{(\eta)})$.

Now we begin to prove $P_{next[j]}(\mathbf{td}^{(\eta)})\ge P_{next[j]}(\mathbf{d}^{(\eta)})$ and we just need to prove $P_{next[j]}(\mathbf{td}^{(next[\eta])})-P_{next[j]}(\mathbf{td}^{(\eta)})\le P_{next[j]}(\mathbf{d}^{(next[\eta])})-P_{next[j]}(\mathbf{d}^{(\eta)})$. $\forall\tau>\eta$, we fix each breakpoint $td_\tau^{(\eta)}$ at position $td_\tau$ and $\forall\tau<\eta$, we fix each breakpoint $td_\tau^{(\eta)}$ at position $d_{next[j]}^*$. Then the multivariate function $S_{next[\eta]}(\mathbf{td}^{(\eta)})$ can be considered as a univariate function $S_{next[\eta]}(td_\eta^{(\eta)})$. Since we've known $\alpha_{next[\eta]}(\mathbf{td}^{(\eta)})=\alpha_{next[\eta]}(\mathbf{td}^{(next[\eta])})$, we get the following equation
\begin{eqnarray*}
&&P_{next[j]}(\mathbf{td}^{(next[\eta])})-P_{next[j]}(\mathbf{td}^{(\eta)})\\&=&P_{next[\eta]}(\mathbf{td}^{(next[\eta])})-P_{next[\eta]}(\mathbf{td}^{(\eta)})\\&=& \alpha_{next[\eta]}(\mathbf{td}^{(\eta)})(S_{next[\eta]}(td_\eta)-S_{next[\eta]}(td_\eta^{(\eta)}))
\end{eqnarray*}
It is easy to verify that the breakpoints whose indexes are less than $\eta$ are irrelevant to the difference value of function $S_{next[\eta]}$ when it is known that those breakpoints in the 2 tuples are identical. Therefore
\begin{eqnarray*}
&& P_{next[j]}(\mathbf{d}^{(next[\eta])})-P_{next[j]}(\mathbf{d}^{(\eta)})\\&=& P_{next[\eta]}(\mathbf{d}^{(next[\eta])})-P_{next[\eta]}(\mathbf{d}^{(\eta)})\\&=& \alpha_{next[\eta]}(\mathbf{d}^{(\eta)})(S_{next[\eta]}(td_\eta)-S_{next[\eta]}(d_\eta^{(\eta)}))
\end{eqnarray*}
Since we have known $\alpha_{next[\eta]}(\mathbf{td}^{(\eta)})=\alpha_{next[\eta]}(\mathbf{td}^{(next[\eta])})\le \alpha_{next[\eta]}(\mathbf{d}^{(next[\eta])})= \alpha_{next[\eta]}(\mathbf{d}^{(\eta)})$ and $td_\eta$ is the maximal point of the concave function $S_{next[\eta]}(\cdot)$, we only need to prove that $S_{next[\eta]}(td_\eta^{(\eta)})\ge S_{next[\eta]}(d_\eta^{(\eta)})$. According to Lemma~\ref{GConcave} and the algorithm, we know $S_{next[\eta]}(d_\eta)$ is concave and $td_\eta$ is the maximal point. Since $\eta<\eta^*$ and $td_\eta\ge td_\eta^{(\eta)}=d_{next[j]}^*\ge d_\eta^{(\eta)}$, we get $S_{next[\eta]}(td_\eta^{(\eta)})\ge S_{next[\eta]}(d_\eta^{(\eta)})$. Step 1 is done.

\textbf{Step 2}: when $\eta\ge\eta^*$, we assume that
$$P_{next[j]}(\mathbf{td}^{(next[\eta])})\ge P_{next[j]}(\mathbf{d}^{(next[\eta])})$$

And we want to prove that $P_{next[j]}(\mathbf{td}^{(\eta)})\ge P_{next[j]}(\mathbf{d}^{(\eta)})$. Similarly, we just need to prove that $P_{next[j]}(\mathbf{td}^{(next[\eta])})-P_{next[j]}(\mathbf{td}^{(\eta)})\le P_{next[j]}(\mathbf{d}^{(next[\eta])})-P_{next[j]}(\mathbf{d}^{(\eta)})$. But $\mathbf{td}^{(next[\eta])}=\mathbf{td}^{(\eta)}$ because $\eta\ge\eta^*$ and similarly we have the following equation
\begin{eqnarray*}
&& P_{next[j]}(\mathbf{d}^{(next[\eta])})-P_{next[j]}(\mathbf{d}^{(\eta)})\\&=& \alpha_{next[\eta]}(\mathbf{d}^{(\eta)})(S_{next[\eta]}(td_\eta)-S_{next[\eta]}(d_\eta^{(\eta)}))
\end{eqnarray*}
So we only need to prove $S_{next[\eta]}(td_\eta)\ge S_{next[\eta]}(d_\eta^{(\eta)})$ which is obvious since $td_\eta$ is the maximal point of this concave function. The induction inside is done and we have proved that $P_{next[j]}(\mathbf{td}^*)\ge P_{next[j]}(\mathbf{d})$. Notice that once $\exists\tau$ such that $d_\tau\ne td_\tau^*$, the equation does not hold. It means $\mathbf{d}^*=\mathbf{td}^*$. We get $d_j^*=d_{next[j]}^*$ if $d_{next[j]}^*\le td_j$, so we always have $d_{next[j]}^*> td_j$ if $d_{next[j]}^*> d_j^*$. Case 1 is proved.

\textbf{Case 2}: Shop $j$ is deleted in this invocation.
This case happens when $D_jr_j/b_j\ge 1$ or $td_{j}\le td_{prev[j]}$. First we need to prove that $d_{prev[j]}^*=d_j^*$, i.e. the deletion is correct. By reductio ad absurdum, we assume that $d_{prev[j]}^*<d_j^*$. According to the inductive assumption, $\forall \tau>j$ we must have $td_\tau=d_\tau^*$. Fixing all the other breakpoints except $d_j$ at $\mathbf{d}^*$, $d_{j}^*$ is to maximize the function $P_{next[j]}(d_j)$. From Lemma~\ref{GConcave}, we know that $\forall d_j,P'_{next[j]}(d_j)<0$ if $D_jr_j/b_j\ge 1$. And $td_j$ is the maximal point of the concave function $P_{next[j]}(d_j)$ if $td_{j}\le td_{prev[j]}$. It means $\forall d_j^*>d_{prev[j]}^*=td_{prev[j]}$, $P'_{next[j]}(d_j^*)<0$ which makes contradicticon to that $d_j^*$ is optimal. So we proved $d_{prev[j]}^*=d_j^*$.

After the deletion, our algorithm invokes the function $ComputingBP(prev[j])$ and the same property still holds because of the inductive assumption. Case 2 is proved.
\end{proof}

\section{MSR-S}

Now we prove lemma \ref{lemma:MSR-S}.
\begin{proof}
Given an action $\psi$, we assume that the buying time in $\psi$ is $x$ and there exists a switching operation strictly before $x$. Now we need to prove that $\psi$ is dominated.

We assume that the first switching operation in $\psi$ is from shop $i$ to shop $j$ at time $x_0$ ($0\le x_0<x$). If there are more than one switching operations at time $x_0$, they can be considered as a big switching operation. But the switching cost is the sum of all the switching cost at time $x_0$. We assume that the last switching operation at $x_0$ is from some shop to shop $k$. If the switching cost at $x_0$ is larger than $c_{ik}^*$, then obviously it is dominated. But if not, then there will be 2 cases.

\textbf{Case 1 :} when $r_i< r_k$,

Denote by $x_1$ the switching time just after $x_0$. But if switching only happens at time $x_0$ in $\psi$, we let $x_1$ to be $x$. It can be seen that $x_0<x_1$. Since $r_i\le r_k$, in action $\psi$ it is better for the consumer to move the switching operation(s) from time $x_0$ to time $x_1$. It can be verified that the cost remains the same when $y<x_0$ and decreases when $y\ge x_0$. Thus, $\psi$ is dominated.

\textbf{Case 2 :} when $r_i\ge r_k$,

Consider another action $\psi^1$: the consumer enters shop $k$ at the very beginning and does not execute any switching operation at time $x_0$, but she acts the same as $\psi$ after time $x_0$. It can be verified that the cost does not increase when $y<x_0$ and decreases when $y\ge x_0$. Thus, it is dominated by $\psi^1$.
\end{proof}

\section{MSR-E}

\subsection{Proof of Lemma \ref{lemma:EFstspace}}
\begin{proof}
Similar to the proof of lemma~\ref{lemma:MSRstspace}, we first prove that $y\in[B,+\infty)$ is dominated by $y=+\infty$. Then we prove $\forall j\in[n]$ and $x\in(B,+\infty)\cup\{+\infty\}$, $(j,x)$ is dominated by $(j,B)$.
\end{proof}

\subsection{Proof of Lemma \ref{lemma:MSR-E constant}}
\begin{proof}
The proof is very similar to the proof of lemma~\ref{lemma:MSRconstant}. But in case (a), we need to modify the construction of $\mathbf{p}^1$ a little bit. Recall that we moved some probability from $p_{j,y}$ to $p_{1,B}$, but now we move it from $p_{j,y}$ to $p_{k,B}$ where $k$ is defined as $\arg\min_k a_k+r_k y$. Then, similarly, we prove this lemma.
\end{proof}

\subsection{Proof of Lemma \ref{lemma:MSR-E finite}}
\begin{proof}
Similar to the proof of lemma~\ref{lemma:MSRfinite}.
\end{proof}

\subsection{Proof of Lemma~\ref{lemma:moveP MSR-E}}
\begin{proof}
The proof is very similar to the proof of lemma~\ref{lemma:MSR-E constant}. But in case (b), we need to modify the construction of $\mathbf{p}^1$ a little bit. Recall that we moved some probability from $p_{j,y}$ to $p_{1,B}$, but now we move it from $p_{j,y}$ to $p_{k,B}$ where $k$ is defined as $\arg\min_k a_k+r_k y$. Then, similarly, we prove this lemma.
\end{proof}

\section{MSR-ES}
\subsection{Proof of Lemma~\ref{lemma:EAdominate}}
\begin{proof}
Similar with the proof of Lemma~\ref{lemma:MSRstspace}, we first prove the dominance for the nature's strategies. Suppose $c(\psi,y)$ to be the cost when the consumer choose a pure strategy $\psi$ and the nature choose a pure strategy $y$. Easy to know that $\forall y,c(\psi,y)\le c(\psi,+\infty)$, then $\forall y\in [B,+\infty)$, the ratio $R(\psi,y)$ satisfies that
$$R(\psi,y)=\frac{c(\psi,y)}{\min_j\{a_j+b_j\}}\le\frac{c(\psi,+\infty)}{\min_j\{a_j+b_j\}}=R(\psi,+\infty)$$

Then we prove the dominance for the consumer's strategies. For a pure strategy $\psi$, if the consumer does not buy at or before time $B$, then we construct a strategy $\psi^1$, buying at time $B$, to dominate $\psi$.
The construction is as follows:

Delete all the event $(j,x)$, satisfying $j\ge 0, x\ge B$, from the strategy $\psi$. Then add the event $(0,B)$ and we obtain the strategy $\psi^1$.

Since the nature only chooses $y\in(0,B)\cup\{+\infty\}$, it can be verified that $\psi^1$ performs the same with $\psi$ if $y\in(0,B)$. And notice that the assumption at the beginning of Section 4: $\forall i,j, b_i\le a_j+b_j$, therefore, $\psi^1$ performs better if $y=+\infty$. The dominance for the consumer's strategies is also proved. And easy to verify that there is no difference between $y=+\infty$ and $y=B$ if the consumer buys in $[0,B]$.
\end{proof}
\subsection{Proof of Lemma~\ref{lemma:reduceSpace}}
\begin{proof}
Similarly with the proof of Lemma~\ref{lemma:MSR-S}, we prove the dominance for the following 3 conditions:
\begin{itemize}
\item $\exists 0<\tau<|\psi|-1$ such that $x_{\tau-1}=x_\tau$:
\end{itemize}
It can be verified that $\psi$ is dominated by $\psi^{(2)}$, constructed in this way:

Delete event $(j_{\tau-1},j_{\tau},x_\tau)$ from $\psi$ and reset the value of $j_{\tau-1}$ to be $j_\tau$. Thus we get $\psi^{(2)}$.
\begin{itemize}
\item $\exists (i,j,x)\in \psi$ such that $r_i\le r_j$ and $(j,0,x)\notin\psi$:
\end{itemize}
From Lemma~\ref{lemma:EAdominate}, we know $(i,j,x)$ is not the last event in $\psi$ since the the last one is buying event. We suppose the event just after $(i,j,x)$ to be $(j,k,x')$ ($k$ can be 0 if $x\neq x'$). It can be verified that $\psi$ is dominated by $\psi^{(2)}$, constructed in this way:

If $x=x'$, it is the same with the first condition.

If $x<x'$ and $k\neq 0$, delete event $(i,j,x)$ from $\psi$ and reset the event $(j,k,x')$ to be $(i,k,x')$.

If $x<x'$ and $k=0$, reset the event $(i,j,x)$ to be $(i,j,x')$ in $\psi$. Thus we get $\psi^{(2)}$.
\begin{itemize}
\item $\exists (i,j,x)\in \psi$ such that $a_i\ge a_j$ and $(j,0,x)\notin\psi$:
\end{itemize}
We suppose the event just before $(i,j,x)$ to be $(k,i,x')$ ($k$ can be 0). It can be verified that $\psi$ is dominated by $\psi^{(2)}$, constructed in this way:

If $x=x'$, it is the same with the first condition.

If $r_i\le r_j$, it is the same with the second condition.

If $x>x'$ and $r_i>r_j$, delete event $(i,j,x)$ from $\psi$ and reset the event $(k,i,x')$ to be $(k,j,x')$. Thus we get $\psi^{(2)}$.
\end{proof}
\subsection{Proof of Lemma~\ref{lemma:EAconstant}}
\begin{proof}
The proof is very similar to the proof of lemma~\ref{lemma:MSRconstant}. But in case (b), we need to modify the construction a little bit. Recall that we proved that buying operation appears in that particular interval with a positive probability, but now it can be either buying operation or switching operation and we bring the operation forward in the same way (There may also be some difference with those parameters which doesn't matter). Then, similarly, we prove this lemma.
\end{proof}

\subsection{Proof of Lemma~\ref{lemma:EAf2p}}

\begin{proof}
Since the equation~(\ref{eqn:f2q}b) in the following theorem is strictly stronger than this lemma, we leave the it in the proof of theorem~\ref{theorem:MSR-ES}.
\end{proof}

\subsection{Proof of Theorem~\ref{theorem:MSR-ES}}

\begin{proof}
We first prove the equation~(\ref{eqn:f2q}b). To prove it, we define the virtual cost in a virtual shop $(i,j)$ (corresponding to a switching or buying operation) as follows:
\begin{align*}
c_{(i,j)}(x,y)\triangleq
\begin{cases}
a_{(i,j)}+r_{(i,j)}y,&\text{ if }y<x;\\
a_{(i,j)}+r_{(i,j)}x+b_{(i,j)},\quad&\text{ if }y\ge x.
\end{cases}
\end{align*}

And for any $\mathbf{s}\in\mathcal{S}$ in the short form, we denote the $(i+1)$th item of $\mathbf{s}$ by $j_i^{(\mathbf{s})}$, i.e.,
$$\mathbf{s}=\{j_0^{(\mathbf{s})},j_1^{(\mathbf{s})},\cdots,0\}$$

Similarly, for any $\mathbf{x}\in\mathcal{X}$, we denote the $i$th item of $\mathbf{x}$ by $x_i^{(\mathbf{x})}$, i.e.,
$$\mathbf{x}=\{x_1^{(\mathbf{x})},x_2^{(\mathbf{x})},\cdots\}$$

Then we know by definition that:
\begin{align*}
c(\psi,y)&\triangleq
\begin{cases}
\sum_{\tau=0}^{k-1} [a_{j_\tau^{(\mathbf{s}(\psi))}}+r_{j_\tau^{(\mathbf{s}(\psi))}} (x_{\tau+1}^{(\mathbf{x}(\psi))}-x_{\tau}^{(\mathbf{x}(\psi))})]\\+r_{j_{k}^{(\mathbf{s}(\psi))}}(y-x_k^{(\mathbf{x}(\psi))}),\\
\quad\quad\quad\quad\text{ if }\exists 0<k<|\psi|, x_{k-1}\le y<x_k;\\
\sum_{\tau=0}^{|\psi|-2} (a_{j_\tau^{(\mathbf{s}(\psi))}}+r_{j_\tau^{(\mathbf{s}(\psi))}} (x_{\tau+1}^{(\mathbf{x}(\psi))}-x_{\tau}^{(\mathbf{x}(\psi))}))+b_{j_{|\psi|-2}^{(\mathbf{s}(\psi))}},\\
\quad\quad\quad\quad\text{ if }y\ge x_{|\psi|-1}.
\end{cases}\\
&=\sum_{\tau=1}^{|\psi|-1}c_{(j_{\tau-1}^{(\mathbf{s}(\psi))},j_{\tau}^{(\mathbf{s}(\psi))})}(x_{\tau}^{(\mathbf{x}(\psi))},y)
\end{align*}
Then we get
\begin{eqnarray*}
C(\mathbf{f}, y) &\triangleq& \sum_{\mathbf{s}\in\mathcal{S}}\idotsint\limits_{\mathbf{x}\in\mathcal{X}_\mathbf{s}}c_{\mathbf{s}}(\mathbf{x},y) f_{\mathbf{s}}(\mathbf{x})\ud\mathbf{x}\\
&=&\sum_{\mathbf{s}\in\mathcal{S}}\idotsint\limits_{\mathbf{x}\in\mathcal{X}_\mathbf{s}}\sum_{\tau=1}^{|\psi|-1}c_{(j_{\tau-1}^{(\mathbf{s})},j_{\tau}^{(\mathbf{s})})}(x_{\tau}^{(\mathbf{x})},y) f_{\mathbf{s}}(\mathbf{x})\ud\mathbf{x}\\
&=&\sum\limits_{(i,j)\text{ is a virtual shop}} \int_{0}^{B}c_{(i,j)}(x,y)p_{(i,j)}^{(\mathbf{f})}(x)\ud x\\
&=&\sum\limits_{(i,j)\text{ is a virtual shop}} C_{(i,j)}(\mathbf{p^{(f)}},y)\\
&=&{\sum_{(i,j)\in[n]^2:a_i<a_j,r_i>r_j}C_{(i,j)}(\mathbf{p^{(f)}},y)}\\&&+{\sum_{j\in[n]}C_{(j,0)}(\mathbf{p^{(f)}},y)}
\end{eqnarray*}

Thus, the equation~(\ref{eqn:f2q}b) has been proved. And equation~(\ref{eqn:f2q}a) has been proved in lemma~\ref{lemma:EAconstant}. Equations~(\ref{eqn:f2q}c) and (\ref{eqn:f2q}d) are sufficient and necessary constraints to ensure that, when we find a $\mathbf{p}$ for the virtual shops, there exists a legal strategy $\mathbf{f}$ for the consumer such that $\mathbf{p^(f)}=\mathbf{p}$.
\end{proof}

\subsection{Proof of Lemma~\ref{lemma:MSR-ES seg}}
\begin{proof}
Similar to the proofs of lemma~\ref{lemma:MSR-E finite} and \ref{lemma:moveP MSR-E}.
\end{proof}

\end{document}